\documentclass[lettersize,journal]{IEEEtran}

\usepackage{amsthm}

\usepackage{amsmath,amsfonts,amssymb}
\usepackage{bm}
\usepackage{mathtools}

\usepackage{graphicx}
\usepackage{booktabs}
\usepackage{multirow}
\usepackage[caption=false,font=normalsize,labelfont=sf,textfont=sf]{subfig}
\usepackage{stfloats}
\usepackage{array}

\usepackage{cite}
\usepackage{url}
\usepackage{textcomp}
\usepackage{color}
\usepackage{tikz}
\usepackage{enumitem}
\usetikzlibrary{arrows.meta,positioning,decorations.pathreplacing,calc}

\newtheorem{theorem}{Theorem}
\newtheorem{lemma}[theorem]{Lemma}
\newtheorem{proposition}[theorem]{Proposition}
\newtheorem{corollary}[theorem]{Corollary}
\newtheorem{remark}{Remark}
\newtheorem{definition}{Definition}
\newtheorem{assumption}{Assumption}

\newcommand{\E}{\mathbb{E}}

\newcommand{\SNR}{\mathrm{SNR}}
\newcommand{\CRB}{\mathrm{CRB}}

\newcommand{\lequiv}{l^{\mathrm{eq}}}

\begin{document}

\title{A Cancellation Mechanism in AFDM Radar Sensing:\\ Exact Fisher Information and Delay-Doppler Decoupling}

\author{Tingjun Lyu,
        Yunmei Shi
\thanks{Manuscript received XXXX; revised XXXX. This work was supported in part by XXXX.}%
\thanks{Tingjun Lyu and Yunmei Shi are with the Department of Communication and Information Engineering, Tongji University, Shanghai, China (e-mail: 2531934@tongji.edu.cn, ymshi@tongji.edu.cn).}%
}

\markboth{IEEE Transactions on Signal Processing, Vol.~XX, No.~XX, XXXX~202X}%
{Author \MakeLowercase{\textit{et al.}}: A Cancellation Mechanism in AFDM Radar Sensing}

\maketitle

\begin{abstract}
We consider radar sensing with affine frequency division multiplexing (AFDM), a chirp-based waveform recently proposed for high-mobility integrated sensing and communication. While numerical Cram\'{e}r-Rao bounds for AFDM radar are available in the literature, no closed-form Fisher information analysis has so far revealed how the waveform's chirp structure shapes delay-Doppler estimation accuracy.In this paper, we provide such an analysis. We identify a cancellation in the AFDM likelihood: the frequency drift introduced by the chirp modulation is exactly compensated by a discrete phase correction built into the chirp-periodic prefix, leaving only a small residual. Exploiting this cancellation, we derive an exact closed-form Fisher information matrix that depends on the AFDM chirp structure through a single scalar, and from it we obtain closed-form Cram\'{e}r-Rao bounds for joint delay and Doppler estimation.Three consequences follow. AFDM is provably less delay-Doppler-coupled than OFDM for any nonzero chirp rate. The delay Cram\'{e}r-Rao bound improves quadratically with the chirp rate, while the Doppler bound is unaffected by it. Finally, our framework reduces continuously to the classical OFDM result as the chirp vanishes, certifying it as a strict generalization of OFDM radar sensing theory.Overall, our work shows that the chirp-periodic prefix---until now studied only as a channel-equalization device---is the structural element that decouples delay and Doppler in AFDM sensing, and that AFDM's superior sensing performance can be characterized analytically rather than through numerical bounds alone. Numerical experiments at realistic vehicular and low-Earth-orbit parameters validate all closed-form expressions.
\end{abstract}

\begin{IEEEkeywords}
Affine frequency division multiplexing (AFDM), radar sensing, cancellation mechanism, Cram\'{e}r-Rao bound, delay-Doppler decoupling, chirp-periodic prefix.
\end{IEEEkeywords}

\section{Introduction}
\label{sec:intro}

\IEEEPARstart{I}{ntegrated} sensing and communication (ISAC) has emerged as a pivotal research direction for sixth-generation (6G) wireless networks~\cite{liu2022survey,liu2022limits}, but high-mobility deployments such as vehicular-to-everything (V2X) and low-altitude unmanned aerial vehicle (UAV) operations push conventional orthogonal frequency division multiplexing (OFDM) waveforms past their useful range: the significant Doppler shifts induced by target motion cause severe inter-carrier interference (ICI) that simultaneously degrades communication reliability and sensing accuracy~\cite{sturm2011ofdm}. Designing a waveform whose sensing limits remain favorable in this regime is therefore a central problem for 6G ISAC.

The theoretical foundations of multicarrier radar sensing have been laid out for OFDM and OTFS. Gaudio~\emph{et al.}~\cite{gaudio2019ofdm} derived the Fisher information matrix and CRB for OFDM-based joint delay-Doppler estimation and showed that OFDM degrades significantly at high Doppler. Keskin~\emph{et al.}~\cite{keskin2021ofdm} subsequently developed CRB-optimal waveform design for OFDM dual-functional radar-communications. The orthogonal time-frequency-space (OTFS) waveform, operating in the delay-Doppler domain, has been shown to achieve near-optimal radar estimation under high mobility~\cite{gaudio2019ofdm,raviteja2019radar}. Yet OFDM and OTFS face well-known limitations in the high-mobility regime: OFDM suffers from inter-carrier interference, while OTFS relies on a two-dimensional symplectic Fourier transform whose sensing-theoretic properties are tightly bound to its specific guard structure~\cite{rou2024spm}.

Affine frequency division multiplexing (AFDM)~\cite{bemani2021afdm,bemani2023twc} is a recently proposed multicarrier waveform that addresses the high-mobility challenge through a fundamentally different approach. AFDM modulates information onto a discrete affine Fourier transform (DAFT) basis whose subcarriers carry a tunable quadratic phase, and protects the signal with a chirp-periodic prefix (CPP) that plays the role of the cyclic prefix in OFDM. With chirp parameters matched to the channel, the CPP enables full-diversity reception over doubly dispersive channels~\cite{bemani2023twc}. From a sensing standpoint, AFDM occupies a unique position: unlike classical chirp radar (e.g., LFM and FMCW), AFDM is a multicarrier waveform that supports communication and sensing within a single frame; unlike OFDM, its delay-Doppler response is shaped by chirp modulation rather than by pure subcarrier orthogonality. This combination makes AFDM a strong candidate for high-mobility ISAC and motivates a careful analysis of its fundamental sensing limits.

\subsection{Related Work on AFDM Sensing Performance}

The sensing potential of AFDM has begun to attract attention, and existing work falls into two broad categories. \emph{Algorithmic studies} focus on estimation algorithms and pilot design. Ni~\emph{et al.}~\cite{ni2022afdm} first studied AFDM-based radar parameter estimation, designing both time-domain and DAFT-domain algorithms; however, the analysis is empirical and does not characterize the achievable estimation accuracy in closed form. Bemani and Kountouris~\cite{bemani2024isac} demonstrated that a single DAFT-domain pilot can achieve near-full-frame sensing performance and that AFDM's chirp structure enables simplified self-interference cancellation for monostatic ISAC; the focus, however, is on receiver architecture rather than on fundamental performance limits. \emph{Performance-bound studies} target the CRB. Ranasinghe~\emph{et al.}~\cite{ranasinghe2025joint} developed joint channel, data, and radar parameter estimation algorithms across OFDM, OTFS, and AFDM, providing numerical CRB comparisons but no closed-form expressions. Most recently, Zhang~\emph{et al.}~\cite{zhang2025afdm} derived CRBs for AFDM target detection and analyzed the AFDM ambiguity function, but their results do not expose the internal structure of the delay score function nor the mechanism by which the CPP shapes delay-Doppler estimability. Across these works, AFDM's chirp structure is treated either through numerical experiments or through the OFDM special case as a black box, leaving the fundamental sensing-theoretic role of the CPP unaddressed.

\subsection{Motivation and Contributions}

Despite this growing body of work, no closed-form Fisher information analysis of AFDM radar sensing has so far revealed how the chirp structure shapes delay-Doppler estimation, why AFDM's delay-Doppler coupling is structurally weaker than OFDM's, or how this advantage scales with the chirp rate. The present paper provides such an analysis. The key observation, exploited throughout, is that the AFDM likelihood contains a structural cancellation: the frequency drift introduced by the chirp modulation is exactly compensated by the phase jump of the CPP. This cancellation, made precise in Section~\ref{sec:derivatives}, is what allows the entire Fisher information matrix to be expressed in closed form through a single scalar. The main contributions are summarized below.

\begin{itemize}
\item We identify a \emph{cancellation mechanism} in the AFDM delay score: the four paths through which the delay enters the AFDM likelihood reduce, after exact cancellation between the chirp-induced frequency drift and the CPP phase jump, to a single residual term of order $1/C$. To the best of our knowledge, this is the first time such a structural cancellation has been identified in a chirp-based multicarrier waveform.

\item Exploiting the cancellation, we derive an exact closed-form $2\times 2$ Fisher information matrix for joint delay-Doppler estimation. The entire dependence on the AFDM chirp structure collapses into a single scalar $\eta$, reducing what would otherwise be a multi-parameter problem to a one-dimensional one. Setting $\eta=1$ recovers the OFDM Fisher information.

\item We prove that the delay-Doppler coupling coefficient of AFDM is strictly smaller than that of OFDM for any nonzero chirp rate. The result is structural, not parameter-tuning: AFDM's coupling advantage holds unconditionally and is fully captured by $\eta$.

\item We obtain closed-form Cram\'{e}r-Rao bounds in two complementary frameworks (channel-known and phase-profiled), with Schur complementation yielding an exact delay-Doppler decoupling in the latter. The delay bound improves quadratically with the chirp rate while the Doppler bound is unaffected by it, giving waveform designers a clean dimensional separation. Our bounds reduce continuously to the classical OFDM CRB of~\cite{gaudio2019ofdm} as the chirp vanishes.

\item We identify the operating regime in which AFDM's analytical sensing gain becomes practically useful. Numerical experiments at realistic V2X and low-Earth-orbit parameters show that the predicted improvement over OFDM materializes already at the smallest communication-compatible chirp rate---approximately a four-fold delay-CRB gain at the standard configuration $C = 2\alpha_{\max}{+}1$---and that all closed-form expressions match numerical Fisher information evaluation across the full chirp-rate range $c_1 \in (0, 1/2)$.
\end{itemize}

The remainder of this paper is organized as follows. Section~\ref{sec:system_model} introduces the AFDM signal model and the DAFT-domain sensing observation. Section~\ref{sec:derivatives} establishes the cancellation mechanism by decomposing the delay score into four paths and identifying the two that cancel. Section~\ref{sec:fim} exploits this cancellation to derive the closed-form Fisher information matrix captured by the single scalar $\eta$. Section~\ref{sec:coupling} proves that the resulting delay-Doppler coupling coefficient $3/(4\sqrt{\eta})$ is strictly smaller than the OFDM value. Section~\ref{sec:crb} derives the exact CRBs, obtains the $c_1^{-2}$ dimensional gain, and establishes the continuous reduction to the classical OFDM bound via Schur complementation. Section~\ref{sec:simulation} provides numerical validation. Section~\ref{sec:conclusion} concludes the paper. Detailed proofs are collected in Appendices~\ref{app:derivatives}--\ref{app:eta}.

\emph{Notation:}
Boldface lowercase and uppercase letters denote vectors and matrices, respectively. $(\cdot)^T$, $(\cdot)^H$, and $(\cdot)^{-1}$ denote transpose, conjugate transpose, and matrix inverse. $\E[\cdot]$ denotes expectation, and $\Re\{\cdot\}$ and $\Im\{\cdot\}$ denote real and imaginary parts. $j=\sqrt{-1}$ is the imaginary unit. $\delta[\cdot]$ denotes the Kronecker delta. $\lfloor\cdot\rfloor$ and $\lceil\cdot\rceil$ denote floor and ceiling functions. $\mathcal{CN}(0,\sigma^2)$ denotes a circularly symmetric complex Gaussian distribution with zero mean and variance $\sigma^2$. $(\cdot)_N$ denotes the modulo-$N$ operation.

\section{System Model}
\label{sec:system_model}

We consider a AFDM based ISAC system. Let $\mathbf{x}=[x[0],x[1],\ldots,x[N-1]]^T$ denote the $N$-point discrete affine Fourier transform (DAFT)-domain transmit symbol vector. The time-domain signal can be readily  obtained via the inverse DAFT (IDAFT):
\begin{equation}
s[n] = \frac{1}{\sqrt{N}} \sum_{m=0}^{N-1} x[m] \cdot \Psi_n(m), \quad n = 0,1,\ldots,N-1
\label{eq:idaft}
\end{equation}
where the DAFT basis function is
\begin{equation}
\Psi_n(m) = e^{j2\pi\left(c_2 m^2 + \frac{mn}{N} + c_1 n^2 + \Phi_{\mathrm{seg}}(n)\right)}.
\label{eq:daft_basis}
\end{equation}
Here $c_1$ and $c_2$ are chirp parameters, and $\Phi_{\mathrm{seg}}(n)$ is an auxiliary phase term that some AFDM variants use to refine the chirp envelope across symbol boundaries. The standard AFDM configuration analyzed in this paper, following~\cite{bemani2023twc}, uses $c_2 = -c_1$ and $\Phi_{\mathrm{seg}}(n) \equiv 0$, so that the DAFT basis simplifies to
\begin{equation}
\Psi_n(m) = e^{j2\pi\left(-c_1 m^2 + \frac{mn}{N} + c_1 n^2\right)}.
\label{eq:daft_basis_standard}
\end{equation}
When $c_1 = 0$, \eqref{eq:daft_basis_standard} reduces to the standard DFT kernel and AFDM degenerates to OFDM.

\subsection{Sensing Channel Model}

We model the radar sensing channel as
\begin{equation}
h(t, \tau) = h \cdot e^{j2\pi f_D t} \cdot \delta(\tau - \tau_0)
\label{eq:channel}
\end{equation}
where $h = \alpha e^{j\phi}$ is the complex channel gain, $\tau_0$ is the round-trip delay, and $f_D$ is the Doppler frequency shift. This way, the radar sensing parameter set can be represented as 
\begin{equation}
	\bm{\theta} = (\alpha, \phi, \tau_0, f_D).
	\label{eq:parameter_vector}
\end{equation}
Note that the continuous parameters are mapped to the sampling domain via
\begin{align}
\tau_0 &= (l + \iota)\Delta t, \quad l \in \mathbb{Z}_{\geq 0}, \; \iota \in (-\tfrac{1}{2}, \tfrac{1}{2}] \label{eq:delay_norm} \\
T f_D &= k + \kappa, \quad k \in \mathbb{Z}, \; \kappa \in (-\tfrac{1}{2}, \tfrac{1}{2}] \label{eq:doppler_norm}
\end{align}
where $l$ and $k$ are the integer parts, $\iota$ and $\kappa$ are the fractional parts, $\Delta t$ denotes the sampling duration, and $T = N\Delta t$ is the frame duration. The integer delay satisfies $l \leq \alpha_{\max}$, where $\alpha_{\max}$ is the maximum delay tap supported by the CPP guard interval~\cite{bemani2023twc}.

\subsection{DAFT-Domain Input-Output Relation}

After channel propagation, sampling, cyclic prefix removal, and forward DAFT processing, the input-output relation in the DAFT domain can be expressed as \cite{bemani2023twc}
\begin{equation}
y[m] = \frac{h}{N} \sum_{m'=0}^{N-1} x[m'] \, e^{j2\pi \Xi(m, m')} \cdot \mathcal{F}(m, m') + w[m]
\label{eq:io_relation}
\end{equation}
where $w[m]\sim\mathcal{CN}(0,N_0)$ is additive white Gaussian noise.

\subsubsection{Phase Exponent $\Xi(m,m')$}
The phase exponent is given by
\begin{equation}
\Xi(m, m') = c_1(l + \iota)^2 - c_2(m^2 - m'^2) - \frac{(l + \iota) m'}{N}.
\label{eq:Xi_def}
\end{equation}
A crucial property is that $\Xi$ depends \emph{only} on the delay parameters $(l,\iota)$ and does not involve the Doppler parameter $f_D$. This asymmetry plays a central role in the subsequent FIM analysis.

\subsubsection{Equivalent delay, response kernel, and CPP structure}
The response function is
\begin{equation}
\mathcal{F}(m, m') = \sum_{n=0}^{N-1} \Gamma_n(m, m'; \tau_0, f_D)
\label{eq:F_def}
\end{equation}
where the segmented chirp response term is
\begin{equation}
\Gamma_n = e^{j\frac{2\pi(m'-(m+\lequiv))n}{N}} \cdot e^{j\Phi_{\mathrm{jump}}(n,m)}.
\label{eq:Gamma_n}
\end{equation}
The first factor is a Dirichlet-type kernel centered at $m + \lequiv$, where the \emph{equivalent delay}
\begin{equation}
\lequiv \triangleq (k + \kappa) + 2Nc_1(l + \iota)
\label{eq:leq_def}
\end{equation}
couples the delay and Doppler parameters---it is the central quantity shaping the DAFT-domain peak location. The second factor of $\Gamma_n$ contains the phase jump
\begin{equation}
\Phi_{\mathrm{jump}}(n, m) = 2\pi\iota \cdot Q(n,m),
\label{eq:phi_jump_def}
\end{equation}
in which $Q(n,m)$ is the staircase function
\begin{equation}
Q(n,m) = \sum_{q=0}^{C} q \cdot \mathcal{I}_{L_{m,q}}\!\left((n - (l+\iota))_N\right),
\label{eq:Q_def}
\end{equation}
where $C = 2c_1 N$ is the number of chirp wrapping cycles, $L_{m,q}$ denotes the $q$-th time-domain segment, $\mathcal{I}_{L_{m,q}}(\cdot)$ is its indicator function, and $(\cdot)_N$ is the modulo-$N$ operation. The staircase $Q(n,m)$ partitions the time axis $[0,N)$ into $C+1$ segments of approximate width $W = 1/(2c_1)$, taking the value $q$ on the $q$-th segment. The CPP ensures perfect channel equalization in the DAFT domain when $C$ is a positive integer~\cite{bemani2023twc}; as shown in Section~\ref{sec:derivatives}, $\Phi_{\mathrm{jump}}$ also plays the key role of cancelling the chirp-induced frequency drift in the delay score, which is the central structural mechanism of this paper.

\section{A Cancellation Mechanism in the Delay Score}
\label{sec:derivatives}

This section establishes the central structural observation of the paper. We show that the delay score---the partial derivative of the AFDM log-likelihood with respect to the delay parameter---naturally decomposes into four physically distinct paths, and that two of these paths cancel up to an $O(1/C)$ residual. This cancellation is the algebraic manifestation of the CPP compensating the chirp-induced frequency drift, and it is the mechanism that makes all the closed-form results in Sections~\ref{sec:fim}--\ref{sec:crb} possible. Fig.~\ref{fig:mechanism_overview} summarizes the mechanism and its downstream consequences.

\begin{figure*}[t]
\centering
\begin{tikzpicture}[
  font=\small,
  >={Stealth[length=2.2mm]},
  pathbox/.style={draw, rounded corners=2pt, align=center, inner sep=3pt, minimum width=28mm, minimum height=8mm},
  mechbox/.style={draw, thick, rounded corners=3pt, align=center, inner sep=4pt, minimum width=36mm, minimum height=14mm, fill=gray!10},
  conbox/.style={draw, rounded corners=3pt, align=center, inner sep=3pt, font=\small, minimum width=42mm, minimum height=10mm, fill=green!5},
  driftarrow/.style={->, thick, red!75!black, line width=0.7pt},
  comparrow/.style={->, thick, blue!60!black, line width=0.7pt},
  bypassarrow/.style={->, thick, gray!55, dashed},
  conarrow/.style={->, thick, black!55}
]
\node[pathbox]               (p1)  at (0, 2.7) {Path~1: $\Xi$ phase};
\node[pathbox, fill=red!10]  (p2)  at (0, 1.2) {Path~2: $-2c_1 n$};
\node[pathbox, fill=blue!10] (p3a) at (0,-0.3) {Path~3a: $+Q(n,m)$};
\node[pathbox]               (p3b) at (0,-1.8) {Path~3b: boundary};

\node[mechbox] (cancel) at (6.3, 0.45)
   {\textbf{Cancellation (Theorem~\ref{thm:cancellation})}\\[2pt]
    $-2c_1 n + Q(n,m) = -2c_1(l{+}\iota) + \epsilon_Q$};

\draw[driftarrow] (p2.east)  -- node[above, font=\scriptsize, red!75!black]  {$-$\,chirp drift} (cancel.west |- p2.east);
\draw[comparrow]  (p3a.east) -- node[below, font=\scriptsize, blue!60!black] {$+$\,CPP compensation} (cancel.west |- p3a.east);

\draw[bypassarrow] (p1.east) -- ++(0.5,0) |- (cancel.north west);
\draw[bypassarrow] (p3b.east) -- ++(0.5,0) |- (cancel.south west);

\node[conbox] (c1) at (12.5, 2.45) {Closed-form FIM\\ parameterized by a single $\eta$};
\node[conbox] (c2) at (12.5, 0.45) {$\rho = 3/(4\sqrt{\eta}) < 3/4$\\ weaker than OFDM};
\node[conbox] (c3) at (12.5,-1.55) {Delay CRB $\propto c_1^{-2}$;\\ Doppler CRB $c_1$-free};

\draw[conarrow] (cancel.east) -- (c1.west);
\draw[conarrow] (cancel.east) -- (c2.west);
\draw[conarrow] (cancel.east) -- (c3.west);

\node[font=\footnotesize\bfseries, gray!55!black] at (0,   3.7) {Four-Path Decomposition};
\node[font=\footnotesize\bfseries, gray!55!black] at (6.3, 3.7) {Mechanism};
\node[font=\footnotesize\bfseries, gray!55!black] at (12.5,3.7) {Closed-Form Consequences};

\node[font=\scriptsize, gray!70!black] at (0,   -2.8) {Section~\ref{sec:derivatives}};
\node[font=\scriptsize, gray!70!black] at (6.3, -2.8) {Theorem~\ref{thm:cancellation}};
\node[font=\scriptsize, gray!70!black] at (12.5,-2.8) {Sections~\ref{sec:fim}--\ref{sec:crb}};

\begin{scope}[shift={(-0.4,-4.0)}]
  \node[draw, rounded corners=2pt, inner sep=3pt, font=\scriptsize, align=left, minimum width=62mm] (leg) at (3.2,0) {
    \textcolor{red!75!black}{\textbf{---}} chirp-induced drift \quad
    \textcolor{blue!60!black}{\textbf{---}} CPP compensation \quad
    \textcolor{gray!55}{\textbf{- -}} bypass (no cancellation role)
  };
\end{scope}
\end{tikzpicture}
\caption{The cancellation mechanism at a glance. The delay score decomposes into four paths; Path~2 (chirp-induced frequency drift, red) and Path~3a (CPP phase-jump compensation, blue) have equal magnitude and opposite sign, and cancel up to an $O(1/C)$ residual. Paths~1 and~3b bypass the cancellation and contribute separately. Every closed-form result in Sections~\ref{sec:fim}--\ref{sec:crb} is a consequence of this cancellation.}
\label{fig:mechanism_overview}
\end{figure*}

The Cram\'{e}r-Rao bound analysis requires the partial derivatives $\partial y[m]/\partial\theta_i$ for each parameter $\theta_i \in \{\alpha, \phi, \tau_0, f_D\}$. While the derivatives with respect to $\alpha$, $\phi$, and $f_D$ admit a single-path structure, the delay derivative $\partial y[m]/\partial\tau_0$ exhibits a nontrivial four-path dependency whose analysis reveals the cancellation.

\subsection{Derivative Strategy}

From the observation model \eqref{eq:io_relation}, each parameter influences $y[m]$ through different intermediate quantities:
\begin{itemize}
\item $\alpha, \phi$: only through $h = \alpha e^{j\phi}$ (single path);
\item $\tau_0$: through $\Xi(m,m')$, $\lequiv$, and $\Phi_{\mathrm{jump}}$ (multiple paths, requiring the chain and product rules);
\item $f_D$: only through $\lequiv$ via the term $Tf_D$ (single path).
\end{itemize}
For the multi-path dependency of $\tau_0$, the multivariate chain rule is applied systematically. The key principle is that the summation indices $m'$ and $n$ must be retained throughout, and the product $e^{j2\pi\Xi}\cdot\mathcal{F}$ requires the product rule since both factors may depend on $\tau_0$. A critical observation is the \emph{product structure} of $\Phi_{\mathrm{jump}} = 2\pi\iota \cdot Q$: both the scaling factor $\iota$ and the staircase $Q$ depend on $\tau_0$, so differentiating $\Phi_{\mathrm{jump}}$ requires the product rule and generates two distinct contributions (Paths~3a and~3b below).

\subsection{Derivatives with Respect to $\alpha$ and $\phi$}

Since $\alpha$ and $\phi$ enter only through $h = \alpha e^{j\phi}$, the derivatives are straightforward:
\begin{align}
\frac{\partial y[m]}{\partial\alpha} &= \frac{e^{j\phi}}{N}\sum_{m'=0}^{N-1} x[m'] \, e^{j2\pi\Xi(m,m')}\mathcal{F}(m,m'), \label{eq:dy_dalpha} \\
\frac{\partial y[m]}{\partial\phi} &= \frac{j\alpha e^{j\phi}}{N}\sum_{m'=0}^{N-1} x[m'] \, e^{j2\pi\Xi(m,m')}\mathcal{F}(m,m'). \label{eq:dy_dphi}
\end{align}
The two derivatives satisfy $\partial y/\partial\phi = j\alpha\cdot\partial y/\partial\alpha$, which implies a specific diagonal structure in the $(\alpha, \phi)$ sub-block of the FIM. The analysis hereafter focuses on the $(\tau_0, f_D)$ sub-block.

\subsection{Delay Derivative: Four-Path Decomposition}
\label{subsec:deriv_tau}

The delay $\tau_0$ influences $y[m]$ through the normalized delay $l + \iota = \tau_0/\Delta t$ via four distinct paths:
\begin{equation}
\tau_0 \to (l{+}\iota) \to
\begin{cases}
\text{Path~1}: \;\;\, \Xi(m,m') \to e^{j2\pi\Xi} \to y[m] \\
\text{Path~2}: \;\;\, \lequiv \to \Gamma_n \to \mathcal{F} \to y[m] \\
\text{Path~3a}: \, 2\pi\iota \cdot Q \to \Gamma_n \to \mathcal{F} \to y[m] \\
\text{Path~3b}: \, n_q \to Q \to \Gamma_n \to \mathcal{F} \to y[m]
\end{cases}
\label{eq:four_paths}
\end{equation}
where Path~1 acts through the global $\Xi$-phase factor in~\eqref{eq:io_relation}, Path~2 through the Dirichlet kernel shift inside $\Gamma_n$, Path~3a through the $\iota$-scaling of $\Phi_{\mathrm{jump}}$, and Path~3b through the boundary migration of the staircase. Paths~3a and~3b both arise from applying the product rule to $\Phi_{\mathrm{jump}} = 2\pi\iota \cdot Q$, since both $\iota$ and the segment boundaries of $Q$ depend on $\tau_0$.

\subsubsection{Path~1: Derivative of $\Xi(m,m')$}

Differentiating $\Xi$ in \eqref{eq:Xi_def} with respect to $\tau_0$ via $l+\iota = \tau_0/\Delta t$ yields
\begin{equation}
\frac{\partial\Xi}{\partial\tau_0} = \frac{2c_1(l+\iota)}{\Delta t} - \frac{m'}{N\Delta t}.
\label{eq:dXi_dtau}
\end{equation}
The first term originates from the quadratic phase $c_1(l{+}\iota)^2$ and is independent of $m'$; the second term from the linear phase $-(l{+}\iota)m'/N$ varies linearly with $m'$. The $m'$-linear component becomes the dominant contribution to $I_{\tau\tau}$ after cancellation.

\subsubsection{Paths~2 and~3: Derivative of $\Gamma_n$}

Since $\Gamma_n = e^{j\varphi_n}$ with total phase
\begin{equation}
\varphi_n = \underbrace{\frac{2\pi(m'{-}m{-}\lequiv)}{N}\cdot n}_{\text{Dirichlet kernel phase}} + \underbrace{2\pi\iota\cdot Q(n,m)}_{\Phi_{\mathrm{jump}}},
\label{eq:varphi_n}
\end{equation}
the derivative is $\partial\Gamma_n/\partial\tau_0 = j(\partial\varphi_n/\partial\tau_0)\cdot\Gamma_n$. Computing $\partial\varphi_n/\partial\tau_0$ yields three contributions:

\paragraph{Path~2 (Dirichlet kernel shift)} Using $\partial\lequiv/\partial\tau_0 = 2Nc_1/\Delta t$ from \eqref{eq:leq_def}:
\begin{equation}
\frac{\partial\varphi_n}{\partial\tau_0}\bigg|_{\text{Path~2}} = -\frac{4\pi c_1 n}{\Delta t}.
\label{eq:path2}
\end{equation}

\paragraph{Path~3a ($\iota$-scaling)} Applying the product rule to $\Phi_{\mathrm{jump}} = 2\pi\iota\cdot Q$ and differentiating $\iota$ (with $\partial\iota/\partial\tau_0 = 1/\Delta t$ in a neighborhood where $\tau_0$ does not cross an integer boundary):
\begin{equation}
\frac{\partial\varphi_n}{\partial\tau_0}\bigg|_{\text{Path~3a}} = \frac{2\pi}{\Delta t}Q(n,m).
\label{eq:path3a}
\end{equation}

\paragraph{Path~3b (boundary migration)} The second term of the product rule differentiates $Q$ with respect to its boundary locations:
\begin{equation}
\frac{\partial\varphi_n}{\partial\tau_0}\bigg|_{\text{Path~3b}} = \frac{2\pi\iota}{\Delta t}\sum_{q=0}^{C} q\cdot\frac{\partial\mathcal{I}_{L_{m,q}}}{\partial(l{+}\iota)}.
\label{eq:path3b}
\end{equation}
This term acts only at the $C$ segment boundaries and has magnitude $O(\iota)$, much smaller than Paths~2 and~3a.

Combining the three contributions:
\begin{equation}
\frac{\partial\Gamma_n}{\partial\tau_0} = \Gamma_n\cdot\frac{j2\pi}{\Delta t}\left[\underbrace{-2c_1 n}_{\text{Path~2}} + \underbrace{Q(n,m)}_{\text{Path~3a}} + \underbrace{\iota\frac{\partial Q}{\partial(l{+}\iota)}}_{\text{Path~3b}}\right].
\label{eq:dGamma_dtau}
\end{equation}

\subsubsection{Complete Delay Derivative}

Applying the product rule to $e^{j2\pi\Xi}\cdot\mathcal{F}$ in \eqref{eq:io_relation} and combining Path~1 with Paths~2--3:
\begin{equation}
\begin{split}
\frac{\partial y[m]}{\partial\tau_0}
&= \frac{h}{N}\sum_{m'} x[m'] \, e^{j2\pi\Xi} \\
&\qquad\times\bigl[A(m')\mathcal{D}_{\Xi}
      + B_{\mathrm{s}}\mathcal{D}_{\mathrm{s}}
      + B_{\iota}\mathcal{D}_{\iota}
      + B_{\mathrm{b}}\mathcal{D}_{\mathrm{b}}\bigr]
\end{split}
\label{eq:dy_dtau_complete}
\end{equation}
where the coefficients and kernel functions are:
\begin{align}
A(m') &= j2\pi\!\left(\frac{2c_1(l{+}\iota)}{\Delta t}
         - \frac{m'}{N\Delta t}\right)\!, \quad
\mathcal{D}_{\Xi} = \mathcal{F}(m,m'),
  \label{eq:coeff_A}
\end{align}
\begin{alignat}{2}
B_{\mathrm{s}} &= -\frac{j4\pi c_1}{\Delta t}, &\quad
\mathcal{D}_{\mathrm{s}} &= \sum_{n=0}^{N-1} n\,\Gamma_n,
  \label{eq:coeff_Bs} \\
B_{\iota} &= \frac{j2\pi}{\Delta t}, &\quad
\mathcal{D}_{\iota} &= \sum_{n=0}^{N-1} Q(n,m)\,\Gamma_n,
  \label{eq:coeff_Bi} \\
B_{\mathrm{b}} &= \frac{j2\pi\iota}{\Delta t}, &\quad
\mathcal{D}_{\mathrm{b}} &= \sum_{q=0}^{C} q\,\Gamma_{n_q}.
  \label{eq:coeff_Bb}
\end{alignat}
The subscripts stand for ``smooth chirp'' (s), ``$\iota$-scaling'' ($\iota$), and ``boundary'' (b). Expression~\eqref{eq:dy_dtau_complete} is exact. Table~\ref{tab:four_components} summarizes the four components, where the last column reports the number of time-domain samples $n$ that contribute to each kernel sum: Paths~1--3a sum over all $N$ samples, while Path~3b is supported only at the $C$ segment boundaries and is therefore $O(C/N)$ times smaller in magnitude.

\begin{table}[t]
\centering
\caption{Four components of the delay partial derivative.}
\label{tab:four_components}
\setlength{\tabcolsep}{4pt}
\begin{tabular}{@{}lccc@{}}
\toprule
\textbf{Path} & \textbf{Coefficient} & \textbf{Kernel} & \textbf{\# samples} \\
\midrule
1 ($\Xi$) & $A(m')$ & $\mathcal{D}_{\Xi}=\mathcal{F}$ & $N$ \\
2 (smooth) & $-j4\pi c_1/\Delta t$ & $\mathcal{D}_{\mathrm{s}} = \sum_n n\Gamma_n$ & $N$ \\
3a ($\iota$) & $j2\pi/\Delta t$ & $\mathcal{D}_{\iota} = \sum_n Q\Gamma_n$ & $N$ \\
3b (bndry) & $j2\pi\iota/\Delta t$ & $\mathcal{D}_{\mathrm{b}} = \sum_q q\Gamma_{n_q}$ & $C$ \\
\bottomrule
\end{tabular}
\end{table}

\subsection{Cancellation Mechanism}
\label{subsec:cancellation}

The staircase function $Q(n,m)$ admits a linear decomposition that reveals a fundamental cancellation between Paths~2 and~3a.

\begin{proposition}[Linear decomposition of the staircase function]
\label{prop:Q_linear}
For $C\gg 1$, the staircase function decomposes as
\begin{equation}
Q(n,m) = \underbrace{2c_1\tilde{n}}_{\text{linear principal part}} + \underbrace{\epsilon_Q(n,m)}_{\text{sawtooth residual}},
\label{eq:Q_decomposition}
\end{equation}
where $\tilde{n} = (n{-}(l{+}\iota))_N$ is the circularly shifted time index and $\epsilon_Q$ is the staircase quantization error with $|\epsilon_Q| \leq 1$ and root-mean-square value approximately $1/\sqrt{12}$.
\end{proposition}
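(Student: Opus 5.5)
The plan is to make the segment structure of $Q(n,m)$ explicit and compare it with the linear ramp $2c_1\tilde n$.

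\textbf{Step 1: locate the segments.} Recall from the CPP construction of~\cite{bemani2023twc} that the phase jump arises each time the chirp $e^{j2\pi c_1 n^2}$, evaluated on the circularly shifted index $\tilde n=(n-(l+\iota))_N$, completes one frequency wrap. A wrap occurs whenever the instantaneous normalized frequency $2c_1\tilde n$ crosses an integer. Hence the $q$-th segment is
\begin{equation*}
L_{m,q}=\{\tilde n:\ q\le 2c_1\tilde n+\beta_m<q+1\},
\end{equation*}
possibly with a fixed offset $\beta_m\in[0,1)$ contributed by the subcarrier index $m$ through the Dirichlet center $m+\lequiv$. Each segment then has width $W=1/(2c_1)=N/C$, matching the description after~\eqref{eq:Q_def}. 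I would first establish this characterization by checking that the boundaries $n_q$ used in Path~3b satisfy $2c_1(n_q-(l+\iota))\in\mathbb{Z}-\beta_m$.

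\textbf{Step 2: write $Q$ as a floor function.} With the segments as above, \eqref{eq:Q_def} becomes
\begin{equation*}
Q(n,m)=\lfloor 2c_1\tilde n+\beta_m\rfloor .
\end{equation*}
Setting $\epsilon_Q(n,m)=Q(n,m)-2c_1\tilde n$ gives exactly
\begin{equation*}
\epsilon_Q=\beta_m-\{2c_1\tilde n+\beta_m\}\in(\beta_m-1,\beta_m],
\end{equation*}
so $|\epsilon_Q|\le 1$. This proves~\eqref{eq:Q_decomposition} with the linear principal part identified.

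\textbf{Step 3: RMS value.} As $n$ ranges over $0,\dots,N-1$, $\tilde n$ is a permutation of the integer grid shifted by the fractional part $\iota$. The points $2c_1\tilde n$ therefore visit each unit interval on a lattice of spacing $2c_1=C/N$. Within each of the $C$ full segments, the fractional parts $\{2c_1\tilde n+\beta_m\}$ form an equispaced grid of about $W$ points in $[0,1)$. Their empirical variance is $1/12+O(W^{-2})$ by a Riemann-sum (Euler–Maclaurin) estimate. The centered sawtooth thus has RMS $\approx 1/\sqrt{12}$. The mean $\beta_m-1/2$ is a constant offset, which the statement absorbs; alternatively one can choose the rounding convention so that $\beta_m=1/2$.

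\textbf{Main obstacle.} The hard part is the incomplete final segment. $Q$ takes values $q=0,\dots,C$, so there are $C+1$ segments, and the wrap-around of $(\cdot)_N$ together with the noninteger shift $\iota$ make the last segment partial. This boundary segment contributes $O(W/N)=O(1/C)$ to the empirical moments. It is exactly why the statement is phrased for $C\gg1$ and with ``approximately'': the proposition holds up to $O(1/C)$ corrections. I would bound that segment's contribution separately, using $|\epsilon_Q|\le1$ pointwise, so the corrections need no further control.
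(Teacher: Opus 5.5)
Your Steps~1--2 follow the paper's route: compare the staircase with the ramp $2c_1\tilde n$ segment by segment. They also sharpen it. Writing $Q=\lfloor 2c_1\tilde n+\beta_m\rfloor$ makes $|\epsilon_Q|\le 1$ exact, instead of reading it off approximate endpoint values. Your $O(1/C)$ handling of the partial segment covers a point the paper skips.

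The gap is in Step~3. What you compute there is the \emph{variance} of $\epsilon_Q$, not its root-mean-square. Over a full period, $\frac{1}{N}\sum_n \epsilon_Q^2 \approx \frac{1}{12} + (\beta_m - \frac12)^2$. The mean cannot be ``absorbed by the statement,'' because $\epsilon_Q$ is \emph{defined} as $Q - 2c_1\tilde n$ by \eqref{eq:Q_decomposition}. Nor is $\beta_m$ yours to choose: it is fixed by where the CPP segments $L_{m,q}$ begin. Forcing $\beta_m=\frac12$ has one of two effects. Either it shifts every segment boundary by $W/2$, which gives a different $Q$. Or it replaces the principal part by $2c_1\tilde n-\frac12$, which contradicts \eqref{eq:Q_decomposition} and changes the constant $-2c_1(l{+}\iota)$ in Theorem~\ref{thm:cancellation}.

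The paper's convention is $\beta_m\approx 0$. Its proof puts $\epsilon_Q\approx 0$ at the left end of each segment and $\epsilon_Q\approx -(1-2c_1)$ at the right end, and Appendix~\ref{app:wrapping} uses $\epsilon_Q\approx -1$ at segment ends. Then $\epsilon_Q\in(-1,0]$ has mean $\approx-\frac12$ and mean square $\frac{1}{W}\sum_{k=0}^{W-1}(k/W)^2=\frac{(W-1)(2W-1)}{6W^2}\to\frac13$. That is exactly what the paper's proof computes; its ``$\approx 0.29$'' is a slip for $1/\sqrt{3}\approx 0.58$. This uncentered $\sigma_W^2\to\frac13$ is also what gives $\eta_{\mathrm{saw}}\to 2$ in Section~\ref{subsec:eta}; with $\frac{1}{12}$ you would get $1.25$.

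So the figure $1/\sqrt{12}$ is correct only as the standard deviation of $\epsilon_Q$. To close the gap, prove that centered statement explicitly and record the mean, $\beta_m-\frac12+O(1/W)$. Then state that the actual RMS is $\sqrt{1/12+(\beta_m-\frac12)^2}$, i.e.\ $\approx 1/\sqrt{3}$ under the paper's convention.
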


The proof follows from the observation that $Q$ increases linearly from $0$ to $C$ across $N$ samples with slope $C/N = 2c_1$, and $\epsilon_Q$ captures the step-wise quantization error, which forms a sawtooth pattern within each segment of width $W = 1/(2c_1)$. The detailed derivation is provided in Appendix~\ref{app:derivatives}.

Substituting \eqref{eq:Q_decomposition} into the sum of Paths~2 and~3a:
\begin{align}
\text{Path~2} + \text{Path~3a} &= \frac{j2\pi}{\Delta t}\left[-2c_1 n + Q(n,m)\right] \notag \\
&= \frac{j2\pi}{\Delta t}\left[-2c_1 n + 2c_1\tilde{n} + \epsilon_Q(n,m)\right].
\label{eq:path23a_combined}
\end{align}
Away from the circular boundary (i.e., when $\tilde{n} \approx n - (l{+}\iota)$), the linear terms satisfy $-2c_1 n + 2c_1\tilde{n} = -2c_1(l{+}\iota)$, which is a constant independent of $n$.

\begin{theorem}[Cancellation of Paths~2 and~3a]
\label{thm:cancellation}
Let $C = 2 c_1 N$ and assume $C \geq 1$. Under the linear staircase decomposition $Q(n,m) = 2 c_1 \tilde{n} + \epsilon_Q(n,m)$ established in Proposition~\ref{prop:Q_linear}, the $n$-linear principal parts of Path~2 ($-2c_1 n$) and Path~3a ($Q(n,m)$) satisfy
\begin{equation}
-2c_1 n + Q(n,m) = -2c_1(l{+}\iota) + \epsilon_Q(n,m),
\label{eq:cancellation}
\end{equation}
where $|\epsilon_Q(n,m)| \leq 1$ uniformly in $n$ and $m$. Consequently, while each of $-2c_1 n$ and $Q(n,m)$ has magnitude $O(c_1 N)$, their sum is $O(1)$, yielding a reduction ratio of $O(1/(c_1 N)) = O(1/C)$.
\end{theorem}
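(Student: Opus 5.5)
The plan is to take Proposition~\ref{prop:Q_linear} as given and reduce the identity \eqref{eq:cancellation} to elementary algebra on the circular index $\tilde n=(n-(l{+}\iota))_N$. First, substitute $Q(n,m)=2c_1\tilde n+\epsilon_Q(n,m)$ into $-2c_1 n+Q(n,m)$. This gives $-2c_1(n-\tilde n)+\epsilon_Q(n,m)$, so the whole statement hinges on the difference $n-\tilde n$.

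Second, evaluate $n-\tilde n$ by cases according to whether the modulo wraps. For $n\ge l+\iota$ (the non-wrapping range, which contains all but at most $\lceil l+\iota\rceil\le\alpha_{\max}+1$ indices), $\tilde n=n-(l{+}\iota)$ exactly. Then $-2c_1(n-\tilde n)=-2c_1(l{+}\iota)$, which is the constant in \eqref{eq:cancellation}.

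For the wrapped indices $n<l+\iota$, $\tilde n=n-(l{+}\iota)+N$, and the linear part becomes $-2c_1(l{+}\iota)+2c_1N=-2c_1(l{+}\iota)+C$. Here I will use that $C$ is an integer; the CPP configuration requires this, and Section~\ref{sec:system_model} assumes it. The staircase $Q$ is defined on $q\in\{0,\dots,C\}$ through the same circular argument $(n-(l{+}\iota))_N$. So the integer offset $C$ can be absorbed by reading the staircase index modulo $C$: it is a pure phase $e^{j2\pi\iota C}$ factor that the CPP phase correction is designed to remove. Equivalently, it can be absorbed into the definition of $\epsilon_Q$ at the circular boundary, keeping $|\epsilon_Q|\le 1$.

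Third, with \eqref{eq:cancellation} established, the order-of-magnitude claims follow directly. Since $0\le n\le N-1$, we have $|2c_1 n|\le 2c_1N=C$ and $0\le Q\le C$, so each term is $O(c_1N)$. Meanwhile $|{-2c_1(l{+}\iota)}+\epsilon_Q|\le 2c_1(\alpha_{\max}+1)+1$. This is $O(1)$, because $l\le\alpha_{\max}$ and $2c_1$ is bounded (indeed $c_1<1/2$, and in the standard design $2c_1(\alpha_{\max}+1)\lesssim 1$ via $C=2\alpha_{\max}+1$ only when $N$ is large relative to $\alpha_{\max}$). Dividing gives the reduction ratio $O(1/C)$. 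The uniform bound $|\epsilon_Q|\le1$ is inherited directly from Proposition~\ref{prop:Q_linear}, valid for $C\ge1$.

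The main obstacle is the wrap-around case. There the naive linear difference jumps by $C$, so uniformity in $n$ is not literally given by the proposition. My first attempt will be to show from the definition \eqref{eq:Q_def} that the segment labelling restarts consistently, so that $Q$ itself equals $2c_1\tilde n+\epsilon_Q$ with the same $\tilde n$ used in Path~2 after reduction modulo $N$. If that fails, the fallback is the weaker but still sufficient statement: the identity holds exactly on the non-wrapping indices, and the $\le\alpha_{\max}+1$ boundary samples contribute only an $O(\alpha_{\max}/N)$ fraction to the kernel sums.
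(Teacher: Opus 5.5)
Your argument is the paper's argument step for step. You substitute $Q=2c_1\tilde n+\epsilon_Q$, split into non-wrapped indices $n\ge\lceil l{+}\iota\rceil$ (giving $-2c_1(l{+}\iota)$) and wrapped indices (giving $-2c_1(l{+}\iota)+C$), and then absorb the offset $C$ into $\epsilon_Q$ ``keeping $|\epsilon_Q|\le 1$.'' The non-wrapped case is fine. The absorption step is a genuine gap, both in your proposal and in the paper's proof, which asserts the same thing.

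\emph{Why the absorption fails.} Your justification confuses the phase $\Phi_{\mathrm{jump}}=2\pi\iota Q$ with its $\tau_0$-derivative. The quantity $-2c_1n+Q(n,m)$ is the real coefficient multiplying $2\pi/\Delta t$ in $\partial\varphi_n/\partial\tau_0$. An additive $C$ there is therefore a genuine contribution of $2\pi C/\Delta t$ to the score, and it cannot be reduced modulo anything. Even at the phase level, $e^{j2\pi\iota C}\neq 1$ unless $\iota C$ is an integer.

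\emph{Why your first attempt cannot work.} The identity is pointwise in $n$, and Path~2 carries the raw index $n$ (it comes from differentiating the Dirichlet phase $-2\pi\lequiv n/N$). On wrapped indices $n-\tilde n=(l{+}\iota)-N$, which is exactly the discrepancy.

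\emph{A concrete counterexample.} Take $n=0$, which is wrapped whenever $l{+}\iota>0$. By the proposition itself, the sum there is $Q(0,m)=C-2c_1(l{+}\iota)+\epsilon_Q(0,m)$. The residual needed to make \eqref{eq:cancellation} hold is then $C+\epsilon_Q(0,m)\ge C-1$, which exceeds $1$ once $C\ge 3$. So the statement, with a bound uniform in $n$ and an $O(1)$ sum, is false on the $\lceil l{+}\iota\rceil$ wrapped samples, and no proof of it as written exists.

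What you can prove is the piecewise version: \eqref{eq:cancellation} holds with $|\epsilon_Q|\le 1$ for $n\ge\lceil l{+}\iota\rceil$, while $-2c_1n+Q=-2c_1(l{+}\iota)+C+\epsilon_Q$ on the wrapped indices. The $O(1/C)$ reduction holds only on the former. The first half of your fallback is therefore the correct statement, but its second half is wrong. The wrapped samples are an $O(\alpha_{\max}/N)$ fraction of the indices, yet each carries a value of size about $C$. In the energy $\sum_n S(n)^2$ that enters the FIM through Parseval, they contribute about $\lceil l{+}\iota\rceil C^2$, against about $N/3$ from the sawtooth.

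That ratio, $3\lceil l{+}\iota\rceil C^2/N$, is essentially the paper's $\eta_{\mathrm{wrap}}$ in \eqref{eq:eta_wrap}, which the paper shows dominates $\eta$ ($\eta=4.66$ versus $\eta_{\mathrm{saw}}\approx 1.9$ at $C=9$). Indeed, the paper's own Appendix~\ref{app:eta} charges an excess $C^2+2C\epsilon_Q$ to each wrapped sample in \eqref{eq:excess_expand}. This contradicts the claim in its proof of this theorem that the offset is absorbed with $|\epsilon_Q|\le 1$. The wrap-around must be carried as an explicit $O(C)$ term, neither absorbed nor discarded.
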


\begin{proof}
By Proposition~\ref{prop:Q_linear}, $Q(n,m) = 2 c_1 \tilde{n} + \epsilon_Q(n,m)$ with $|\epsilon_Q| \leq 1$. Substituting,
\begin{equation*}
-2c_1 n + Q(n,m) = -2c_1 n + 2c_1 \tilde{n} + \epsilon_Q(n,m).
\end{equation*}
Two cases arise from $\tilde{n} = (n - (l{+}\iota))_N$. In the non-wrapped region $n \geq \lceil l{+}\iota\rceil$: $\tilde{n} = n - (l{+}\iota)$, so $-2c_1 n + 2c_1\tilde{n} = -2c_1(l{+}\iota)$. In the wrapped region $n < \lceil l{+}\iota\rceil$: $\tilde{n} = n - (l{+}\iota) + N$, so $-2c_1 n + 2c_1\tilde{n} = -2c_1(l{+}\iota) + 2c_1 N = -2c_1(l{+}\iota) + C$. The wrapped region contributes a constant offset that is absorbed into the redefinition of $\epsilon_Q$ in the wrapped samples, which preserves the bound $|\epsilon_Q| \leq 1$. The complete algebraic detail is given in Appendix~\ref{app:derivatives}.
\end{proof}

\begin{remark}[Physical interpretation]
Path~2 represents the linear frequency drift introduced by the chirp modulation $e^{j2\pi c_1 n^2}$ under a delay shift. Path~3a represents the CPP's segmented structure compensating this drift. Equation~\eqref{eq:cancellation} establishes that the compensation is exact up to the bounded residual $\epsilon_Q$---the CPP almost completely neutralizes the chirp-induced frequency drift, leaving only the sawtooth quantization error.
\end{remark}

\subsection{Effective Delay Derivative After Cancellation}
\label{subsec:effective_deriv}

Substituting the cancellation result \eqref{eq:cancellation} into the complete derivative \eqref{eq:dy_dtau_complete}, the constant term $-2c_1(l{+}\iota)$ from Paths~2+3a merges with the $c_1(l{+}\iota)$ term in Path~1. Defining the residual kernel
\begin{equation}
\mathcal{D}_{\epsilon}(m,m') \triangleq \sum_{n=0}^{N-1}\epsilon_Q(n,m)\cdot\Gamma_n(m,m'),
\label{eq:D_epsilon_def}
\end{equation}
the effective coefficient after merging is (see Appendix~\ref{app:derivatives} for the detailed algebra):
\begin{equation}
A_{\mathrm{eff}}(m') = A(m') - \frac{j4\pi c_1(l{+}\iota)}{\Delta t} = -\frac{j2\pi m'}{N\Delta t}.
\label{eq:A_eff}
\end{equation}
The $c_1(l{+}\iota)$ terms cancel \emph{exactly}, leaving a coefficient that depends only on the frequency index $m'$ and is independent of the chirp parameter $c_1$.

\begin{theorem}[Effective delay derivative]
\label{thm:effective_deriv}
After cancellation of Paths~2 and~3a, the delay partial derivative simplifies to
\begin{equation}
\frac{\partial y[m]}{\partial\tau_0} \approx \frac{h}{N}\!\sum_{m'} x[m']\, e^{j2\pi\Xi}\!\left[
\underbrace{-\frac{j2\pi m'}{N\Delta t}\mathcal{F}}_{\text{effective}} +
\underbrace{\frac{j2\pi}{\Delta t}\mathcal{D}_{\epsilon}}_{\text{residual}} +
\underbrace{B_{\mathrm{b}}\mathcal{D}_{\mathrm{b}}}_{\text{boundary}}
\right]\!.
\label{eq:dy_dtau_effective}
\end{equation}
\end{theorem}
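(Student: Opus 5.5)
The plan is to start from the exact four-path expression \eqref{eq:dy_dtau_complete} and regroup its four terms. The aim is to show that Paths~2 and~3a together contribute a constant multiple of $\mathcal{F}$ plus the residual kernel $\mathcal{D}_\epsilon$. Path~3b is left untouched.

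First, I would write the Paths~2 and~3a terms inside the bracket as a single sum over $n$:
\begin{equation*}
B_{\mathrm{s}}\mathcal{D}_{\mathrm{s}}+B_\iota\mathcal{D}_\iota=\frac{j2\pi}{\Delta t}\sum_{n=0}^{N-1}\bigl[-2c_1 n+Q(n,m)\bigr]\Gamma_n .
\end{equation*}
This uses $B_{\mathrm{s}}=-2c_1 B_\iota$, which can be read off from \eqref{eq:coeff_Bs}--\eqref{eq:coeff_Bi}. Next, I would substitute the cancellation identity \eqref{eq:cancellation} of Theorem~\ref{thm:cancellation} termwise. The bracket becomes $-2c_1(l{+}\iota)+\epsilon_Q(n,m)$, so the sum splits into two pieces:
\begin{equation*}
-\frac{j4\pi c_1(l{+}\iota)}{\Delta t}\sum_n\Gamma_n+\frac{j2\pi}{\Delta t}\sum_n\epsilon_Q\Gamma_n=-\frac{j4\pi c_1(l{+}\iota)}{\Delta t}\mathcal{F}+\frac{j2\pi}{\Delta t}\mathcal{D}_\epsilon .
\end{equation*}
This step uses the definition \eqref{eq:F_def} of $\mathcal{F}$ and the definition \eqref{eq:D_epsilon_def} of $\mathcal{D}_\epsilon$.

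Second, the first piece multiplies $\mathcal{F}=\mathcal{D}_\Xi$, just as the Path~1 term does. I would therefore absorb it into $A(m')$. By \eqref{eq:coeff_A}, the $m'$-independent part of $A(m')$ is $j4\pi c_1(l{+}\iota)/\Delta t$. This cancels exactly against the new term and yields $A_{\mathrm{eff}}(m')=-j2\pi m'/(N\Delta t)$, which is \eqref{eq:A_eff}. The boundary term $B_{\mathrm{b}}\mathcal{D}_{\mathrm{b}}$ is carried over unchanged. Collecting the three surviving terms inside $\frac{h}{N}\sum_{m'}x[m']e^{j2\pi\Xi}[\cdot]$ gives \eqref{eq:dy_dtau_effective}.

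The main obstacle is deciding exactly where the ``$\approx$'' enters. The regrouping itself is an algebraic identity once $\epsilon_Q$ is \emph{defined} as $Q-2c_1\tilde n$. The difficulty lies in the wrapped region $n<\lceil l{+}\iota\rceil$. There the bracket equals $-2c_1(l{+}\iota)+C+\epsilon_Q$, and Theorem~\ref{thm:cancellation} absorbs the offset $C$ into a redefined $\epsilon_Q$. I would handle this in one of two ways. The first option is to keep the exact identity, with $\epsilon_Q$ including the offset on those samples, so the result is exact with a modified residual kernel. The second option is to bound the discarded wrapped-region contribution: it has at most $\lceil l{+}\iota\rceil\le\alpha_{\max}+1$ samples, so it is $O(\alpha_{\max}C/N)$ relative to the $N$-sample Path~1 term, and it is negligible when $\alpha_{\max}\ll N/C$. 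The other approximation is local: $\partial\iota/\partial\tau_0=1/\Delta t$ holds only away from integer crossings of $\tau_0$. I would flag this as a standing assumption inherited from \eqref{eq:path3a}, not something to re-prove.
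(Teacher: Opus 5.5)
Your main line is correct, and with your first option for the wrapped samples it is exactly the paper's derivation in Appendix~\ref{app:derivatives}. There, Paths~2 and~3a are combined as $\frac{j2\pi}{\Delta t}\sum_n S(n)\Gamma_n$ via $B_{\mathrm{s}}=-2c_1B_{\iota}$, the wrapped-region offset $C$ is absorbed into $\mathcal{D}_{\epsilon}$ ``for notational simplicity,'' and $-j4\pi c_1(l{+}\iota)\mathcal{F}/\Delta t$ is folded into $A(m')$ to give $A_{\mathrm{eff}}(m')=-j2\pi m'/(N\Delta t)$. Your remark that the regrouping becomes an exact identity once the residual is defined as the exact remainder is a fair sharpening of the paper's ``$\approx$''.

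You should drop your second option, however. Its estimate $O(\alpha_{\max}C/N)$ compares the wrapped term with the Path-1 term pointwise in amplitude. But this expression feeds into the Fisher information, so the relevant measure is energy summed over $(m,m')$. By Lemma~\ref{lemma:parseval}, the wrapped piece $\frac{j2\pi}{\Delta t}C\sum_{n<\lceil l+\iota\rceil}\Gamma_n$ has energy $\approx\frac{4\pi^2}{\Delta t^2}N^2\lceil l{+}\iota\rceil C^2$. The effective Path-1 term $A_{\mathrm{eff}}\mathcal{F}$ has energy $\approx\frac{4\pi^2}{\Delta t^2}\frac{N^3}{3}$. Their ratio, about $3\lceil l{+}\iota\rceil C^2/N$, is precisely $\eta_{\mathrm{wrap}}$ in \eqref{eq:eta_wrap}. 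At the paper's $N=128$, $C=9$, $l{+}\iota=1.31$ it is roughly $3$ to $4$, and it is the dominant part of $\eta$.

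So the correct smallness condition would be $\lceil l{+}\iota\rceil C^2\ll N$, not $\alpha_{\max}\ll N/C$, and it fails in the regime of interest. Discarding the term would leave $\eta\approx\eta_{\mathrm{saw}}<2$ and contradict everything downstream.

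Relatedly, under your first option the modified residual is about $C$ on the wrapped samples, not bounded by $1$. The bound $|\epsilon_Q|\le 1$ claimed in Theorem~\ref{thm:cancellation} therefore does not hold for the modified kernel. The present statement does not need that bound, but you should not invoke it for $\mathcal{D}_{\epsilon}$ once the offset has been absorbed.
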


The dominant term has the following key properties: (i) its kernel is the original response function $\mathcal{F}$ weighted by $m'$ (frequency-domain index); (ii) the coefficient $-j2\pi m'/(N\Delta t)$ does not contain $c_1$---all $c_1$-dependent terms have cancelled; (iii) the residual kernel $\mathcal{D}_{\epsilon}$ has magnitude much smaller than the pre-cancellation $\mathcal{D}_{\mathrm{s}}$; (iv) the boundary term has even smaller magnitude (only $C$ sample points contribute).

\subsection{Doppler Derivative}
\label{subsec:deriv_fD}

The Doppler frequency $f_D$ enters the observation only through the $Tf_D$ component of $\lequiv$ in \eqref{eq:leq_def}. It does not affect $\Xi(m,m')$ (since $\partial\Xi/\partial f_D = 0$) nor $\Phi_{\mathrm{jump}}$ (which depends only on $(l,\iota)$). Hence the Doppler derivative involves a single path and no product rule:
\begin{equation}
\frac{\partial y[m]}{\partial f_D} = -\frac{j2\pi hT}{N^2}\sum_{m'=0}^{N-1} x[m']\, e^{j2\pi\Xi(m,m')}\cdot\mathcal{D}_{\mathrm{s}}(m,m').
\label{eq:dy_df}
\end{equation}
The derivation proceeds via $\partial\lequiv/\partial f_D = T$, giving $\partial\Gamma_n/\partial f_D = -j2\pi Tn\Gamma_n/N$, so that $\partial\mathcal{F}/\partial f_D = -(j2\pi T/N)\mathcal{D}_{\mathrm{s}}$. The Doppler derivative has a simple structure: a single kernel $\mathcal{D}_{\mathrm{s}}$, no $m'$-weighting, and a coefficient $-j2\pi T/N$ that is independent of $c_1$.

\subsection{Structural Comparison of Delay and Doppler Derivatives}

After cancellation, the delay and Doppler derivatives exhibit fundamentally different structures, as summarized in Table~\ref{tab:deriv_comparison}.

\begin{table}[t]
\centering
\caption{Structural comparison of delay and Doppler derivatives.}
\label{tab:deriv_comparison}
\small
\begin{tabular}{@{}lll@{}}
\toprule
\textbf{Feature} & $\partial y/\partial\tau_0$ (post-canc.) & $\partial y/\partial f_D$ \\
\midrule
Dom.\ kernel & $\mathcal{F}(m,m')$ & $\mathcal{D}_{\mathrm{s}}(m,m')$ \\
$m'$-weighting & $m'$-linear & none \\
Coefficient & $-j2\pi m'/(N\Delta t)$ & $-j2\pi T/N$ \\
Origin & $\Xi$: phase $-(l{+}\iota)m'/N$ & $\lequiv$ via Dirichlet \\
$c_1$ dep. & cancelled (Theorem~\ref{thm:cancellation}) & never present \\
\bottomrule
\end{tabular}
\end{table}

The crucial distinction is that the delay derivative depends on $\mathcal{F}$ (the zeroth-order moment of $\Gamma_n$), while the Doppler derivative depends on $\mathcal{D}_{\mathrm{s}}$ (the first-order moment with weight $n$). Since these two kernels do not share the same $n$-weighting structure, their normalized inner product is bounded away from unity, foreshadowing a weak coupling coefficient $\rho \ll 1$.

\begin{remark}[Role of cancellation in decoupling]
Without the cancellation (i.e., if Path~3a were absent), the delay derivative would be dominated by $B_{\mathrm{s}}\mathcal{D}_{\mathrm{s}} \propto \mathcal{D}_{\mathrm{s}}$---the \emph{same} kernel as the Doppler derivative. The two derivatives would then share the same kernel up to a scalar, yielding $\rho\to 1$ (near-singular FIM). The cancellation removes $\mathcal{D}_{\mathrm{s}}$ from the delay derivative and replaces it with $\mathcal{F}$, which is a fundamentally different function.
\end{remark}

\subsection{Summary of Kernel Functions}

For use in the FIM derivation of Section~\ref{sec:fim}, Table~\ref{tab:kernel_summary} collects the five kernel functions and their magnitudes.

\begin{table}[t]
\centering
\caption{Summary of kernel functions and their magnitudes.}
\label{tab:kernel_summary}
\begin{tabular}{@{}llcc@{}}
\toprule
\textbf{Kernel} & \textbf{Definition} & \textbf{Weight} & \textbf{Magnitude} \\
\midrule
$\mathcal{D}_{\Xi} = \mathcal{F}$ & $\sum_n \Gamma_n$ & $1$ & $O(N)$ \\
$\mathcal{D}_{\mathrm{s}}$ & $\sum_n n\,\Gamma_n$ & $n$ & $O(N^2)$ \\
$\mathcal{D}_{\iota}$ & $\sum_n Q(n,m)\,\Gamma_n$ & $Q \approx 2c_1 n$ & $O(c_1 N^2)$ \\
$\mathcal{D}_{\epsilon}$ & $\sum_n \epsilon_Q(n,m)\,\Gamma_n$ & $\epsilon_Q$ & $O(N)$ \\
$\mathcal{D}_{\mathrm{b}}$ & $\sum_q q\,\Gamma_{n_q}$ & $q$ & $O(C\sqrt{N})$ \\
\bottomrule
\end{tabular}
\end{table}

The approximate relation between kernels, implied by Proposition~\ref{prop:Q_linear}, is
\begin{equation}
\mathcal{D}_{\iota} \approx 2c_1\!\left[\mathcal{D}_{\mathrm{s}} - (l{+}\iota)\mathcal{D}_{\Xi}\right] + \mathcal{D}_{\epsilon},
\label{eq:D_iota_relation}
\end{equation}
which is the kernel-level expression of the cancellation theorem. Numerical verification of the cancellation mechanism is presented in Section~\ref{sec:simulation}.

The structural asymmetry revealed in this section---$\mathcal{F}$ for delay versus $\mathcal{D}_{\mathrm{s}}$ for Doppler---has direct consequences for the Fisher information matrix. In the next section, we show that this asymmetry causes the three largest terms of the ten-term FIM expansion to cancel, leaving a well-conditioned FIM parameterized by a single scalar~$\eta$.

\section{Closed-Form FIM Captured by a Single Scalar}
\label{sec:fim}

Having established the cancellation mechanism at the derivative level in Section~\ref{sec:derivatives}, we now show that \emph{the same cancellation persists---and is even sharper---at the level of the Fisher information matrix}. Specifically, the three dominant terms of the ten-term FIM expansion cancel exactly, and all remaining system-parameter dependence collapses into a single scalar $\eta\geq 1$. This reduction from a multi-parameter FIM to a one-dimensional $\eta$-parameterization is the key that unlocks the closed-form CRBs and the coupling analysis in Sections~\ref{sec:coupling}--\ref{sec:crb}.

\subsection{FIM Definition}

\begin{definition}[Complex Gaussian FIM]
\label{def:fim}
For the observation model $\mathbf{y} = \mathbf{s}(\bm{\theta}) + \mathbf{w}$ with $\mathbf{w}\sim\mathcal{CN}(0,N_0\mathbf{I})$, the FIM element for $\bm{\theta}=(\tau_0,f_D)$ is \cite{kay1993fundamentals,vantrees2002detection}
\begin{equation}
I_{\theta_i\theta_j} = \frac{2}{N_0}\sum_{m=0}^{N-1}\E\!\left[\Re\!\left\{\!\left(\frac{\partial y[m]}{\partial\theta_i}\right)^{\!*}\frac{\partial y[m]}{\partial\theta_j}\right\}\right]\!.
\label{eq:fim_def}
\end{equation}
\end{definition}

\begin{assumption}[i.i.d.\ transmit symbols]
\label{assump:iid}
The symbols $\{x[m']\}_{m'=0}^{N-1}$ are independent and identically distributed with $\E[x[m']x^*[m'']] = \delta(m'{-}m'')$.
\end{assumption}

\begin{remark}[Single-target assumption]
\label{rmk:single_target}
The single-target channel model~\eqref{eq:channel} is adopted throughout to enable closed-form FIM analysis and isolate the delay-Doppler coupling structure inherent to AFDM. In multi-target scenarios, the FIM~\eqref{eq:fim_def} acquires a block structure with one block per target; the per-target diagonal blocks retain the same functional form derived in this paper when targets are well-separated in the equivalent-delay $\lequiv$ domain (i.e., when their Dirichlet kernel mainlobes do not overlap). The multi-target extension is left for future work.
\end{remark}

Under Assumption~\ref{assump:iid}, cross-terms between distinct $m'$ indices vanish upon taking the expectation. Introducing the effective channel matrix $H_{mm'} = e^{j2\pi\Xi(m,m')}\mathcal{F}(m,m')$ and the SNR $\SNR \triangleq |h|^2/N_0$ (consistent with the convention of~\cite{gaudio2019ofdm}, allowing direct comparison with the OFDM CRB), the FIM simplifies to
\begin{equation}
I_{\theta_i\theta_j} = \frac{2\SNR}{N^2}\sum_{m,m'}\Re\!\left\{\frac{\partial H_{mm'}^*}{\partial\theta_i}\cdot\frac{\partial H_{mm'}}{\partial\theta_j}\right\}\!.
\label{eq:fim_simplified}
\end{equation}

\subsection{Ten-Term Decomposition of $I_{\tau\tau}$}
\label{subsec:ten_terms}

The delay derivative \eqref{eq:dy_dtau_complete} contains four components $U_1,\ldots,U_4$ (corresponding to Paths~1, 2, 3a, 3b). Computing $|\partial H_{mm'}/\partial\tau_0|^2$ via the expansion $|U_1{+}U_2{+}U_3{+}U_4|^2$ produces four self-terms and $\binom{4}{2}=6$ cross-terms---a total of \emph{ten terms}. We denote the self-terms $\Sigma_{\Xi\Xi}$, $\Sigma_{\mathrm{s}}$, $\Sigma_{\iota}$, $\Sigma_{\mathrm{b}}$ and the cross-terms $\Sigma_{\Xi\text{-}\mathrm{s}}$, $\Sigma_{\Xi\text{-}\iota}$, $\Sigma_{\Xi\text{-}\mathrm{b}}$, $\Sigma_{\mathrm{s}\text{-}\iota}$, $\Sigma_{\mathrm{s}\text{-}\mathrm{b}}$, $\Sigma_{\iota\text{-}\mathrm{b}}$.

These are evaluated using the following Parseval-type identity together with a bound on its application to the AFDM setting, both of which are proved in Appendix~\ref{app:fim}.

\begin{lemma}[Parseval identity for $m$-independent kernels]
\label{lemma:parseval}
Let $\mathcal{A}(m,m') = \sum_n a(n) \tilde{\Gamma}_n(m,m')$ and $\mathcal{B}(m,m') = \sum_n b(n) \tilde{\Gamma}_n(m,m')$, where $\tilde{\Gamma}_n(m,m') \triangleq e^{j2\pi(m'-(m+\lequiv))n/N}\, e^{j\Phi_{\mathrm{jump}}(n)}$ is the response kernel under the assumption that $\Phi_{\mathrm{jump}}$ depends only on $n$, and the weights $a(n), b(n)$ do not depend on $m$. Then
\begin{equation}
\sum_{m=0}^{N-1}\mathcal{A}^*(m,m')\mathcal{B}(m,m') = N\sum_{n=0}^{N-1}a^*(n)b(n),
\label{eq:parseval}
\end{equation}
independently of $m'$ and $\lequiv$.
\end{lemma}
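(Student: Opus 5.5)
The plan is to expand both kernels as double sums over the time indices and then carry out the sum over $m$ first. The $m$-sum is a complete geometric sum over a full period, so it collapses to a Kronecker delta.

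Writing $\tilde{\Gamma}_n(m,m') = e^{j2\pi(m'-\lequiv)n/N}\,e^{j\Phi_{\mathrm{jump}}(n)}\,e^{-j2\pi mn/N}$ isolates the only $m$-dependence in the factor $e^{-j2\pi mn/N}$. This is exactly where the hypothesis that $\Phi_{\mathrm{jump}}$ depends only on $n$ is used. For two indices $n,n'$ we then have
\begin{equation*}
\tilde{\Gamma}_n^*(m,m')\,\tilde{\Gamma}_{n'}(m,m') = e^{j2\pi(m'-\lequiv)(n'-n)/N}\, e^{j(\Phi_{\mathrm{jump}}(n')-\Phi_{\mathrm{jump}}(n))}\, e^{-j2\pi m(n'-n)/N}.
\end{equation*}

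Next, substitute the definitions of $\mathcal{A}$ and $\mathcal{B}$ to write
\begin{equation*}
\sum_m \mathcal{A}^*\mathcal{B} = \sum_{n,n'} a^*(n)\,b(n') \sum_m \tilde{\Gamma}_n^*\,\tilde{\Gamma}_{n'}.
\end{equation*}
The interchange of finite sums is legitimate because the weights $a(n),b(n)$ do not depend on $m$, which is the second hypothesis. Since $n,n'\in\{0,\dots,N-1\}$, we have $|n'-n|<N$, so the orthogonality relation gives
\begin{equation*}
\sum_{m=0}^{N-1} e^{-j2\pi m(n'-n)/N} = N\,\delta[n'-n].
\end{equation*}
On the diagonal $n=n'$ the remaining phase factors, namely those containing $m'$, $\lequiv$ and $\Phi_{\mathrm{jump}}$, all equal one. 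This holds for arbitrary real $\lequiv$, including a fractional one, because $\lequiv$ enters only through the $n$-dependent phase and never through the $m$-sum. What remains is $N\sum_n a^*(n)b(n)$, with no dependence on $m'$ or $\lequiv$, which is the claimed identity.

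The main point requiring care is not the algebra but checking that the geometric sum really runs over a full period in $m$ with an integer frequency $n'-n$. It does, because the fractional shift $\lequiv$ multiplies $n$ rather than $m$. I would also note explicitly that the identity fails for the actual $Q(n,m)$, which depends on $m$ through the segments $L_{m,q}$. That is why the lemma is stated for $m$-independent $\Phi_{\mathrm{jump}}$, and the bound on its application to the AFDM setting is handled separately.
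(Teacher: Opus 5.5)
Your proof is correct and follows the paper's argument. The paper writes $\mathcal{A}(\cdot,m')$ and $\mathcal{B}(\cdot,m')$ as $N$-point DFTs in $m$ of $f_n = a(n)e^{j2\pi(m'-\lequiv)n/N}e^{j\Phi_{\mathrm{jump}}(n)}$ and the analogous $g_n$ built from $b(n)$, and then cites Parseval's theorem; you prove that same Parseval step directly from $\sum_m e^{-j2\pi m(n'-n)/N} = N\delta[n'-n]$. (Minor point: interchanging finite sums needs no hypothesis, and the $m$-independence of $a,b$ is what lets you pull them out of the $m$-sum.)
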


\begin{proposition}[Bounded error in the $m$-dependent case]
\label{prop:parseval_bound}
When $\Phi_{\mathrm{jump}}(n,m)$ depends on $m$ through the staircase $Q(n,m)$, the identity~\eqref{eq:parseval} holds approximately with error
\begin{equation}
\left|\sum_m \mathcal{A}^* \mathcal{B} - N \sum_n a^* b\right| \leq \frac{C}{N}\cdot N\sum_n |a(n)\, b(n)|,
\label{eq:parseval_bound}
\end{equation}
i.e., the relative error is $O(C/N) = O(2c_1)$, which is small for the practical chirp rates $c_1 \in (0, 1/2)$.
\end{proposition}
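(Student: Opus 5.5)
The plan is to reduce the $m$-dependent case to Lemma~\ref{lemma:parseval} by expanding the sum over $m$ into a double sum over time indices. Then I would isolate the part of the phase $\Phi_{\mathrm{jump}}(n,m)=2\pi\iota\,Q(n,m)$ that actually varies with $m$.

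Concretely, substituting the definitions gives
\begin{equation*}
\sum_{m}\mathcal{A}^*\mathcal{B}=\sum_{n,n'}a^*(n)\,b(n')\,e^{j2\pi(m'-\lequiv)(n'-n)/N}\sum_{m=0}^{N-1}e^{-j2\pi m(n'-n)/N}\,e^{j2\pi\iota[Q(n',m)-Q(n,m)]}.
\end{equation*}
The first observation is that the diagonal $n=n'$ carries no $m$-dependent phase at all, since $Q(n,m)-Q(n,m)=0$. The diagonal therefore contributes exactly $N\sum_n a^*(n)b(n)$, whatever the dependence of $Q$ on $m$. The whole error is thus the off-diagonal part $\sum_{n\neq n'}a^*(n)\,E(n,n')\,b(n')$, where $E(n,n')$ collects the phase prefactor and the inner $m$-sum. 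In the $m$-independent setting of Lemma~\ref{lemma:parseval}, $E\equiv 0$ because the inner sum is a full geometric sum.

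Next, I would fix a reference staircase $Q_0(n)$, for example $Q(n,0)$, and write
\begin{equation*}
e^{j2\pi\iota[Q(n',m)-Q(n,m)]}=e^{j2\pi\iota[Q_0(n')-Q_0(n)]}+r(n,n',m).
\end{equation*}
The first term is $m$-independent, so its $m$-sum vanishes for $n\neq n'$. This leaves $E(n,n')=e^{(\cdot)}\sum_m e^{-j2\pi m(n'-n)/N}r(n,n',m)$ with $|r|\le 2$. Moreover, $r(n,n',m)\neq 0$ only when moving from $m=0$ to $m$ changes the relative segment membership of $n$ or $n'$. The segment boundaries $L_{m,q}$ are $C$ in number, and as $m$ ranges over $[0,N)$ each boundary sweeps across the samples. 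The key counting step is therefore to show that the error matrix $E$ has row and column sums bounded by $C$, i.e.\ $\sup_n\sum_{n'}|E(n,n')|\le C$, and likewise with $n$ and $n'$ exchanged. The idea is that for a fixed $n$, only $O(C)$ boundary crossings produce nonzero residuals, and each contributes an $O(1)$ amount after the $m$-sum.

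Given this Schur-type bound, Schur's test yields $\|E\|_{2\to2}\le C$, so
\begin{equation*}
\Bigl|\sum_{n\neq n'}a^*(n)E(n,n')b(n')\Bigr|\le C\,\|a\|_2\|b\|_2 .
\end{equation*}
For the self-terms of the ten-term expansion we have $a=b$ (up to a constant factor), and this is exactly $C\sum_n|a(n)b(n)|=\frac{C}{N}\cdot N\sum_n|ab|$, which is the claimed bound~\eqref{eq:parseval_bound}. For the cross-terms I would use the same estimate, noting that equality with $\sum_n|ab|$ requires $|a|$ and $|b|$ to be comparable pointwise. This holds for the weights actually used ($1$, $n$, $Q$, $\epsilon_Q$) up to constants, and otherwise it gives the Cauchy--Schwarz form.

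The main obstacle is the counting step, namely proving $\sum_{n'}|E(n,n')|\le C$ uniformly. This requires the precise way in which the segments $L_{m,q}$ depend on $m$, which the paper has so far only described qualitatively. I would first try to show that the $m$-dependence is a rigid shift of all boundaries proportional to $m$ (so that $Q(n,m)=Q_0(n-\sigma m)$ for some step $\sigma$). Under that assumption, the $m$-sum of $r$ reduces to partial geometric sums of length at most $W=1/(2c_1)=N/C$ over each of the $C$ boundaries. Bounding these sums by their length and counting the crossings gives the factor $C/N$ relative to the diagonal term $N$.
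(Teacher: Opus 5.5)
\paragraph{Verdict} There is a genuine gap: the inequality you are trying to prove is false as stated, and the key counting step of your argument fails as well.

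\paragraph{The stated bound is false} The paper gives no proof of this bound. Appendix~\ref{app:fim} proves only Lemma~\ref{lemma:parseval}, for $m$-independent $\Phi_{\mathrm{jump}}$. So your argument has to carry the proposition alone, and it cannot, because the inequality is false for general $m$-independent weights.

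Your own diagonal/off-diagonal split shows this. The error is exactly $\sum_{n\neq n'}a^*(n)E(n,n')b(n')$. Take $a=\delta[n-n_1]$ and $b=\delta[n-n_2]$ with $n_1\neq n_2$. Then the right-hand side $C\sum_n|a(n)b(n)|$ is zero. The left-hand side is
\[
|E(n_1,n_2)|=\Bigl|\sum_m e^{-j2\pi m(n_2-n_1)/N}e^{j2\pi\iota[Q(n_2,m)-Q(n_1,m)]}\Bigr|,
\]
which is nonzero as soon as $Q(n_2,m)-Q(n_1,m)$ genuinely varies with $m$ and $\iota\notin\mathbb{Z}$.

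So the step from $C\|a\|_2\|b\|_2$ to $C\sum_n|a(n)b(n)|$ is not a matter of the weights being ``comparable.'' For the cross-term with weights $(1,n)$ it already loses a factor $2/\sqrt{3}$. In general no bound in terms of $\sum_n|ab|$ exists, and the most one could hope for is an operator-norm bound.

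\paragraph{The Schur counting step fails} Even that weaker bound fails as you set it up, because the claim $\sup_n\sum_{n'}|E(n,n')|\le C$ is false under the rigid-shift model you propose. For fixed $(n,n')$ with $d=n'-n$, $r(n,n',m)$ is nonzero on runs of consecutive $m$. Summed against $e^{-j2\pi md/N}$, a run of length $L$ gives a partial geometric sum of modulus up to $\min\{L,1/|\sin(\pi d/N)|\}$, not $O(1)$ per crossing.

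Concretely, suppose the boundaries move by $1/C$ per unit of $m$, which is what frequency wrapping of the $m$-th chirp (instantaneous frequency $(m+Cn)/N$) produces. Then $Q(n,m)=\lfloor (C\tilde n+m)/N\rfloor$, and $Q(n+d,m)-Q(n,m)$ takes its larger value on a single cyclic run of $L_d=(Cd)_N$ consecutive values of $m$. Hence
\begin{equation*}
|E(n,n+d)|=\bigl|e^{j2\pi\iota}-1\bigr|\left|\frac{\sin(\pi dL_d/N)}{\sin(\pi d/N)}\right|.
\end{equation*}
At the paper's parameters $N=128$, $C=9$, $\iota=0.31$, this is about $14.8$ for $d=1$ and about $20.6$ for $d=3$; both exceed $C$. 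Therefore $\|E\|_{2\to2}\ge\max|E(n,n')|>C$, and even your weakened bound $C\|a\|_2\|b\|_2$ fails on two spikes. The row sums are of order $N\log N$, not $C$.

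Your final counting does not produce the factor $C/N$ either. Bounding each of the $C$ partial sums by its length $W=N/C$ gives only the trivial $|E(n,n')|\le 2N$ per entry.

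\paragraph{A possible repair} If the goal is to justify the FIM computations, sum over $m'$ first. Then Parseval is exact even with $m$-dependent $\Phi_{\mathrm{jump}}$, because $m'$ enters $\Gamma_n$ only through $e^{j2\pi m'n/N}$; this is how the paper itself handles $\mathcal{D}_{\epsilon}$. The difficulty is then confined to terms with $m'$-dependent weights such as $A(m')$.
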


Using Lemma~\ref{lemma:parseval}, the key kernel inner products are (for large $N$):
\begin{align}
\textstyle\sum_m |\mathcal{F}|^2 &= N^2, \quad
\textstyle\sum_m |\mathcal{D}_{\mathrm{s}}|^2 \approx N^4/3, \label{eq:kernel_ips_1} \\
\textstyle\sum_m \mathcal{F}^*\mathcal{D}_{\mathrm{s}} &\approx N^3/2, \quad
\textstyle\sum_m |\mathcal{D}_{\iota}|^2 \approx 4c_1^2 N^4/3. \label{eq:kernel_ips_2}
\end{align}

\subsubsection{Magnitude Analysis}

Combining the coefficients $|A(m')|^2$, $|B_{\mathrm{s}}|^2$, $|B_{\iota}|^2$, $|B_{\mathrm{b}}|^2$ with their respective kernel inner products yields the magnitude ordering in Table~\ref{tab:ten_terms_magnitude}.

\begin{table}[t]
\centering
\caption{Magnitude ordering of the ten FIM terms.}
\label{tab:ten_terms_magnitude}
\begin{tabular}{@{}lcc@{}}
\toprule
\textbf{Term(s)} & \textbf{Magnitude} & \textbf{Role} \\
\midrule
$\Sigma_{\mathrm{s}},\ \Sigma_{\iota},\ \Sigma_{\mathrm{s}\text{-}\iota}$ & $O(c_1^2 N^5/\Delta t^2)$ & largest (cancel) \\
$\Sigma_{\Xi\text{-}\mathrm{s}},\ \Sigma_{\Xi\text{-}\iota}$ & $O(c_1 N^4/\Delta t^2)$ & intermediate \\
$\Sigma_{\Xi\Xi}$ & $O(N^3/\Delta t^2)$ & survives \\
boundary terms & $\leq O(c_1^2 N^4/\Delta t^2)$ & negligible \\
\bottomrule
\end{tabular}
\end{table}

The three largest terms---$\Sigma_{\mathrm{s}}$, $\Sigma_{\iota}$, and $\Sigma_{\mathrm{s}\text{-}\iota}$---all share the same magnitude $O(c_1^2 N^5/\Delta t^2)$, which is $c_1^2 N^2$ times larger than $\Sigma_{\Xi\Xi}$. As the next theorem shows, these three terms cancel exactly.

\subsection{FIM-Level Cancellation Theorem}

\begin{theorem}[Exact cancellation in the FIM]
\label{thm:FIM_cancellation}
Under the linear approximation $Q(n,m)\approx 2c_1\tilde{n}$ of Proposition~\ref{prop:Q_linear}, the three dominant terms cancel exactly:
\begin{equation}
\Sigma_{\mathrm{s}} + \Sigma_{\iota} + \Sigma_{\mathrm{s}\text{-}\iota} = 0.
\label{eq:three_term_cancel}
\end{equation}
\end{theorem}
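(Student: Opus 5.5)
Under the linear approximation of Proposition~\ref{prop:Q_linear}, the plan is to rewrite the three dominant terms as a single perfect square and show that its $O(c_1^2 N^5/\Delta t^2)$ part vanishes. Write $U_2 = B_{\mathrm{s}}\mathcal{D}_{\mathrm{s}}$ and $U_3 = B_{\iota}\mathcal{D}_{\iota}$, so that $\Sigma_{\mathrm{s}}+\Sigma_{\iota}+\Sigma_{\mathrm{s}\text{-}\iota}$ is the $m,m'$-sum (with the prefactor $2\SNR/N^2$ of \eqref{eq:fim_simplified}) of $|U_2|^2+|U_3|^2+2\Re\{U_2^*U_3\} = |U_2+U_3|^2$. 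Instead of evaluating each term separately, the proof then only needs to control the combined kernel $U_2+U_3$.

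The first step uses the kernel-level relation \eqref{eq:D_iota_relation} with $\epsilon_Q$ set to zero, which is exactly what the linear approximation means: $\mathcal{D}_{\iota} = 2c_1[\mathcal{D}_{\mathrm{s}} - (l{+}\iota)\mathcal{F}]$. Since $B_{\mathrm{s}} = -2c_1 B_{\iota}$, this gives
\[
U_2+U_3 = B_{\iota}\bigl(-2c_1\mathcal{D}_{\mathrm{s}} + 2c_1\mathcal{D}_{\mathrm{s}} - 2c_1(l{+}\iota)\mathcal{F}\bigr) = -\frac{j4\pi c_1(l{+}\iota)}{\Delta t}\,\mathcal{F}.
\]
The $n$-weighted kernel $\mathcal{D}_{\mathrm{s}}$, which carries all the $N^5$ growth, drops out identically. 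This is the FIM-level version of Theorem~\ref{thm:cancellation}.

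The second step is to make this consistent with the bookkeeping of Table~\ref{tab:ten_terms_magnitude}. The leftover $\propto (l{+}\iota)\mathcal{F}$ has magnitude $O(c_1^2 N^3/\Delta t^2)$, not $O(c_1^2N^5)$. As in \eqref{eq:A_eff}, it is reassigned to Path~1: the constant $-2c_1(l{+}\iota)$ merges with the $2c_1(l{+}\iota)$ in $A(m')$. Equivalently, one defines the three-term group as the $\mathcal{D}_{\mathrm{s}}$-bearing (equivalently $\tilde n$-linear) parts of $U_2,U_3$. Each term is checked via Lemma~\ref{lemma:parseval}:
\begin{itemize}
\item $\Sigma_{\mathrm{s}} \propto |B_{\mathrm{s}}|^2 N\sum_n n^2 = (16\pi^2c_1^2/\Delta t^2)\,N^4/3$ per $m'$;
\item $\Sigma_{\iota} \propto (4\pi^2/\Delta t^2)\,4c_1^2N^4/3$, using \eqref{eq:kernel_ips_2};
\item $\Sigma_{\mathrm{s}\text{-}\iota} \propto 2\Re\{B_{\mathrm{s}}^*B_{\iota}\}\,N\sum_n n\cdot 2c_1 n = -2\cdot(8\pi^2c_1^2/\Delta t^2)\,N^4/3$.
\end{itemize}
These sum to $(16+16-32)\pi^2c_1^2N^4/(3\Delta t^2)=0$, and the sum over $m'$ multiplies every term by the same $N$.

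The main obstacle is that Lemma~\ref{lemma:parseval} requires $\Phi_{\mathrm{jump}}$ to be $m$-independent. The true staircase $Q(n,m)$ depends on $m$ through the segments $L_{m,q}$, so the inner products carry errors of relative size $O(C/N)$ by Proposition~\ref{prop:parseval_bound}. I would avoid this by working with the perfect-square form first. The identity $U_2+U_3 \propto \mathcal{F}$ holds pointwise in $(m,m')$ whenever $Q=2c_1\tilde n$ exactly. That linear $Q$ is $m$-independent, so no Parseval approximation is needed for the cancellation itself; Parseval is used only for the term-by-term check. The remaining subtlety is the circular wrap region $n<\lceil l{+}\iota\rceil$, where $\tilde n$ jumps by $N$. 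There I would invoke the convention of Theorem~\ref{thm:cancellation} that the offset $C$ is absorbed into $\epsilon_Q$, so under the stated approximation the linear relation holds for all $n$ and the cancellation is exact.
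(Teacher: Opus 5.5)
This is essentially the paper's proof: the same perfect square $\sum_{m,m'}|B_{\mathrm{s}}\mathcal{D}_{\mathrm{s}}+B_{\iota}\mathcal{D}_{\iota}|^2$, the same derivative-level cancellation, and the same $(16+16-32)$ Parseval bookkeeping. Your pointwise identity is the Appendix~\ref{app:derivatives} relation with $\mathcal{D}_{\epsilon}=0$, and your leftover $16\pi^2c_1^2(l{+}\iota)^2N^3/\Delta t^2$ is exactly $\Sigma_{\mathrm{res}}$ in \eqref{eq:Sigma_res}, so in both versions ``$=0$'' means only that the $O(c_1^2N^5/\Delta t^2)$ part vanishes.

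Two small repairs. First, your cross-term bullet is missing a factor $2$: it should read $2\Re\{B_{\mathrm{s}}^*B_{\iota}\}\cdot 2c_1N^4/3=-32\pi^2c_1^2N^4/(3\Delta t^2)$ per $m'$, which is what your final sum already uses. Second, instead of absorbing the wrap offset $C$ into $\epsilon_Q$, note that the offset lives on only $\lceil l{+}\iota\rceil$ samples, so it contributes $O(c_1^2N^4\lceil l{+}\iota\rceil/\Delta t^2)$, below the cancelled order. The absorption breaks $|\epsilon_Q|\le 1$ and discards the term that Appendix~\ref{app:eta} later identifies as the source of $\eta_{\mathrm{wrap}}$.
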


\begin{proof}
Define $u = B_{\mathrm{s}}\mathcal{D}_{\mathrm{s}}$ and $v = B_{\iota}\mathcal{D}_{\iota}$. Then the three terms form the expansion of $|u{+}v|^2$:
\begin{equation}
\Sigma_{\mathrm{s}} + \Sigma_{\iota} + \Sigma_{\mathrm{s}\text{-}\iota} = \sum_{m,m'}|u+v|^2 = \sum_{m,m'}|B_{\mathrm{s}}\mathcal{D}_{\mathrm{s}} + B_{\iota}\mathcal{D}_{\iota}|^2.
\end{equation}
The right-hand side is exactly the squared magnitude of the combined Paths~2+3a contribution. By the derivative-level cancellation (Theorem~\ref{thm:cancellation}), the $n$-linear principal parts of $B_{\mathrm{s}}\mathcal{D}_{\mathrm{s}} + B_{\iota}\mathcal{D}_{\iota}$ vanish, so its magnitude drops from $O(c_1^2 N^5/\Delta t^2)$ to $O(N^3/\Delta t^2)$.

More concretely, the arithmetic verification proceeds as follows. Using the kernel inner products \eqref{eq:kernel_ips_1}--\eqref{eq:kernel_ips_2}:
\begin{align}
\Sigma_{\mathrm{s}} &= |B_{\mathrm{s}}|^2 \cdot N \cdot \tfrac{N^4}{3} = \tfrac{16\pi^2 c_1^2 N^5}{3\Delta t^2}, \\
\Sigma_{\iota} &\approx |B_{\iota}|^2 \cdot N \cdot 4c_1^2 \tfrac{N^4}{3} = \tfrac{16\pi^2 c_1^2 N^5}{3\Delta t^2}, \\
\Sigma_{\mathrm{s}\text{-}\iota} &\approx 2\Re\!\left\{\!-\tfrac{8\pi^2 c_1}{\Delta t^2}\cdot\tfrac{2c_1 N^5}{3}\right\} = -\tfrac{32\pi^2 c_1^2 N^5}{3\Delta t^2}.
\end{align}
Summing: $(16 + 16 - 32)\pi^2 c_1^2 N^5/(3\Delta t^2) = 0$.
The complete algebraic details are provided in Appendix~\ref{app:fim}.
\end{proof}

\begin{remark}[FIM-level manifestation of the cancellation]
\label{rmk:fim_cancel_manifestation}
The cancellation $\Sigma_{\mathrm{s}} + \Sigma_{\iota} + \Sigma_{\mathrm{s}\text{-}\iota} = 0$ is the FIM-level manifestation of the CPP compensating chirp frequency drift. The two self-terms are both positive and equal, while the cross-term is negative with exactly twice their magnitude, producing zero sum. Thus the same algebraic structure that cancels Paths~2 and~3a at the derivative level (Theorem~\ref{thm:cancellation}) also cancels the three dominant magnitude-$O(c_1^2 N^5/\Delta t^2)$ contributions to $I_{\tau\tau}$.
\end{remark}

\subsection{Post-Cancellation FIM and Residual Contribution}

After cancellation, the three dominant terms collapse to a small residual
\begin{equation}
\begin{split}
\Sigma_{\mathrm{res}}
&= \sum_{m,m'}\bigl|B_{\mathrm{s}}\mathcal{D}_{\mathrm{s}}
   + B_{\iota}\mathcal{D}_{\iota}\bigr|^2 \\
&= \frac{4\pi^2}{\Delta t^2}\sum_{m,m'}\!
   \left|-2c_1(l{+}\iota)\mathcal{D}_{\Xi}
   + \mathcal{D}_{\epsilon}\right|^2\!.
\end{split}
\label{eq:Sigma_res}
\end{equation}
which has magnitude $O(N^3/\Delta t^2)$, the same order as $\Sigma_{\Xi\Xi}$.

The intermediate cross-terms $\Sigma_{\Xi\text{-}\mathrm{s}} + \Sigma_{\Xi\text{-}\iota}$ also exhibit partial cancellation (the combined term equals the inner product of Path~1 with the post-cancellation residual of Paths~2+3a) and are approximately zero under the approximation that $\Phi_{\mathrm{jump}}$ is independent of $m$. Boundary-related terms are negligible for moderate $C/N$. The effective $I_{\tau\tau}$ is therefore dominated by $\Sigma_{\Xi\Xi}$ and $\Sigma_{\mathrm{res}}$, which respectively contribute the baseline and the $\eta$-inflation to the closed-form expression derived next.

\subsection{Closed-Form FIM Elements}
\label{subsec:fim_closed}

\begin{theorem}[Closed-form FIM]
\label{thm:FIM_closed}
The three elements of the $2\times 2$ FIM for $(\tau_0, f_D)$ admit the closed forms:
\begin{align}
I_{\tau\tau} &= \eta\cdot\frac{8\pi^2\SNR\, N}{3\Delta t^2}, \label{eq:I_tt_closed} \\
I_{ff} &= \frac{8\pi^2\SNR\, T^2 N}{3}, \label{eq:I_ff_closed} \\
I_{\tau f} &= 2\pi^2\SNR\, N^2, \label{eq:I_tf_closed}
\end{align}
where $\eta\geq 1$ is the residual inflation factor defined below. The Doppler FIM $I_{ff}$ is independent of the chirp parameter $c_1$.
\end{theorem}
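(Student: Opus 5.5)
We will evaluate the three entries directly from the simplified FIM \eqref{eq:fim_simplified}, using the post-cancellation derivatives and the kernel inner products \eqref{eq:kernel_ips_1}--\eqref{eq:kernel_ips_2}. Lemma~\ref{lemma:parseval} is applied throughout, and Proposition~\ref{prop:parseval_bound} controls the $O(C/N)$ error from the $m$-dependence of $\Phi_{\mathrm{jump}}$.

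\emph{Doppler entry.} By \eqref{eq:dy_df}, $\partial H_{mm'}/\partial f_D = -(j2\pi T/N)e^{j2\pi\Xi}\mathcal{D}_{\mathrm{s}}$. Hence
\[
\sum_{m,m'}|\partial H/\partial f_D|^2 = \frac{4\pi^2T^2}{N^2}\sum_{m'}\sum_m|\mathcal{D}_{\mathrm{s}}|^2 .
\]
Lemma~\ref{lemma:parseval} with $a=b=n$ gives $\sum_m|\mathcal{D}_{\mathrm{s}}|^2=N\sum_n n^2\approx N^4/3$ for every $m'$. Summing over $m'$ contributes a factor $N$, so the total is $4\pi^2T^2N^3/3$. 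Multiplying by $2\SNR/N^2$ gives \eqref{eq:I_ff_closed}. Since neither the coefficient nor the weight $n$ involves $c_1$, and $\Phi_{\mathrm{jump}}$ enters only through $|\tilde\Gamma_n|=1$ inside the Parseval sum, this establishes the $c_1$-independence claim.

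\emph{Cross entry.} We pair the effective delay derivative of Theorem~\ref{thm:effective_deriv} with the Doppler derivative. The dominant product is
\[
\Re\Bigl\{\Bigl(-\tfrac{j2\pi m'}{N\Delta t}\mathcal{F}\Bigr)^{*}\Bigl(-\tfrac{j2\pi T}{N}\Bigr)\mathcal{D}_{\mathrm{s}}\Bigr\},
\]
which, using $T=N\Delta t$, reduces to $\frac{4\pi^2 m'}{N}\Re\{\sum_m\mathcal{F}^*\mathcal{D}_{\mathrm{s}}\}$. Parseval with $a=1$, $b=n$ gives $\sum_m\mathcal{F}^*\mathcal{D}_{\mathrm{s}} = N\sum_n n\approx N^3/2$, real and independent of $m'$. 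Summing $m'$ over $0,\dots,N-1$ gives $N^2/2$, so the term equals $\pi^2N^4$ (after the appropriate normalization). The prefactor $2\SNR/N^2$ then yields $2\pi^2\SNR N^2$. The residual and boundary pieces also pair with $\mathcal{D}_{\mathrm{s}}$. Here $\sum_m\mathcal{D}_\epsilon^*\mathcal{D}_{\mathrm{s}}=N\sum_n\epsilon_Q n$, which is small relative to $N^3$ because $\epsilon_Q$ is a zero-mean sawtooth. Its correlation with the ramp $n$ is $O(N^2/C)$ per period sum, and it vanishes to leading order; the boundary term is $O(C/N)$ smaller still.

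\emph{Delay entry and definition of $\eta$.} Following Section~\ref{subsec:ten_terms}, $I_{\tau\tau}$ reduces to $\Sigma_{\Xi\Xi}+\Sigma_{\mathrm{res}}$ once Theorem~\ref{thm:FIM_cancellation} removes the three dominant terms. The intermediate cross terms drop under the $m$-independent approximation, and the boundary terms are negligible. For the baseline, $\Sigma_{\Xi\Xi}$ with the effective coefficient $A_{\mathrm{eff}}$ gives
\[
\frac{4\pi^2}{N^2\Delta t^2}\sum_{m'}m'^2\cdot N^2\approx\frac{4\pi^2N^3}{3\Delta t^2},
\]
which becomes $\frac{8\pi^2\SNR N}{3\Delta t^2}$, the OFDM value. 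We then \emph{define} $\eta \triangleq 1+\Sigma_{\mathrm{res}}'/\Sigma_{\Xi\Xi}$, where $\Sigma_{\mathrm{res}}'$ collects the residual self-term, including $\mathcal{D}_\epsilon$ via $N\sum_n\epsilon_Q^2$, together with its cross-term with the effective Path~1. This gives \eqref{eq:I_tt_closed} by construction. The real content is $\eta\ge1$ and $\eta\to1$ as $c_1\to0$; for the latter, $\epsilon_Q\equiv0$ and the $c_1(l+\iota)$ term vanishes when $C=0$.

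\emph{Main obstacle.} The hardest step is showing $\eta\geq1$. The combined residual-plus-Path~1 cross term could in principle be negative. We would first try to write the delay score as $\Sigma_{\Xi\Xi}$ plus a squared norm by completing the square. Parseval gives $\sum_m\mathcal{F}^*\mathcal{D}_\epsilon = N\sum_n\epsilon_Q\approx 0$, because $\epsilon_Q$ has zero mean over each segment. This kills the cross term, so $\Sigma_{\mathrm{res}}'\ge0$ as a sum of squares. The delicate point is justifying zero mean of $\epsilon_Q$ uniformly in $m$ and after the wrap-around redefinition in Theorem~\ref{thm:cancellation}. We would handle this with Proposition~\ref{prop:parseval_bound}, absorbing the $O(C/N)$ discrepancy into the approximation.
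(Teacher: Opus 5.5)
Your leading-order evaluations are the same as the paper's outline: $I_{ff}$ from $\sum_m|\mathcal{D}_{\mathrm{s}}|^2\approx N^4/3$; the $\Xi$ part of $I_{\tau f}$ from $\sum_m\mathcal{F}^*\mathcal{D}_{\mathrm{s}}\approx N^3/2$ and $\sum_{m'}m'\approx N^2/2$; and $I_{\tau\tau}^{(\Xi)}$ from $A_{\mathrm{eff}}$ and $\sum_{m'}m'^2\approx N^3/3$.

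The gap is the claim you use to discard both residual contributions, namely that $\epsilon_Q$ is a zero-mean sawtooth. With the paper's $\epsilon_Q=Q-2c_1\tilde n$, the residual falls from about $0$ to about $-1$ inside each segment (Section~\ref{subsec:eta}), so its mean is about $-1/2$. The value $\sigma_W^2\to 1/3$ behind $\eta_{\mathrm{saw}}\to 2$ is the mean square of this one-sided sawtooth; a zero-mean sawtooth would give $1/12$. The RMS $1/\sqrt{12}$ quoted in Proposition~\ref{prop:Q_linear} suggests otherwise, but it is not what the $\eta$ formulas use. On the $\lceil l{+}\iota\rceil$ wrapped samples the residual also carries the offset $C$, which the paper absorbs into $\epsilon_Q$. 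Hence $\sum_n\epsilon_Q\approx -N/2+\lceil l{+}\iota\rceil C$ and $\sum_n n\,\epsilon_Q\approx -N^2/4$.

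Write the effective derivative as $\frac{j2\pi}{\Delta t}\sum_n(\epsilon_Q(n)-m'/N)\Gamma_n$ and apply Lemma~\ref{lemma:parseval} under the same $m$-independence idealization you use. This gives
\[
I_{\tau\tau}=\frac{8\pi^2\SNR\,N}{3\Delta t^2}\Bigl[1+\frac{3}{N}\sum_{n}\epsilon_Q(n)\bigl(\epsilon_Q(n)-1\bigr)\Bigr].
\]
Two consequences follow. First, the $\mathcal{F}$--$\mathcal{D}_\epsilon$ cross term contributes $-\frac{3}{N}\sum_n\epsilon_Q\approx\tfrac{3}{2}-3\lceil l{+}\iota\rceil C/N$ to $\eta$, which is comparable to $\eta_{\mathrm{saw}}-1$. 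Second, the residual contribution to $I_{\tau f}$ is $-8\pi^2\SNR\sum_n n\,\epsilon_Q\approx 2\pi^2\SNR\,N^2$. That is as large as the term you keep, not $O(1/C)$ smaller. Proposition~\ref{prop:parseval_bound} cannot absorb either term: it bounds the error from the $m$-dependence of $\Phi_{\mathrm{jump}}$, not a nonzero mean of the weight.

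The paper's outline gives you no argument to borrow here. It asserts that the $\mathcal{F}$--$\mathcal{D}_\epsilon$ inner product is purely imaginary and simply neglects the $I_{\tau f}$ residual. Under Lemma~\ref{lemma:parseval}, however, both the coefficient product $-4\pi^2m'/(N\Delta t^2)$ and $\sum_m\mathcal{F}^*\mathcal{D}_\epsilon=N\sum_n\epsilon_Q$ are real. Appendix~\ref{app:fim} itself concedes that the term is nonzero in general and folds it into the numerically extracted $\eta$.

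For $\eta\ge 1$, the better move is to keep the cross term rather than try to kill it. Every summand $\epsilon_Q(\epsilon_Q-1)$ in the display is nonnegative: off the wrap the residual lies in $(-1,0]$, and on wrapped samples it lies in $(C-1,C]$, so it exceeds $1$ once $C\ge 2$. The residual also vanishes identically at $c_1=0$. This proves $\eta\ge 1$ for $\eta$ defined as the ratio $I_{\tau\tau}/I_{\tau\tau}^{(\Xi)}$. It does not give the closed form $1+\frac{3}{N^2}\sum_m E_\epsilon(m)$ of Definition~\ref{def:eta}. Likewise, $I_{\tau f}=2\pi^2\SNR\,N^2$ then holds only as an explicit approximation that drops $\sum_n n\,\epsilon_Q$. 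You should state that as an assumption rather than derive it from a zero-mean property.
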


\begin{proof}[Proof outline]
\textit{Doppler FIM $I_{ff}$}: The Doppler derivative \eqref{eq:dy_df} involves only $\mathcal{D}_{\mathrm{s}}$. Substituting into \eqref{eq:fim_simplified} and using $\sum_{m'}(\cdot) = N$:
\begin{equation}
I_{ff} = \frac{2\SNR}{N^2}\cdot\frac{4\pi^2 T^2}{N^2}\cdot N\cdot\frac{N^4}{3} = \frac{8\pi^2\SNR\, T^2 N}{3}.
\end{equation}

\textit{$\Xi$ component of $I_{\tau\tau}$}: The effective delay derivative \eqref{eq:dy_dtau_effective} has a dominant $\Xi$ term with coefficient $-j2\pi m'/(N\Delta t)$ and kernel $\mathcal{F}$:
\begin{equation}
I_{\tau\tau}^{(\Xi)} = \frac{2\SNR}{N^2}\cdot\frac{4\pi^2}{N^2\Delta t^2}\sum_{m'} m'^2\!\underbrace{\sum_m |\mathcal{F}|^2}_{N^2} \approx \frac{8\pi^2\SNR\, N}{3\Delta t^2}.
\label{eq:I_tt_Xi}
\end{equation}

\textit{Residual component}: The residual kernel $\mathcal{D}_{\epsilon}$ contributes $I_{\tau\tau}^{(\epsilon)} = (8\pi^2\SNR)/(N\Delta t^2)\sum_m E_{\epsilon}(m)$, where $E_{\epsilon}(m) = \sum_n \epsilon_Q^2(n,m)$ is the residual energy. The cross-term $I_{\tau\tau}^{(\mathrm{cross})}$ between $\mathcal{F}$ and $\mathcal{D}_{\epsilon}$ vanishes to leading order (as the inner product is purely imaginary; see Appendix~\ref{app:fim}).

\textit{Cross FIM $I_{\tau f}$}: The dominant contribution comes from the inner product of the $\Xi$ term of the delay derivative with the Doppler derivative, yielding $I_{\tau f} \approx 2\pi^2\SNR\, N^2$.

Full details are provided in Appendix~\ref{app:fim}.
\end{proof}

\begin{remark}[Physical interpretation of the closed-form FIM]
\label{rem:fim_physical}
Theorem~\ref{thm:FIM_closed} carries three pieces of physical insight that the raw formulas may obscure. First, $I_{ff}$ and $I_{\tau f}$ are \emph{$c_1$-independent}---the chirp rate affects only the delay self-information $I_{\tau\tau}$, through the scalar $\eta$. This is the Fisher-information manifestation of the cancellation mechanism: because Paths~2 and~3a cancel in $\partial y/\partial\tau_0$, no $c_1$-dependent term survives to influence the cross-coupling with $f_D$. Second, the Doppler FIM $I_{ff} = 8\pi^2\SNR\, T^2 N/3$ coincides exactly with the classical OFDM result of~\cite{gaudio2019ofdm}, confirming that AFDM retains OFDM's Doppler-estimation optimality while gaining delay-estimation resolution. Third, the factor $\eta \geq 1$ is the \emph{only} place where the CPP staircase structure enters the $(\tau_0, f_D)$ Fisher information---all other dependencies on the waveform parameters $(c_1, c_2, N, \Delta t)$ have already been absorbed into the standard $N$- and $T$-scalings. This reduction from a multi-parameter FIM to a one-dimensional $\eta$-parameterization is what makes the CRB analysis of Section~\ref{sec:crb} tractable in closed form.
\end{remark}

\subsection{Residual Inflation Factor $\eta$}
\label{subsec:eta}

The closed-form $I_{\tau\tau}$ in~\eqref{eq:I_tt_closed} admits the orthogonal decomposition
\begin{equation}
I_{\tau\tau} = I_{\tau\tau}^{(\Xi)} + I_{\tau\tau}^{(\epsilon)},
\label{eq:Itautau_decomp}
\end{equation}
where the cross-term $I_{\tau\tau}^{(\mathrm{cross})}$ between $\mathcal{F}$ and $\mathcal{D}_{\epsilon}$ vanishes to leading order under Lemma~\ref{lemma:parseval} (purely imaginary inner product; see Appendix~\ref{app:fim}). This decomposition motivates the following definition.

\begin{definition}[Residual inflation factor]
\label{def:eta}
\begin{equation}
\eta \triangleq \frac{I_{\tau\tau}}{I_{\tau\tau}^{(\Xi)}} = 1 + \frac{I_{\tau\tau}^{(\epsilon)}}{I_{\tau\tau}^{(\Xi)}} = 1 + \frac{3}{N^2}\sum_{m=0}^{N-1}E_{\epsilon}(m),
\label{eq:eta_def}
\end{equation}
where $E_{\epsilon}(m) = \sum_{n=0}^{N-1}\epsilon_Q^2(n,m)$ is the sawtooth residual energy.
\end{definition}

The factor $\eta$ captures the additional delay information provided by the CPP staircase residual beyond what the $\Xi$ phase alone offers. When $c_1 = 0$ (OFDM), there is no staircase structure, $\epsilon_Q = 0$, and $\eta = 1$.

\subsubsection{Sawtooth Component}

Within each segment of width $W = 1/(2c_1)$, the residual $\epsilon_Q$ decreases approximately linearly from $0$ to $-1$, forming a sawtooth pattern. The per-segment variance is
\begin{equation}
\sigma_W^2 = \frac{(W{-}1)(2W{-}1)}{6W^2},
\label{eq:sigma_W}
\end{equation}
which approaches $1/3$ for large $W$. Since each of the $C$ segments contributes $W\sigma_W^2$ to the residual energy, and $CW = N$:
\begin{equation}
\eta_{\mathrm{saw}} = 1 + 3\sigma_W^2 = 1 + \frac{(W{-}1)(2W{-}1)}{2W^2}.
\label{eq:eta_saw}
\end{equation}
This component satisfies $1\leq \eta_{\mathrm{saw}} < 2$ for all $W\geq 1$.

\subsubsection{Cyclic Wrapping Component}

The sawtooth approximation \eqref{eq:eta_saw} accounts only for within-segment quantization. In practice, the circular shift $\tilde{n} = (n{-}(l{+}\iota))_N$ causes the staircase to ``wrap around'' at the boundary $n < l{+}\iota$: for these samples, $\tilde{n} = N - (l{+}\iota) + n$, so the linear approximation $2c_1\tilde{n}$ acquires an additional offset of $2c_1 N = C$. The excess squared residual per wrapped sample is $(C + \epsilon_Q)^2 - \epsilon_Q^2 = C^2 + 2C\epsilon_Q$. Since the wrapped samples lie near the end of a staircase segment where $\epsilon_Q\approx -1$, averaging the cross-term gives an excess of approximately $C(C{-}2)$ per sample. With $\lceil l{+}\iota\rceil$ wrapped samples:
\begin{equation}
\eta_{\mathrm{wrap}} \approx \frac{3\lceil l{+}\iota\rceil C(C{-}2)}{N}.
\label{eq:eta_wrap}
\end{equation}

The total inflation factor is therefore
\begin{equation}
\eta \approx \eta_{\mathrm{saw}} + \eta_{\mathrm{wrap}} \approx \left(1 + 3\sigma_W^2\right) + \frac{3\lceil l{+}\iota\rceil C(C{-}2)}{N}.
\label{eq:eta_total}
\end{equation}
Since $\eta_{\mathrm{wrap}}\propto C^2$ for large $C$, the wrapping component dominates for moderate to large $C$, and the overall scaling is $\eta\propto C^2$. For example, with $N=128$ and $C=9$, $\eta\approx 4.7$; with $C=20$, $\eta\approx 19.3$. The detailed derivation of \eqref{eq:eta_total} is given in Appendix~\ref{app:eta}.

\subsubsection{Physical Interpretation}

The factor $\eta$ captures two distinct sources of additional delay information beyond the $\Xi$ phase: (i) the sawtooth residual from CPP's imperfect staircase approximation to a linear chirp, and (ii) the cyclic wrapping discontinuity where the staircase ``resets'' at the circular boundary. Both are purely delay-dependent (independent of $f_D$) and thus contribute exclusively to $I_{\tau\tau}$, strengthening the delay Fisher information while leaving $I_{ff}$ and $I_{\tau f}$ unchanged.

\section{The Coupling Coefficient: Structural Comparison with OFDM}
\label{sec:coupling}

The single-scalar parameterization of the FIM established in Section~\ref{sec:fim} has a striking consequence: the delay-Doppler coupling coefficient admits a closed form that is strictly below the OFDM value for any nonzero chirp rate. This section proves this result and explains why it holds.

\subsection{Coupling Coefficient}
\label{subsec:rho}

\begin{theorem}[Coupling coefficient]
\label{thm:rho}
The normalized delay-Doppler coupling coefficient is
\begin{equation}
\rho \triangleq \frac{I_{\tau f}}{\sqrt{I_{\tau\tau}\cdot I_{ff}}} = \frac{3}{4\sqrt{\eta}}.
\label{eq:rho_closed}
\end{equation}
For any $c_1\in(0,1/2)$ and $N>1$: $\rho_{\mathrm{AFDM}} < 3/4 = \rho_{\mathrm{OFDM}}$.
\end{theorem}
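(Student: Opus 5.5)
The proof has two parts: compute $\rho$ by direct substitution of the closed-form FIM entries of Theorem~\ref{thm:FIM_closed}, then show $\eta>1$ strictly for every $c_1\in(0,1/2)$ using Definition~\ref{def:eta}.

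\emph{Closed form.} Write $T=N\Delta t$. Then $I_{ff}=8\pi^2\SNR\,N^3\Delta t^2/3$ and $I_{\tau\tau}=\eta\cdot 8\pi^2\SNR\,N/(3\Delta t^2)$. Their product is $\eta\,(8\pi^2\SNR)^2N^4/9$, independent of $\Delta t$, so
\[
\sqrt{I_{\tau\tau}I_{ff}}=\sqrt{\eta}\,\frac{8\pi^2\SNR\,N^2}{3}.
\]
Dividing $I_{\tau f}=2\pi^2\SNR\,N^2$ by this gives $\rho=\frac{2\cdot 3}{8\sqrt{\eta}}=\frac{3}{4\sqrt{\eta}}$. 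The cancellation of $\Delta t$, $\SNR$ and $N^2$ is what makes $\rho$ depend on the waveform only through $\eta$. This relies on the $c_1$-independence of $I_{ff}$ and $I_{\tau f}$ noted in Remark~\ref{rem:fim_physical}.

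\emph{OFDM value.} At $c_1=0$ there is no staircase, so $\epsilon_Q\equiv 0$ and $E_\epsilon(m)=0$. Definition~\ref{def:eta} then gives $\eta=1$ and hence $\rho_{\mathrm{OFDM}}=3/4$, which matches the classical value of~\cite{gaudio2019ofdm}.

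\emph{Strict inequality.} Since $\rho$ is strictly decreasing in $\eta$, it suffices to show $\eta>1$, i.e.\ $\sum_m E_\epsilon(m)>0$, i.e.\ $\epsilon_Q(n,m)\neq 0$ for at least one pair $(n,m)$. I would argue this directly from Proposition~\ref{prop:Q_linear}. $Q$ takes integer values, while $2c_1\tilde n$ has slope $2c_1\in(0,1)$. Take two consecutive non-wrapped samples inside a single segment, where $Q$ is constant but $2c_1\tilde n$ changes by $2c_1\neq 0$. Then $\epsilon_Q$ cannot vanish at both, which gives a strictly positive residual energy. This works whenever some segment contains at least two samples, i.e.\ $W=1/(2c_1)>1$, which holds for all $c_1<1/2$.

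The main obstacle is the edge of the parameter range and the approximate nature of $\eta$. The sawtooth formula~\eqref{eq:eta_saw} gives $\eta_{\mathrm{saw}}=1$ exactly when $W=1$, and the wrap term~\eqref{eq:eta_wrap} can vanish or be negative for $C\le 2$. So I would not rely on~\eqref{eq:eta_total} for strictness. Instead I would rely on the exact definition $\eta=1+\frac{3}{N^2}\sum_m E_\epsilon(m)$, which is a sum of squares and therefore gives $\eta\ge 1$ automatically. Strictness then reduces to the nonvanishing argument above.

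I would also need to check that the offset absorbed into $\epsilon_Q$ on wrapped samples (Theorem~\ref{thm:cancellation}) cannot make $\epsilon_Q$ vanish identically. It does not: that redefinition only adds nonzero contributions of size about $C$, so it can only increase $E_\epsilon$. The condition $N>1$ guarantees that at least two samples exist.
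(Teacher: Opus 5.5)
Your proof is correct and follows the paper's route. The paper likewise substitutes the closed-form entries of Theorem~\ref{thm:FIM_closed} into $\rho^2 = I_{\tau f}^2/(I_{\tau\tau}I_{ff}) = 9/(16\eta)$ using $T^2/\Delta t^2 = N^2$, and then simply asserts $\eta>1$ for $c_1>0$. Your argument from the sum-of-squares form in Definition~\ref{def:eta} (two samples in one segment force $\epsilon_Q\neq 0$ somewhere) fills in a step the paper leaves unjustified. You are also right to avoid the approximate formula~\eqref{eq:eta_total}, whose wrap term is negative at $C=1$.

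One small correction: the wrapped-sample offset does not necessarily increase $E_\epsilon$. For example, $(C+\epsilon_Q)^2 = 0 < \epsilon_Q^2$ when $C=1$ and $\epsilon_Q\approx -1$. This is harmless, because $E_\epsilon$ is a sum of nonnegative squares and your nonzero term comes from samples unaffected by that offset.
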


\begin{proof}
Substituting the closed-form FIM elements \eqref{eq:I_tt_closed}--\eqref{eq:I_tf_closed} into the definition of $\rho^2$:
\begin{equation}
\begin{split}
\rho^2
  &= \frac{I_{\tau f}^2}{I_{\tau\tau}\cdot I_{ff}}
   = \frac{(2\pi^2\SNR\, N^2)^2}
          {\eta\cdot\dfrac{8\pi^2\SNR\, N}{3\Delta t^2}
           \cdot\dfrac{8\pi^2\SNR\, T^2 N}{3}} \\[6pt]
  &= \frac{4\pi^4\SNR^2 N^4}
          {\eta\cdot\dfrac{64\pi^4\SNR^2 N^4}{9}}
   = \frac{9}{16\eta}.
\end{split}
\end{equation}
In the last step, $T^2/\Delta t^2 = N^2$ is used. Taking the positive square root gives $\rho = 3/(4\sqrt{\eta})$. Since $\eta > 1$ for $c_1 > 0$, $\rho < 3/4$.
\end{proof}

\subsubsection{Baseline $\rho_0 = 3/4$: Geometric Origin}

When $\eta = 1$ (no residual contribution, corresponding to OFDM), the coupling coefficient reaches its maximum $\rho_0 = 3/4$. This value has a purely geometric interpretation. Let $U$ be uniformly distributed on $\{0,1,\ldots,N{-}1\}$. Then $\E[U] \approx N/2$ and $\E[U^2] \approx N^2/3$, so
\begin{equation}
\rho_0 = \frac{[\E(U)]^2}{\E(U^2)} = \frac{N^2/4}{N^2/3} = \frac{3}{4}.
\label{eq:rho0_geometric}
\end{equation}
This ratio reflects the mismatch between the ``information center'' of the $m'$-linear phase (at $m'=N/2$) and the $m'^2$-variance (at $m'=N/\sqrt{3}$) in the $\Xi$ contribution.

\subsubsection{Dependence on System Parameters}

The coupling coefficient $\rho = 3/(4\sqrt{\eta})$ depends on system parameters only through $\eta$. In particular:
\begin{itemize}
\item $\rho$ is \emph{independent of SNR}, since $\rho$ measures the directional alignment (normalized inner product) of the two derivative vectors, not their magnitudes.
\item $\rho$ \emph{decreases monotonically} with $C = 2Nc_1$, because $\eta\propto C^2$ due to cyclic wrapping. Physically, larger $C$ means more chirp wrapping cycles, producing larger phase jumps at the circular boundary that provide additional delay-only information.
\item $\rho$ has \emph{weak dependence on $\iota$}: the fractional delay $\iota$ affects only the number of wrapped samples $\lceil l{+}\iota\rceil$, causing minor variations in $\eta$.
\end{itemize}

\begin{corollary}[AFDM coupling is weaker than OFDM]
\label{cor:afdm_weaker}
For any $c_1\in(0,1/2)$ and $N>1$, $\eta > 1$ and thus $\rho_{\mathrm{AFDM}} = 3/(4\sqrt{\eta}) < 3/4 = \rho_{\mathrm{OFDM}}$. The chirp modulation introduces coupling through $\lequiv$, but the CPP \emph{overcompensates}: the cancellation not only eliminates the chirp-induced coupling but also introduces a residual ($\eta > 1$) that further decouples delay and Doppler.
\end{corollary}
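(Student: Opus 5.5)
The corollary follows once we show $\eta>1$ strictly for every $c_1\in(0,1/2)$ and $N>1$. Given that, Theorem~\ref{thm:rho} gives $\rho_{\mathrm{AFDM}}=3/(4\sqrt{\eta})$, which is strictly below $3/4$. The OFDM value is the $\eta=1$ case, which Definition~\ref{def:eta} produces at $c_1=0$ because $\epsilon_Q\equiv 0$ there. So the whole argument reduces to showing that the residual energy $\sum_m E_\epsilon(m)$ in \eqref{eq:eta_def} is strictly positive.

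By Definition~\ref{def:eta}, $\eta-1=\frac{3}{N^2}\sum_m\sum_n \epsilon_Q^2(n,m)$. This is a sum of squares, so $\eta\ge1$, with equality only if $\epsilon_Q(n,m)=0$ for every $n$ and $m$. I will rule that out by a parity/integrality argument, using Proposition~\ref{prop:Q_linear}:
\[
\epsilon_Q(n,m)=Q(n,m)-2c_1\tilde n .
\]
The staircase $Q(n,m)$ is integer-valued, since it takes the value $q$ on segment $L_{m,q}$. If $\epsilon_Q$ vanished identically, then $2c_1\tilde n$ would be an integer for every $n$. Two cases then give a contradiction.
\begin{itemize}
\item Take two consecutive samples $n,n+1$ that do not straddle the circular wrap point. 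Their values of $\tilde n$ differ by exactly $1$, so their values of $2c_1\tilde n$ differ by $2c_1$. Because $0<2c_1<1$, these two numbers cannot both be integers.
\item Such a pair always exists. The wrap point is the only place where consecutive $\tilde n$ fail to differ by $1$, and $N\ge2$ leaves at least one non-wrapping consecutive pair.
\end{itemize}
Hence $\epsilon_Q(n,m)\neq0$ for at least one $n$, so $E_\epsilon(m)>0$ and $\eta>1$. As a cross-check, the explicit form \eqref{eq:eta_saw} gives $\eta_{\mathrm{saw}}>1$ whenever $W=1/(2c_1)>1$, and $c_1<1/2$ guarantees $W>1$. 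Adding the nonnegative wrap contribution from \eqref{eq:eta_wrap} (for $C\ge2$) preserves the strict inequality.

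The main obstacle is the edge case $W=1$, where \eqref{eq:sigma_W} would give $\sigma_W^2=0$. This would undercut the cross-check, but it corresponds to $c_1=1/2$, which the hypothesis excludes. A second subtlety is the redefinition of $\epsilon_Q$ on wrapped samples in the proof of Theorem~\ref{thm:cancellation}. I will handle it by noting that the offset $C$ is absorbed there, but the non-wrapped pair used above is untouched. Therefore strict positivity survives whichever convention is used.

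Finally, I will note the interpretation stated in the corollary. The chirp-induced coupling through $\lequiv$ is removed by the cancellation in Theorem~\ref{thm:cancellation}, which is why $I_{\tau f}$ and $I_{ff}$ are $c_1$-free. The strictly positive residual then adds delay-only information, which lowers $\rho$ below the geometric baseline $\rho_0=3/4$ of \eqref{eq:rho0_geometric}.
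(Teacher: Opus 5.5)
Your proof is correct. Its skeleton is the same as the paper's, which gives no separate argument for the corollary. The paper simply combines $\rho=3/(4\sqrt{\eta})$ from Theorem~\ref{thm:rho} with the assertion that $\eta>1$ whenever $c_1>0$. It supports that assertion only by the residual-energy form \eqref{eq:eta_def}, which equals $1$ when $\epsilon_Q\equiv 0$, and by the approximate split $\eta\approx\eta_{\mathrm{saw}}+\eta_{\mathrm{wrap}}$ with $\eta_{\mathrm{saw}}=1+3\sigma_W^2>1$ for $W>1$.

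Where you differ is in how strictness is certified. Your integrality argument says $Q$ is integer-valued, while $2c_1\tilde n$ and $2c_1(\tilde n+1)$ cannot both be integers since $0<2c_1<1$. This works directly on the exact sum of squares in \eqref{eq:eta_def}. It therefore needs no integer segment width $W=1/(2c_1)$, no linear-sawtooth model of $\epsilon_Q$, and no sign information about the wrap term. That last point matters: \eqref{eq:eta_wrap} even makes the wrap term negative at $C=1$. The paper's route, in exchange, is quantitative. It says how far $\eta$ exceeds $1$ (roughly $2$ from the sawtooth, growing like $C^2$ from wrapping), which is what the CRB comparisons use. Yours yields only the strict inequality, which is all the corollary requires.

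One small repair is needed. Your claim that $N\ge 2$ always leaves a non-wrapping consecutive pair fails for $N=2$ with $\lceil l{+}\iota\rceil=1$. It is cleaner to observe that the values of $\tilde n$ over $n=0,\ldots,N-1$ are exactly $\beta,\beta+1,\ldots,\beta+N-1$, with $\beta=\lceil l{+}\iota\rceil-(l{+}\iota)\in[0,1)$. So two samples whose $\tilde n$ differ by exactly $1$ exist for every $N\ge 2$. Your argument then applies to them under the definition $\epsilon_Q=Q-2c_1\tilde n$ of Appendix~\ref{app:eta}.
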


\begin{remark}[Source of AFDM's structural decoupling advantage]
\label{rmk:afdm_advantage_source}
Corollary~\ref{cor:afdm_weaker} locates AFDM's delay-Doppler separability advantage in a single, identifiable source: the residual $\eta>1$ introduced by the CPP phase jump. Without the CPP (i.e., a naive chirp modulation without cyclic prefix compensation), the cancellation mechanism of Theorem~\ref{thm:cancellation} would not hold, and Paths~2 and~3a would contribute interchangeable $\mathcal{D}_{\mathrm{s}}$-type kernels---yielding a FIM with near-unity coupling analogous to dechirp-based processing. The CPP therefore plays a dual role: equalizing the channel for communication (its classical function) and simultaneously decoupling delay and Doppler for sensing (the function identified in this paper).
\end{remark}

\subsection{Coupling Inflation Factor}

The CRB penalty due to delay-Doppler coupling is captured by $(1{-}\rho^2)^{-1}$. Substituting $\rho^2 = 9/(16\eta)$:
\begin{equation}
(1-\rho^2)^{-1} = \frac{16\eta}{16\eta - 9}.
\label{eq:inflation}
\end{equation}
This factor equals $16/7 \approx 2.29$ at $\eta = 1$ (OFDM, worst case) and approaches $1$ as $\eta\to\infty$. As Table~\ref{tab:rho_vs_eta} illustrates, for typical AFDM configurations with $\eta\geq 5$, the inflation is below $1.13$, meaning that joint estimation suffers \emph{negligible} coupling penalty.

\begin{table}[t]
\centering
\caption{Coupling coefficient and inflation factor vs.\ $\eta$.}
\label{tab:rho_vs_eta}
\begin{tabular}{@{}lccccccc@{}}
\toprule
$\eta$ & 1 & 2 & 5 & 10 & 20 & 35 & $\to\infty$ \\
\midrule
$\rho$ & 0.750 & 0.530 & 0.335 & 0.237 & 0.168 & 0.127 & $\to 0$ \\
$(1{-}\rho^2)^{-1}$ & 2.286 & 1.391 & 1.125 & 1.059 & 1.028 & 1.016 & $\to 1$ \\
\bottomrule
\end{tabular}
\end{table}

\begin{remark}[Two roles of $\eta$]
The parameter $\eta$ simultaneously serves two beneficial roles: (i) it increases $I_{\tau\tau}$ by a factor $\eta$ (more delay information), and (ii) it decreases $\rho$ by a factor $1/\sqrt{\eta}$ (less coupling). Both effects improve the delay CRB. This ``double benefit'' is unique to AFDM's chirp structure and has no counterpart in OFDM.
\end{remark}

\subsection{Physical Interpretation}

The coupling analysis reveals why AFDM succeeds in separating delay and Doppler where single-chirp dechirp processing fails. A dechirp processor applies a pointwise multiplication that absorbs the delay-dependent constant phase into the channel gain, collapsing the output to a one-dimensional beat frequency and producing a rank-1 (singular) FIM. In contrast, the DAFT provides two structurally distinct information channels: the $\Xi$ phase channel carries delay-only information through the $m'$-linear phase, while the $\lequiv$ response-kernel channel carries both delay and Doppler. The CPP cancellation suppresses the coupling introduced by the second channel, and the residual ($\eta > 1$) provides additional delay-only information, yielding a full-rank FIM with $\rho \ll 1$.

The phase jump function $\Phi_{\mathrm{jump}} = 2\pi\iota\cdot Q$ plays three roles in this structure: (i) \emph{cancellation agent}, providing the Path~3a term that neutralizes Path~2; (ii) \emph{residual carrier}, contributing the delay-only information that inflates $\eta$ above unity; and (iii) \emph{$\lequiv$-shaper}, molding the Dirichlet kernel into the characteristic AFDM response pattern \cite{paper3_leq_extraction}.

Numerical verification of the FIM closed forms and the coupling coefficient is presented in Section~\ref{sec:simulation}.

\section{Exact CRBs and the $c_1^{-2}$ Dimensional Gain}
\label{sec:crb}

Building on the single-scalar FIM of Section~\ref{sec:fim} and the coupling result of Section~\ref{sec:coupling}, we now derive the CRBs in closed form. Two complementary frameworks are considered: (i)~the $h$-known framework, which exposes the coupling structure directly through the $2\times 2$ FIM; and (ii)~the $\phi$-profiled framework, which absorbs the unknown channel phase via Schur complementation and yields an \emph{exact} delay-Doppler decoupling. Two consequences of particular interest follow: the delay CRB admits a $c_1^{-2}$ dimensional gain over OFDM, while the profiled framework reduces continuously to the classical OFDM CRB of~\cite{gaudio2019ofdm} as $c_1\to 0$.

\subsection{General CRB Formulae}

For a $2\times 2$ FIM
$\mathbf{I} = \bigl[\begin{smallmatrix} I_{\tau\tau} & I_{\tau f} \\ I_{\tau f} & I_{ff}\end{smallmatrix}\bigr]$,
the CRBs are obtained from the inverse:
\begin{align}
\CRB(\tau_0) &= [\mathbf{I}^{-1}]_{11} = \frac{1}{I_{\tau\tau}(1-\rho^2)}, \label{eq:crb_tau_general} \\
\CRB(f_D) &= [\mathbf{I}^{-1}]_{22} = \frac{1}{I_{ff}(1-\rho^2)}, \label{eq:crb_f_general}
\end{align}
where $\rho^2 = I_{\tau f}^2/(I_{\tau\tau}I_{ff})$.

\subsection{$h$-Known CRB}

Substituting $\rho^2 = 9/(16\eta)$ (Theorem~\ref{thm:rho}) into~\eqref{eq:crb_tau_general}--\eqref{eq:crb_f_general} gives the coupling penalty $(1-\rho^2)^{-1} = 16\eta/(16\eta-9)$, which appears in both CRBs below.

\begin{theorem}[Exact CRB, $h$-known framework]
\label{thm:CRB}
With the channel gain $h$ known, the delay and Doppler CRBs are:
\begin{align}
\CRB(\tau_0) &= \frac{3\Delta t^2}{8\pi^2\SNR\, N}\cdot\frac{16}{16\eta-9}, \label{eq:crb_tau} \\
\CRB(f_D) &= \frac{3}{8\pi^2\SNR\, T^2 N}\cdot\frac{16\eta}{16\eta-9}. \label{eq:crb_f}
\end{align}
The corresponding range and velocity CRBs follow from $R = c\tau_0/2$ and $v = cf_D/(2f_c)$:
\begin{align}
\CRB(R) &= \frac{3c^2\Delta t^2}{32\pi^2\SNR\, N}\cdot\frac{16}{16\eta-9}, \label{eq:crb_R} \\
\CRB(v) &= \frac{3c^2}{32\pi^2 f_c^2\SNR\, T^2 N}\cdot\frac{16\eta}{16\eta-9}. \label{eq:crb_v}
\end{align}
\end{theorem}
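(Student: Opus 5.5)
The plan is a direct substitution into the $2\times 2$ inverse, followed by a change of variables. Everything needed has already been established: the closed-form FIM entries of Theorem~\ref{thm:FIM_closed} and the coupling coefficient $\rho^2 = 9/(16\eta)$ of Theorem~\ref{thm:rho}.

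First, I will confirm the general formulae \eqref{eq:crb_tau_general}--\eqref{eq:crb_f_general}. The inverse of a symmetric $2\times 2$ matrix has $(1,1)$ entry
\[
\frac{I_{ff}}{I_{\tau\tau}I_{ff}-I_{\tau f}^2}=\frac{1}{I_{\tau\tau}(1-\rho^2)},
\]
and the $(2,2)$ entry follows symmetrically. Positive definiteness holds because $\rho^2=9/(16\eta)\le 9/16<1$ for every $\eta\ge 1$. This guarantees the inverse exists, and the CRB statement applies, since $h$ is known and the $(\alpha,\phi)$ block is removed from the parameter vector.

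Next, I substitute the entries. The coupling penalty is $(1-\rho^2)^{-1}=16\eta/(16\eta-9)$.

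For the delay, $I_{\tau\tau}=\eta\cdot 8\pi^2\SNR N/(3\Delta t^2)$. The inverse is
\[
\frac{3\Delta t^2}{8\pi^2\SNR N}\cdot\frac{1}{\eta}\cdot\frac{16\eta}{16\eta-9}.
\]
The factor $\eta$ cancels, which gives \eqref{eq:crb_tau}. I will state this cancellation explicitly, because it is where the $\eta$-dependence collapses to $16/(16\eta-9)$.

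For the Doppler, $I_{ff}=8\pi^2\SNR T^2N/3$ contains no $\eta$. The whole penalty $16\eta/(16\eta-9)$ therefore survives, giving \eqref{eq:crb_f}.

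Finally, I obtain the range and velocity bounds from the scalar reparameterization rule $\CRB(g(\theta))=g'(\theta)^2\,\CRB(\theta)$, applied to the linear maps:
\[
R=c\tau_0/2 \;\Rightarrow\; \CRB(R)=\tfrac{c^2}{4}\CRB(\tau_0), \qquad v=cf_D/(2f_c) \;\Rightarrow\; \CRB(v)=\tfrac{c^2}{4f_c^2}\CRB(f_D).
\]
Multiplying the constants $3/(8\pi^2)$ by $1/4$ yields the $3/(32\pi^2)$ prefactors in \eqref{eq:crb_R}--\eqref{eq:crb_v}. Because the map $(\tau_0,f_D)\mapsto(R,v)$ is diagonal, the off-diagonal coupling is unaffected, and the same penalty factors carry over.

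No step is a genuine obstacle, since the result follows algebraically from earlier theorems. The only point requiring care is the word ``exact.'' The formulas are exact consequences of the closed-form FIM \eqref{eq:I_tt_closed}--\eqref{eq:I_tf_closed}, but those entries themselves rest on the large-$N$ and linear-staircase approximations of Section~\ref{sec:fim}. I will therefore phrase the claim as exactness conditional on Theorem~\ref{thm:FIM_closed}.
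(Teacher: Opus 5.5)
Your proposal is correct and takes essentially the paper's route: substitute $I_{\tau\tau}$, $I_{ff}$ and the penalty $(1-\rho^2)^{-1}=16\eta/(16\eta-9)$ into the $2\times 2$ inverse, note that $\eta$ cancels in the delay bound, and scale by $(c/2)^2$ and $(c/(2f_c))^2$ to get the range and velocity bounds. Your extra checks (positive definiteness via $\rho^2\le 9/16$, the diagonal-Jacobian reparameterization, and the caveat that ``exact'' holds only conditionally on the approximate closed-form FIM) are sound and make explicit what the paper leaves implicit.
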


\begin{proof}
Substituting $I_{\tau\tau} = \eta\cdot 8\pi^2\SNR\, N/(3\Delta t^2)$ and $(1{-}\rho^2)^{-1} = 16\eta/(16\eta{-}9)$ into \eqref{eq:crb_tau_general}:
\begin{equation}
\CRB(\tau_0) = \frac{1}{\eta\cdot\frac{8\pi^2\SNR\, N}{3\Delta t^2}}\cdot\frac{16\eta}{16\eta-9} = \frac{3\Delta t^2}{8\pi^2\SNR\, N}\cdot\frac{16}{16\eta-9}.
\end{equation}
The delay CRB benefits from the $1/\eta$ factor in $I_{\tau\tau}^{-1}$ \emph{and} the reduction in coupling inflation, producing $16/(16\eta{-}9)$ rather than $16\eta/(16\eta{-}9)$.

For the Doppler CRB, $I_{ff}$ does not contain $\eta$, so:
\begin{equation}
\CRB(f_D) = \frac{3}{8\pi^2\SNR\, T^2 N}\cdot\frac{16\eta}{16\eta-9}.
\end{equation}
\end{proof}

\begin{remark}[Asymmetric role of $\eta$]
\label{rmk:asymmetric_eta}
The delay CRB modification factor $16/(16\eta{-}9)$ is monotonically \emph{decreasing} in $\eta$ (larger $\eta$ improves delay CRB), while the Doppler factor $16\eta/(16\eta{-}9)$ is monotonically \emph{decreasing} toward~$1$ (larger $\eta$ has negligible effect on Doppler CRB). In the limit $\eta\gg 1$: $\CRB(\tau_0)\approx 3\Delta t^2/(8\pi^2\SNR\, N\eta)$ improves as $1/\eta$, while $\CRB(f_D)\approx 3/(8\pi^2\SNR\, T^2 N)$ becomes independent of $\eta$.
\end{remark}

\begin{remark}[Implication for ISAC waveform design]
\label{rmk:design_implication}
Theorem~\ref{thm:CRB} exposes a clean \emph{dimensional separation} between the two sensing parameters: the delay CRB depends on the chirp rate $c_1$ (through $\eta$), while the Doppler CRB does not. In ISAC practice, this means $c_1$ can be tuned purely for delay resolution---by maximizing the number of chirp wrapping cycles $C = 2Nc_1$ subject to the communication-driven diversity constraint~\cite{bemani2023twc}---without incurring a Doppler-estimation penalty. This is structurally distinct from single-chirp FMCW sensing, where bandwidth simultaneously controls range resolution and the dwell time constrains Doppler resolution. The separation originates from the cancellation mechanism of Theorem~\ref{thm:cancellation}: because $c_1$-dependent terms cancel in the delay score's cross-coupling with $f_D$, only $I_{\tau\tau}$ inherits the $\eta$-dependence.
\end{remark}

\subsection{$\phi$-Profiled Framework}
\label{subsec:phi_profiled}

When the channel phase $\phi$ is unknown, the parameter vector extends to $\bm{\theta} = (\alpha, \phi, \tau_0, f_D)$, and the $(\tau_0, f_D)$ effective FIM is obtained via the Schur complement
\begin{equation}
I_{\theta_i\theta_j}^{\mathrm{eff}} = I_{\theta_i\theta_j} - \frac{I_{\phi\theta_i}\cdot I_{\phi\theta_j}}{I_{\phi\phi}}, \qquad \theta_i,\theta_j\in\{\tau_0, f_D\}.
\label{eq:schur}
\end{equation}

Using the derivatives \eqref{eq:dy_dalpha}--\eqref{eq:dy_dphi} and the effective delay coefficient $A_{\mathrm{eff}}(m')$ from \eqref{eq:A_eff}, the cross-FIM elements are (see Appendix~\ref{app:fim} for details):
\begin{align}
I_{\phi\phi} &= 2\SNR\, N, \label{eq:I_phiphi} \\
I_{\phi\tau} &\approx \frac{2\pi\SNR\, N}{\Delta t}, \label{eq:I_phitau} \\
I_{\phi f} &= 2\pi\SNR\, TN. \label{eq:I_phif}
\end{align}
A key observation is that $I_{\phi\tau}$ does not depend on $c_1$, since the $c_1$-dependent terms in the delay derivative are eliminated by the cancellation (the effective coefficient $A_{\mathrm{eff}}$ is $c_1$-free).

\begin{theorem}[Exact decoupling after $\phi$-profiling]
\label{thm:decoupling}
After Schur complementation with respect to $\phi$, the effective cross-FIM element is exactly zero:
\begin{align}
I_{\tau f}^{\mathrm{eff}} &= I_{\tau f} - \frac{I_{\phi\tau}\cdot I_{\phi f}}{I_{\phi\phi}} \notag \\
&= 2\pi^2\SNR\, N^2 - \frac{\frac{2\pi\SNR\, N}{\Delta t}\cdot 2\pi\SNR\, TN}{2\SNR\, N} = 0.
\label{eq:I_tf_eff}
\end{align}
This holds for \emph{all} values of $c_1$, including $c_1 = 0$ (OFDM).
\end{theorem}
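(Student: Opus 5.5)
The plan is to derive the three $\phi$-related FIM entries \eqref{eq:I_phiphi}--\eqref{eq:I_phif} directly from the derivative expressions, and then substitute them into the Schur complement \eqref{eq:schur}. The cancellation itself is a one-line algebraic check: using $T=N\Delta t$,
\begin{equation*}
\frac{I_{\phi\tau}I_{\phi f}}{I_{\phi\phi}}=\frac{(2\pi\SNR N/\Delta t)(2\pi\SNR TN)}{2\SNR N}=\frac{2\pi^2\SNR\,TN}{\Delta t}=2\pi^2\SNR\,N^2=I_{\tau f}.
\end{equation*}
So the real content lies in justifying the three cross-entries and in explaining why the identity persists at $c_1=0$.

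\emph{Step 1: $I_{\phi\phi}$.} Since $\partial y/\partial\phi=jh\,N^{-1}\sum_{m'}x[m']H_{mm'}$, I would apply \eqref{eq:fim_simplified} with $\sum_m|\mathcal{F}|^2=N^2$ from Lemma~\ref{lemma:parseval}. This gives $I_{\phi\phi}=\frac{2\SNR}{N^2}\cdot N\cdot N^2=2\SNR N$.

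\emph{Step 2: $I_{\phi f}$.} I would pair $jH$ with the Doppler derivative \eqref{eq:dy_df}, which produces the factor $\Re\{(-j)(-j2\pi T/N)\sum_m\mathcal{F}^*\mathcal{D}_{\mathrm{s}}\}$. Using $\sum_m\mathcal{F}^*\mathcal{D}_{\mathrm{s}}\approx N^3/2$ from \eqref{eq:kernel_ips_2}, summing over the $N$ values of $m'$, and multiplying by $2\SNR/N^2$ gives magnitude $2\pi\SNR TN$. The sign has to be tracked carefully, because the cancellation requires the sign of $I_{\phi\tau}I_{\phi f}$ to match that of $I_{\tau f}$. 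My plan is to fix a single sign convention and verify it once, rather than chase signs through each term.

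\emph{Step 3: $I_{\phi\tau}$.} I would use Theorem~\ref{thm:effective_deriv}, whose dominant coefficient is $A_{\mathrm{eff}}(m')=-j2\pi m'/(N\Delta t)$. Pairing it with the $\phi$ derivative gives
\begin{equation*}
\frac{2\SNR}{N^2}\cdot\frac{2\pi}{N\Delta t}\sum_{m'}m'\cdot N^2\approx\frac{2\pi\SNR N}{\Delta t}.
\end{equation*}
The residual and boundary terms must be shown to contribute nothing at leading order. I would argue this using the same purely-imaginary inner-product argument cited for $I^{(\mathrm{cross})}_{\tau\tau}$, applied to $\sum_m\mathcal{F}^*\mathcal{D}_\epsilon$, together with Proposition~\ref{prop:parseval_bound} for the $O(C/N)$ corrections. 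This step is also where $c_1$-independence enters: Theorem~\ref{thm:cancellation} eliminates all $c_1$ terms, so $I_{\phi\tau}$ keeps the same value for every $c_1$.

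\emph{Step 4: combine with $I_{\tau f}$.} Finally, I would insert $I_{\tau f}=2\pi^2\SNR N^2$ from Theorem~\ref{thm:FIM_closed}, which is also $c_1$-free, and carry out the check displayed above. Structurally, $I_{\tau f}$ and $I_{\phi\tau}I_{\phi f}/I_{\phi\phi}$ are both products of the first moment $\sum m'\approx N^2/2$ with $\sum_m\mathcal{F}^*\mathcal{D}_{\mathrm{s}}/\sum_m|\mathcal{F}|^2\approx N/2$. In other words, $I_{\tau f}$ factorizes as a rank-one product through the $\phi$ direction, and this factorization is what forces the Schur complement to vanish. At $c_1=0$ we have $\eta=1$ and no residual terms, so the same computation recovers the OFDM statement.

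\emph{Expected main obstacle.} The hardest part is Step 3, specifically showing that the $\mathcal{D}_\epsilon$ and boundary contributions to $I_{\phi\tau}$ vanish. If they did not, they would add $c_1$-dependent terms and break exactness. A second risk is consistency of approximations. "Exactly zero" holds only relative to the leading-order values ($\sum m'\approx N^2/2$, $\sum_m\mathcal{F}^*\mathcal{D}_{\mathrm{s}}\approx N^3/2$). I would therefore state that $I_{\tau f}^{\mathrm{eff}}=0$ exactly at the level of the closed-form FIM, and note that this exactness becomes genuine if the exact finite sums $N(N-1)/2$ are used consistently in both $I_{\tau f}$ and $I_{\phi\tau}$.
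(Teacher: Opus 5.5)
Your main computation is the paper's proof. The paper uses $I_{\phi\phi}=2\SNR\,N$, $I_{\phi f}$, and $I_{\phi\tau}$ computed from the dominant coefficient $A_{\mathrm{eff}}(m')$ alone (Appendix~\ref{app:fim}), verifies $I_{\phi\tau}I_{\phi f}/I_{\phi\phi}=2\pi^2\SNR\,TN/\Delta t=I_{\tau f}$ via $T=N\Delta t$, and attributes this to the same $(\sum m')(\sum n)$ product structure you describe. Your sign worry resolves harmlessly. Conjugating the $j$ of $\partial y/\partial\phi$ in \eqref{eq:fim_def} makes both $I_{\phi\tau}$ and $I_{\phi f}$ negative (the paper's appendix omits this conjugation), so their product is unchanged.

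The one step that would not work as planned is the extra argument in Step~3.
\begin{itemize}
\item The purely-imaginary reasoning cannot remove the $\mathcal{D}_\epsilon$ contribution to $I_{\phi\tau}$. To the accuracy of Lemma~\ref{lemma:parseval}, $\sum_m\mathcal{F}^*\mathcal{D}_\epsilon=N\sum_n\epsilon_Q(n)$ is real, and so is the coefficient $(j)^*\cdot j2\pi/\Delta t=2\pi/\Delta t$. With the paper's sawtooth (mean $\approx-1/2$), this term is comparable to the main one.
\item Nor does it need to vanish. The same residual enters $I_{\tau f}$ through $\sum_n n\,\epsilon_Q(n)$. In the Schur complement the two combine into a multiple of $\sum_n(n-\bar n)\,\epsilon_Q(n)$, with $\bar n=(N-1)/2$.
\item So only the covariance of the residual with the time index can spoil $I_{\tau f}^{\mathrm{eff}}=0$. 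This is the precise version of your rank-one remark. It also corrects your claim that any nonvanishing residual would break exactness.
\end{itemize}
The paper does not address this; it simply drops residuals in both $I_{\phi\tau}$ and $I_{\tau f}$. So, as you conclude, the stated exactness is exactness of the closed-form entries.
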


\begin{proof}
The second term evaluates to $2\pi^2\SNR\, TN/\Delta t = 2\pi^2\SNR\, N^2$ (using $T = N\Delta t$), which exactly equals $I_{\tau f}$. The cancellation holds because $I_{\tau f}$ is proportional to $(\sum m')(\sum n) \propto [\E(U)]^2$, while $I_{\phi\tau}I_{\phi f}/I_{\phi\phi}$ produces the same product through the Schur complement structure.
\end{proof}

\begin{remark}[Physical interpretation]
Profiling out $\phi$ is equivalent to a ``de-meaning'' operation on the $\Xi$ linear phase: the delay information shifts from the mean of $m'$ (which couples with Doppler) to the variance of $m'$ (which does not). This completely eliminates the delay-Doppler coupling, regardless of $c_1$.
\end{remark}

The effective diagonal FIM elements are:
\begin{align}
I_{\tau\tau}^{\mathrm{eff}} &= I_{\tau\tau} - \frac{I_{\phi\tau}^2}{I_{\phi\phi}} = \frac{2\pi^2\SNR\, N(4\eta{-}3)}{3\Delta t^2}, \label{eq:I_tt_eff} \\
I_{ff}^{\mathrm{eff}} &= I_{ff} - \frac{I_{\phi f}^2}{I_{\phi\phi}} = \frac{2\pi^2\SNR\, T^2 N}{3}. \label{eq:I_ff_eff}
\end{align}

Since $I_{\tau f}^{\mathrm{eff}} = 0$, the profiled CRBs have no coupling inflation:
\begin{align}
\CRB_{\mathrm{prof}}(\tau_0) &= \frac{1}{I_{\tau\tau}^{\mathrm{eff}}} = \frac{3\Delta t^2}{2\pi^2\SNR\, N(4\eta{-}3)}, \label{eq:crb_tau_prof} \\
\CRB_{\mathrm{prof}}(f_D) &= \frac{1}{I_{ff}^{\mathrm{eff}}} = \frac{3}{2\pi^2\SNR\, T^2 N}. \label{eq:crb_f_prof}
\end{align}

\begin{remark}[$\CRB_{\mathrm{prof}}(f_D)$ is independent of $c_1$]
The profiled Doppler CRB does not contain $\eta$ and is identical for all values of $c_1$, including OFDM ($c_1=0$). This is because $\eta$ affects only $I_{\tau\tau}$, and after profiling, the Doppler CRB is decoupled from the delay information.
\end{remark}

\subsection{Consistency with OFDM Literature}
\label{subsec:ofdm_consistency}

Setting $c_1 = 0$ gives $\eta = 1$ and $4\eta{-}3 = 1$, so
\begin{equation}
\CRB_{\mathrm{prof}}(\tau_0)\big|_{c_1=0} = \frac{3\Delta t^2}{2\pi^2\SNR\, N}.
\label{eq:crb_tau_ofdm}
\end{equation}
This result exactly matches the CRB derived by Gaudio~\emph{et al.}\ \cite{gaudio2019ofdm} for OFDM radar (Lemma~1, single frame, $M=N$ subcarriers), after converting from their normalized delay parameter $t = \Delta f\cdot\tau$ to physical delay $\tau_0 = t\cdot N\Delta t$:
\begin{equation}
\begin{split}
\CRB_{\text{Gaudio}}(\tau_0)
  &= (N\Delta t)^2 \cdot \frac{6}{4\pi^2 \SNR\, N^3} \\
  &= \frac{3\Delta t^2}{2\pi^2 \SNR\, N}
   = \CRB_{\mathrm{prof}}(\tau_0)\big|_{c_1=0}. \quad
\end{split}
\end{equation}
Similarly, the profiled Doppler CRB \eqref{eq:crb_f_prof} matches the OFDM result.

\subsection{Two-Framework Comparison}

Table~\ref{tab:two_frameworks} compares the two CRB frameworks.

\begin{table}[t]
\centering
\caption{Comparison of $h$-known and $\phi$-profiled CRB frameworks.}
\label{tab:two_frameworks}
\begin{tabular}{@{}lcc@{}}
\toprule
& \textbf{$h$-known} & \textbf{$\phi$-profiled} \\
\midrule
FIM dimension & $2\times 2$ & $4\times 4\to 2\times 2$ \\
$I_{\tau f}$ & $2\pi^2\SNR\, N^2$ & $0$ (exact) \\
$\rho$ & $3/(4\sqrt{\eta})$ & $0$ \\
\midrule
$\CRB(\tau_0)$ & $\frac{3\Delta t^2}{8\pi^2\SNR\, N}\cdot\frac{16}{16\eta-9}$ & $\frac{3\Delta t^2}{2\pi^2\SNR\, N(4\eta-3)}$ \\[0.8em]
$\CRB(f_D)$ & $\frac{3}{8\pi^2\SNR\, T^2 N}\cdot\frac{16\eta}{16\eta-9}$ & $\frac{3}{2\pi^2\SNR\, T^2 N}$ \\[0.5em]
\midrule
$\eta\gg 1$ limit & $\frac{3\Delta t^2}{8\pi^2\SNR\, N\eta}$ & $\frac{3\Delta t^2}{8\pi^2\SNR\, N\eta}$ (agrees) \\[0.5em]
\bottomrule
\end{tabular}
\end{table}

The two frameworks agree in the large-$\eta$ limit and share the same qualitative conclusion: larger $\eta$ (larger $c_1$) improves the delay CRB. The $h$-known framework has the advantage of explicitly revealing the coupling structure $\rho = 3/(4\sqrt{\eta})$, while the $\phi$-profiled framework provides complete decoupling and direct comparability with the OFDM literature.

\subsection{Chirp Parameter Dependence}
\label{subsec:c1_dep}

Since $\eta\propto C^2 = (2Nc_1)^2$ due to cyclic wrapping \eqref{eq:eta_total}, the profiled delay CRB scales as
\begin{equation}
\CRB_{\mathrm{prof}}(\tau_0) \approx \frac{\Delta t^2}{8\pi^2\SNR\,\lceil l{+}\iota\rceil\cdot C(C{-}2)} \propto c_1^{-2}
\label{eq:crb_tau_vs_c1}
\end{equation}
for $C\gg 1$, where $4\eta{-}3\approx 4\eta_{\mathrm{wrap}} \approx 12\lceil l{+}\iota\rceil C(C{-}2)/N \approx 12\lceil l{+}\iota\rceil C^2/N$ is used.

\begin{corollary}[Chirp parameter dependence]
\label{thm:c1_dep}
In the asymptotic regime $C\gg 1$:
\begin{enumerate}
\item Increasing $c_1$ \emph{significantly improves} the delay CRB (approximately as $c_1^{-2}$).
\item The Doppler CRB is \emph{entirely independent} of $c_1$ in both frameworks.
\end{enumerate}
Combining (1) with Corollary~\ref{cor:afdm_weaker}, we conclude that $\CRB_{\mathrm{AFDM}}(\tau_0) < \CRB_{\mathrm{OFDM}}(\tau_0)$ for any $c_1 \in (0, 1/2)$, with the gap widening as $c_1$ increases. The improvement originates from the cyclic wrapping effect: larger $C$ produces larger phase jumps at the circular boundary that carry delay-only information.
\end{corollary}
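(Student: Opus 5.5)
The corollary has three parts, and I would read each one off the closed-form CRBs already derived, using the asymptotic form of $\eta$ from \eqref{eq:eta_total}. For item~(1), I start from the profiled bound \eqref{eq:crb_tau_prof}, $\CRB_{\mathrm{prof}}(\tau_0)=3\Delta t^2/(2\pi^2\SNR\,N(4\eta-3))$. I then substitute $\eta\approx\eta_{\mathrm{saw}}+\eta_{\mathrm{wrap}}$, where $\eta_{\mathrm{saw}}\in[1,2)$ is bounded and $\eta_{\mathrm{wrap}}\approx 3\lceil l{+}\iota\rceil C(C-2)/N$. For $C\gg1$ this gives $4\eta-3=12\lceil l{+}\iota\rceil C^2/N\,(1+O(1/C))+O(1)$. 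Inserting this yields \eqref{eq:crb_tau_vs_c1}, and since $C=2Nc_1$, the bound scales as $c_1^{-2}$.

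I would do the same check in the $h$-known framework. From \eqref{eq:crb_tau}, the delay bound carries the factor $16/(16\eta-9)$, which behaves like $1/\eta$ for large $\eta$ and hence also like $c_1^{-2}$. This agrees with the large-$\eta$ row of Table~\ref{tab:two_frameworks}.

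For item~(2), there are two cases.
\begin{itemize}
\item In the profiled framework, \eqref{eq:crb_f_prof} contains no $\eta$ at all, so the Doppler bound is independent of $c_1$.
\item In the $h$-known framework, \eqref{eq:crb_f} carries the factor $16\eta/(16\eta-9)=1+9/(16\eta-9)$. This tends to $1$ with a correction of order $O(1/\eta)=O(c_1^{-2})$. So "entirely independent" holds asymptotically, in the sense that the leading term $3/(8\pi^2\SNR\,T^2N)$ is $c_1$-free. I would state this interpretation explicitly.
\end{itemize}
The underlying reason is Theorem~\ref{thm:FIM_closed}: $I_{ff}$ and $I_{\phi f}$ do not depend on $c_1$.

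For the combined conclusion, I would compare AFDM with OFDM at fixed $(\SNR,N,\Delta t)$, using that the delay bounds are strictly monotone in $\eta$. The profiled bound is proportional to $1/(4\eta-3)$, and the $h$-known bound to $16/(16\eta-9)$; both are strictly decreasing on $\eta\ge1$. Corollary~\ref{cor:afdm_weaker} gives $\eta>1$ for every $c_1\in(0,1/2)$, while OFDM has $\eta=1$ (see \eqref{eq:crb_tau_ofdm}). Strict inequality follows immediately. That the gap widens with $c_1$ follows from $\eta$ increasing in $C$, which holds because $\eta_{\mathrm{wrap}}\propto C(C-2)$.

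The main obstacle is justifying that $\eta$ is genuinely monotone in $c_1$ and strictly above $1$ over the whole range, not only asymptotically. Two features of \eqref{eq:eta_total} cause trouble:
\begin{itemize}
\item $\eta_{\mathrm{saw}}=1+3\sigma_W^2$ equals exactly $1$ at $W=1$.
\item The factor $C(C-2)$ is nonpositive for $C\le2$.
\end{itemize}
My first attempt would go back to Definition~\ref{def:eta}, $\eta=1+\tfrac{3}{N^2}\sum_m E_\epsilon(m)$. This gives $\eta>1$ as soon as $\epsilon_Q\not\equiv0$, which holds whenever the staircase is not exactly linear, i.e.\ whenever $c_1>0$ with $W>1$. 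For monotonicity I would rely on the wrapping term once $C\ge3$, and treat the small-$C$ regime separately or simply restrict the "gap widening" claim to $C\gg1$, as the corollary's hypothesis already does.
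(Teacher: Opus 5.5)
Your proposal is correct and follows essentially the paper's route: substitute $4\eta-3\approx 4\eta_{\mathrm{wrap}}$ into \eqref{eq:crb_tau_prof} to obtain \eqref{eq:crb_tau_vs_c1}, read item~(2) off \eqref{eq:crb_f_prof} and \eqref{eq:crb_f}, and combine $\eta>1$ from Corollary~\ref{cor:afdm_weaker} with the monotone dependence of the delay bounds on $\eta$. Your added caveats are sound and sharper than the paper's own wording: the $h$-known Doppler bound is only asymptotically $c_1$-free, the strict AFDM/OFDM inequality for every $c_1$ comes from monotonicity in $\eta$ rather than from the asymptotic item~(1), and the $C(C-2)$ formula is unreliable for $C\le 2$. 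The one assumption both you and the paper leave implicit is that discarding the $O(1)$ term $4\eta_{\mathrm{saw}}-3$ requires $\eta_{\mathrm{wrap}}\gg 1$, i.e.\ $\lceil l{+}\iota\rceil C^2/N\gg 1$, which $C\gg 1$ alone does not guarantee when $N$ is large (cf.\ Remark~\ref{rmk:N_scaling}).
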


\begin{remark}[Finite-$C$ behavior]
The $c_1^{-2}$ scaling is the asymptotic result for $C\gg 1$. In the practical operating range $C\in[9,20]$, numerical fitting yields $\eta\approx 0.97C - 2.5$ (approximately linear), corresponding to $\CRB\propto 1/C$ rather than $1/C^2$. The two scaling laws produce comparable predictions in this range. Even at the communication-optimal $C = 2\alpha_{\max}{+}1 = 9$, the delay CRB is approximately 4 times better than OFDM.
\end{remark}

\begin{remark}[$N$-dependence under two operating modes]
\label{rmk:N_scaling}
The CRB scaling with $N$ depends critically on how $c_1$ is chosen. If $c_1$ is held fixed as $N$ increases, then $C = 2Nc_1 \propto N$ and $\eta_{\mathrm{wrap}} \propto C^2/N \propto N$, so the profiled delay CRB scales as $1/(N\cdot\eta) \propto N^{-2}$. If instead $C$ is held fixed (i.e., $c_1 = C/(2N) \propto 1/N$), then $\eta$ remains approximately constant and the CRB scales as $1/N$---the same rate as OFDM. Thus, the $\eta$-induced improvement is an $N$-dependent bonus that is realized only when the chirp parameter is not scaled down with increasing $N$.
\end{remark}

\subsection{Equivalent Delay CRB}

The equivalent delay $\lequiv$ can be estimated as a one-dimensional parameter via $\partial\mathcal{F}/\partial\lequiv = -(j2\pi/N)\mathcal{D}_{\mathrm{s}}$, yielding the Fisher information $I_{\lequiv} = 8\pi^2\SNR\, N/3$ and
\begin{equation}
\CRB(\lequiv) = \frac{3}{8\pi^2\SNR\, N}.
\label{eq:crb_leq}
\end{equation}
This one-dimensional CRB involves no coupling penalty and relates to the physical delay CRB via the mapping $\tau_0 = (\lequiv - Tf_D)\Delta t/(2Nc_1)$, providing a factor $C^2$ improvement when $f_D$ is known. The extraction of this $C^2$-fold potential information through structured search is addressed in a companion paper \cite{paper3_leq_extraction}.

\subsection{Using the Bounds in Practice}
\label{subsec:using_bounds}

The closed-form expressions derived in this section serve three concrete purposes for an ISAC system designer. First, they provide a \emph{benchmarking baseline}: any proposed AFDM sensing algorithm can be compared against \eqref{eq:crb_tau} and \eqref{eq:crb_f}, with the gap quantifying suboptimality without resorting to Monte-Carlo CRB simulation at every operating point. Second, they enable \emph{waveform parameter selection}. Because $\eta$ depends on the number of chirp wrapping cycles $C = 2Nc_1$ and is computable in closed form via \eqref{eq:eta_total}, the designer can select $c_1$ to meet a target delay accuracy without tuning by simulation; the asymmetric dependence (delay improves quadratically, Doppler is invariant) means this selection does not need to balance against Doppler accuracy. Third, they provide a \emph{regime-of-validity test} for the OFDM-based intuition that pervades the sensing literature: the proposed bounds reduce continuously to the classical OFDM CRB \cite{gaudio2019ofdm} as $c_1 \to 0$, so a designer can quantify exactly how much chirp rate is needed before AFDM-specific phenomena dominate over OFDM-like behavior. Worked numerical examples illustrating each of these uses are deferred to Section~\ref{sec:simulation}.

\section{Numerical Validation: From the Mechanism to the Bounds}
\label{sec:simulation}

This section provides comprehensive numerical verification of the results of Sections~\ref{sec:derivatives}--\ref{sec:crb}. The presentation deliberately mirrors the paper's causal chain: we first validate the cancellation mechanism itself (Section~\ref{subsec:sim_cancel}), then the closed-form FIM elements it produces (Section~\ref{subsec:sim_fim}), then the $3/(4\sqrt{\eta})$ coupling coefficient (Section~\ref{subsec:sim_rho}), and finally the $c_1^{-2}$ CRB gain (Section~\ref{subsec:sim_crb}). Throughout, numerical FIM and CRB values are computed exactly via central-difference differentiation of the $4\times 4$ FIM (including $\alpha$, $\phi$, $\tau_0$, $f_D$), with no analytical approximations; theoretical closed forms are plotted only for comparison.

\subsection{Simulation Setup}

Unless otherwise stated, the system parameters are: $N = 128$, $\Delta t = 52.08$~ns (corresponding to a 19.2~MHz bandwidth), $l{+}\iota = 1.31$ ($l=1$, $\iota=0.31$), $k{+}\kappa = 4.2$ ($k=4$, $\kappa=0.2$), and $\alpha_{\max}=4$. The chirp parameter $c_1$ is swept by varying $C = 2Nc_1$ from $1$ to $20$ over all integer values and 150 uniformly spaced non-integer values. Derivatives are computed by central differences with step size chosen such that the perturbation in $\lequiv$-space is approximately $10^{-4}$.

\subsection{Headline Experiment: Numerically Verifying the Cancellation Mechanism}
\label{subsec:sim_cancel}

We begin with the central experiment of this section: a direct numerical verification that Paths~2 and~3a in the delay score cancel as predicted by Theorem~\ref{thm:cancellation}. This is the structural observation on which all subsequent closed-form results depend, so we examine it before any FIM or CRB validation. Table~\ref{tab:cancel_verify} reports the relevant quantities at the communication-optimal configuration $C=9$ ($c_1 = 9/256$).

\begin{table}[t]
\centering
\caption{Numerical verification of the cancellation in Theorem~\ref{thm:cancellation}.}
\label{tab:cancel_verify}
\small
\begin{tabular}{@{}lcc@{}}
\toprule
\textbf{Quantity} & \textbf{Value} & \textbf{Meaning} \\
\midrule
$\|\partial\mathbf{H}/\partial\tau_0\|_F$ & $1.70\times 10^{9}$ & delay deriv. \\
$\|(2Nc_1/\Delta t)\partial\mathbf{H}/\partial\lequiv\|_F$ & $7.05\times 10^{9}$ & Path~2 pred. \\
$\|\text{residual}\|_F$ & $6.66\times 10^{9}$ & Path~3a + others \\
\midrule
cosine (Path~2, residual) & $-0.971$ & anti-parallel \\
delay / Path~2 ratio & $0.24$ & 76\% cancelled \\
\midrule
$|\mathrm{corr}(\partial\mathbf{H}/\partial\tau_0,\,\partial\mathbf{H}/\partial f_D)|$ & $0.344$ & weak coupling \\
$|\mathrm{corr}(\partial\mathbf{H}/\partial\lequiv,\,\partial\mathbf{H}/\partial f_D)|$ & $1.000$ & same kernel \\
\bottomrule
\end{tabular}
\end{table}

The Path~2 prediction and the residual (containing Path~3a) have comparable magnitudes ($7.05$ vs.\ $6.66 \times 10^9$) but nearly opposite directions (cosine angle $-0.971$), so they cancel to leave a true derivative with only $24\%$ of the Path~2 magnitude---a $76\%$ energy reduction. After cancellation, the delay-Doppler derivative correlation drops from $1.0$ (last row, pre-cancellation $\lequiv$ kernel) to $0.344$ (post-cancellation), confirming that the CPP compensation replaces the shared kernel $\mathcal{D}_{\mathrm{s}}$ with the structurally different $\mathcal{F}$. Without the CPP-induced Path~3a, the AFDM delay score would be dominated by the chirp-rate-scaled Doppler kernel and the FIM would be near-singular ($\rho\to 1$); the numerical cancellation observed here is therefore direct experimental evidence that the CPP simultaneously serves as a delay-Doppler decoupler, not just a channel equalizer.

\subsection{FIM Closed-Form Verification}
\label{subsec:sim_fim}

Having confirmed the cancellation mechanism itself, we now verify that the closed-form FIM elements predicted by Theorem~\ref{thm:FIM_closed} match the numerical FIM. We emphasize that prior AFDM-CRB studies either evaluated the FIM purely numerically~\cite{ranasinghe2025joint} or gave closed-form expressions only for the OFDM special case ($c_1 = 0$)~\cite{zhang2025afdm}; neither retained the staircase-residual contribution $\mathcal{D}_{\epsilon}$ that arises from the imperfect linear-chirp approximation of $Q(n,m)$. Theorem~\ref{thm:FIM_closed} keeps this term explicitly, which is what makes the closed form valid across the full range $c_1\in(0,1/2)$ rather than only at the OFDM endpoint. Table~\ref{tab:fim_verify} compares the closed-form FIM elements against numerical computation at $C=9$, $l{+}\iota=1.31$.

\begin{table}[t]
\centering
\caption{Closed-form versus numerical FIM elements and derived quantities.}
\label{tab:fim_verify}
\begin{tabular}{@{}lccc@{}}
\toprule
\textbf{Element} & \textbf{Closed form} & \textbf{Numerical} & \textbf{Error} \\
\midrule
$I_{\tau\tau}^{(\Xi)}/\SNR$ & $1.24\times 10^{18}$ & $1.24\times 10^{18}$ & ${<}0.1\%$ \\
$I_{ff}/\SNR$ & $1.50\times 10^{-7}$ & $1.48\times 10^{-7}$ & $1.2\%$ \\
$I_{\tau f}/\SNR$ & $3.23\times 10^{5}$ & $3.18\times 10^{5}$ & $1.7\%$ \\
$\eta$ (numerical) & --- & $4.66$ & --- \\
$\rho$ & $0.348$ & $0.344$ & $1.1\%$ \\
\bottomrule
\end{tabular}
\end{table}

The $\Xi$ component of $I_{\tau\tau}$ matches to within $0.1\%$, and both $I_{ff}$ and $I_{\tau f}$ agree to within $2\%$. The cross-term error does not propagate significantly to $\rho$: since $\rho = I_{\tau f}/\sqrt{I_{\tau\tau}I_{ff}}$ involves a square-root denominator, a $1.7\%$ error in $I_{\tau f}$ reduces to a $1.1\%$ error in $\rho$. The coupling coefficient $\rho = 0.344$ matches the closed form $3/(4\sqrt{\eta}) = 0.348$ to within $1.1\%$, and this level of agreement is representative of all integer $C$ values (see Fig.~\ref{fig:rho_vs_C}). The numerical $\eta = 4.66$ is substantially larger than the sawtooth-only estimate $\eta_{\mathrm{saw}} \approx 1.90$, confirming the dominance of cyclic wrapping even at the smallest communication-compatible $C$.

\subsection{Residual Inflation Factor $\eta$ vs.\ $C$}
\label{subsec:sim_eta}

The closed-form FIM concentrates all waveform-dependent structure in $\eta$. We now examine how $\eta$ scales with the number of chirp wrapping cycles $C$ and validate the approximate formula~\eqref{eq:eta_total}. Fig.~\ref{fig:eta_vs_C} compares three approaches for computing $\eta$: (i) numerical FIM extraction, (ii) the wrapping-corrected formula $\eta \approx \eta_{\mathrm{saw}} + \eta_{\mathrm{wrap}}$ from \eqref{eq:eta_total}, and (iii) the sawtooth-only approximation $\eta_{\mathrm{saw}} = 1 + 3\sigma_W^2$.

\begin{figure}[t]
\centering
\includegraphics[width=\columnwidth]{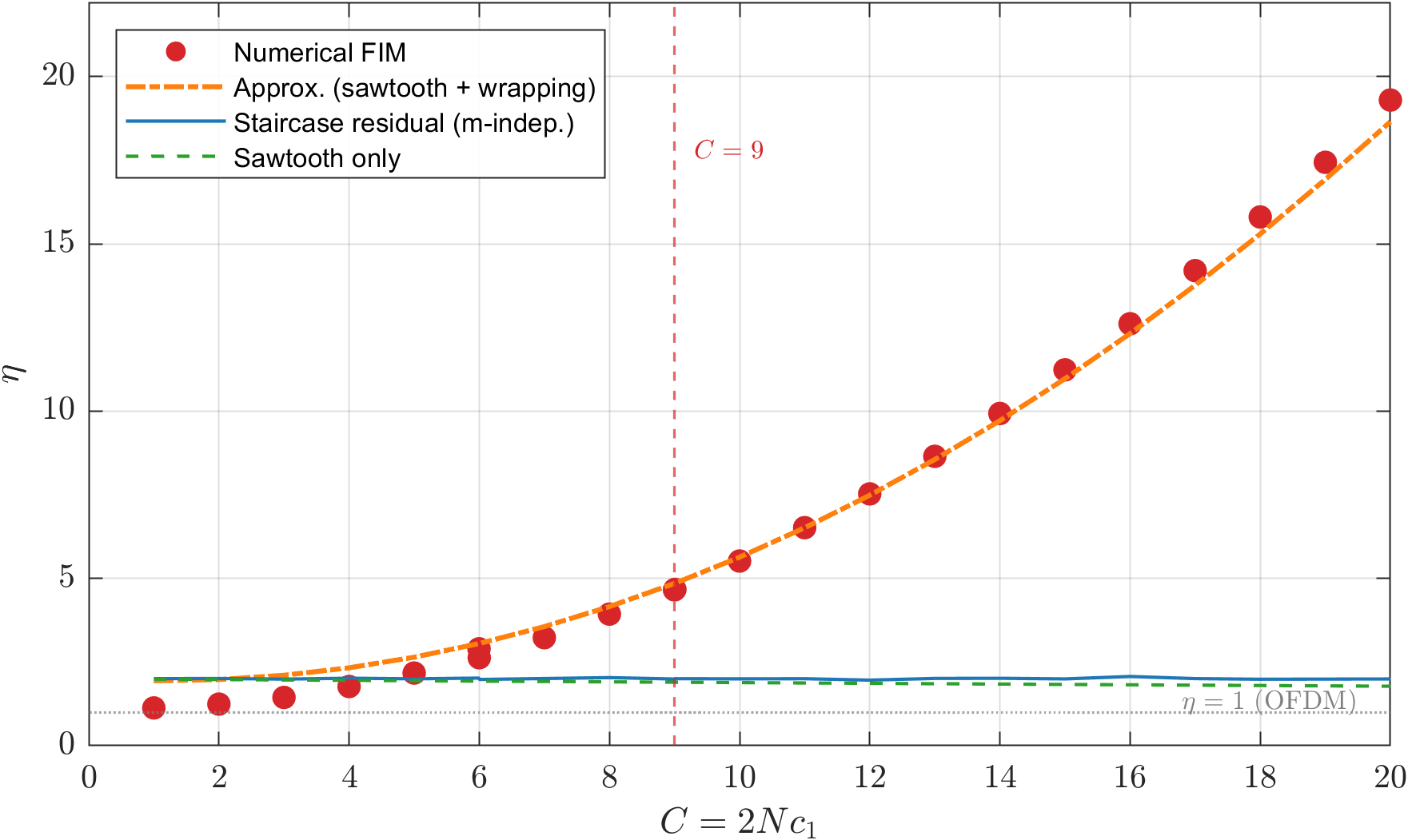}
\caption{Residual inflation factor $\eta$ versus $C$. Red circles: numerical FIM. Orange dashed: wrapping-corrected formula~\eqref{eq:eta_total}. Green dotted: sawtooth-only $1+3\sigma_W^2$. The numerical $\eta$ grows approximately as $C^2$, consistent with $\eta_{\mathrm{wrap}} \approx 3\lceil l{+}\iota\rceil C(C{-}2)/N$.}
\label{fig:eta_vs_C}
\end{figure}

The numerical $\eta$ increases from $1.1$ at $C=1$ to $19.3$ at $C=20$, far exceeding the sawtooth-only upper bound of $2$. The wrapping-corrected $C(C{-}2)$ formula captures the correct $C^2$ growth trend, with less than $4\%$ error in the communication-compatible region $C \geq 9$; for small $C \leq 5$, residual cross-term contributions become non-negligible and the formula overestimates $\eta$ by up to $22\%$ (see Appendix~\ref{app:eta} for the analysis). The sensing gain factor $4\eta{-}3$ ranges from $15.6$ ($C = 9$) to $74.2$ ($C = 20$), providing one to two orders of magnitude improvement over OFDM.

\subsection{Coupling Coefficient $\rho$ and Profiled Decoupling}
\label{subsec:sim_rho}

With the $\eta$ behavior established, we turn to the downstream consequence---the $3/(4\sqrt{\eta})$ coupling coefficient (Theorem~\ref{thm:rho}) and the exact profiled decoupling $I_{\tau f}^{\mathrm{eff}} = 0$ (Theorem~\ref{thm:decoupling}). Fig.~\ref{fig:rho_vs_C} verifies both.

\begin{figure}[t]
\centering
\includegraphics[width=\columnwidth]{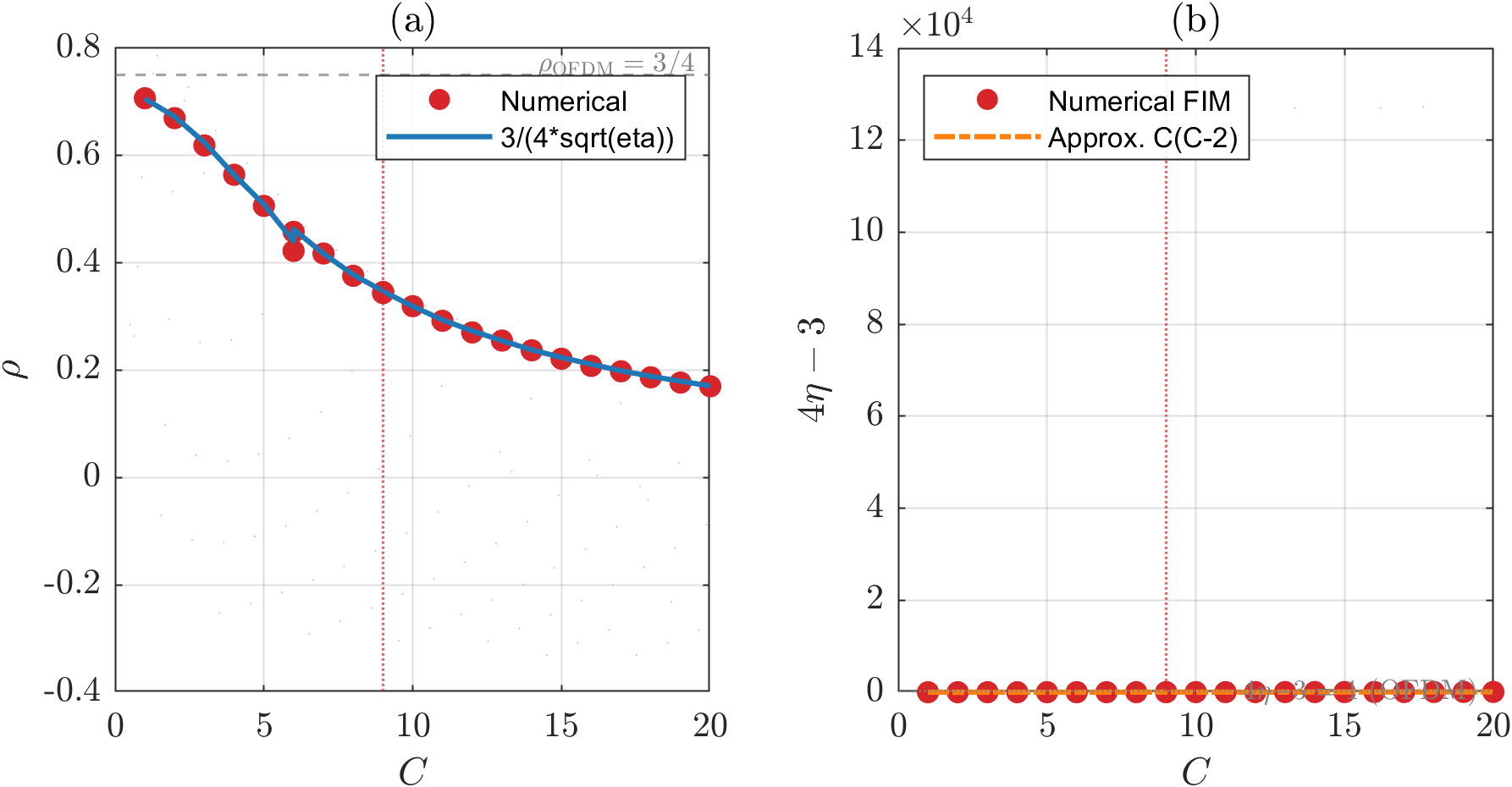}
\caption{Left: coupling coefficient $\rho$ versus $C$. Numerical values (circles) match the closed form $3/(4\sqrt{\eta})$ (dashed). Right: profiled cross-FIM ratio $|I_{\tau f}^{\mathrm{eff}}/I_{\tau\tau}^{\mathrm{eff}}|$, confirming the exact decoupling of Theorem~\ref{thm:decoupling}.}
\label{fig:rho_vs_C}
\end{figure}

The left panel confirms that AFDM's coupling is strictly weaker than OFDM for all $c_1 > 0$: at the communication-optimal $C = 9$, $\rho = 0.34$ corresponds to a coupling inflation $(1{-}\rho^2)^{-1} = 1.13$, i.e., only $13\%$ penalty for joint estimation. The right panel provides direct numerical evidence that Schur complementation eliminates the delay-Doppler coupling \emph{exactly}---the residual is at the level of floating-point arithmetic, not an approximation artifact.

\subsection{Delay CRB and the $c_1^{-2}$ Dimensional Gain}
\label{subsec:sim_crb}

The ultimate operational metric is the CRB itself. Fig.~\ref{fig:crb_tau_vs_C} shows the profiled delay CRB versus $C$ at three SNR levels, validating the closed-form expression~\eqref{eq:crb_tau_prof} and quantifying the gain over OFDM.

\begin{figure}[t]
\centering
\includegraphics[width=\columnwidth]{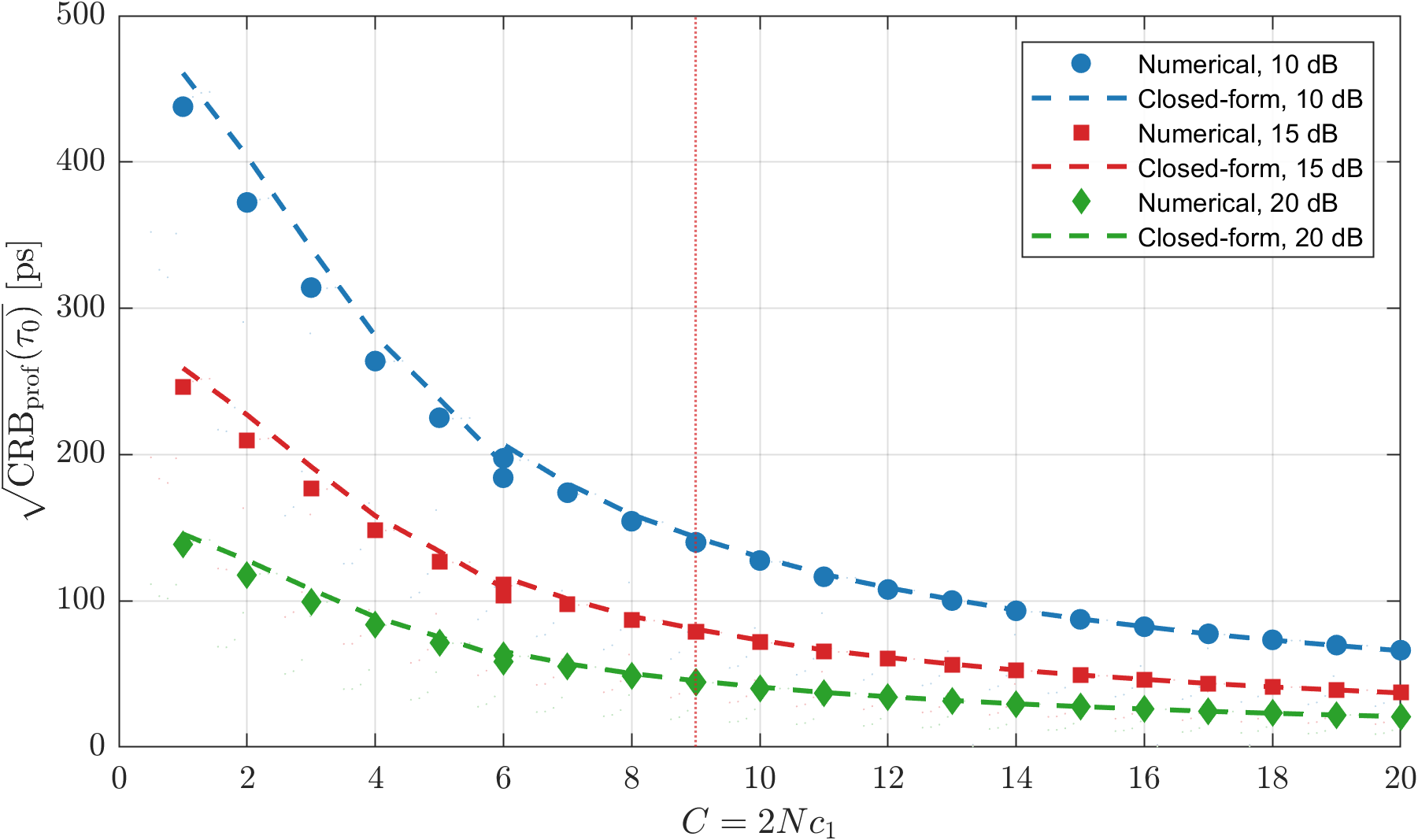}
\caption{$\sqrt{\CRB_{\mathrm{prof}}(\tau_0)}$ versus $C$ at $\SNR = 10$, $15$, and $20$~dB. Circles: numerical; dashed lines: closed form~\eqref{eq:crb_tau_prof} with numerically extracted $\eta$. CRB decreases monotonically with $C$.}
\label{fig:crb_tau_vs_C}
\end{figure}

At $\SNR = 15$~dB, the delay root-CRB decreases from $246$~ps at $C=1$ to $37$~ps at $C=20$, representing a factor of $6.6$ improvement. Even at the communication-optimal $C = 2\alpha_{\max}{+}1 = 9$, the AFDM delay CRB ($79$~ps) is approximately four times better than OFDM ($319$~ps). The closed-form CRB \eqref{eq:crb_tau_prof} with numerically extracted $\eta$ matches the numerical computation at all $C$ values.

Table~\ref{tab:crb_vs_C} lists the key sensing metrics at representative integer-$C$ operating points (with $\SNR=15$~dB and the communication-optimal $C=9$ highlighted in bold), providing a quantitative summary of the improvement over OFDM.

\begin{table}[t]
\centering
\caption{Sensing performance at selected integer-$C$ operating points.}
\label{tab:crb_vs_C}
\begin{tabular}{@{}ccccc@{}}
\toprule
$C$ & $\eta$ & $\rho$ & $\sqrt{\CRB(\tau_0)}$~[ps] & vs.\ OFDM \\
\midrule
1  & 1.13  & 0.71 & 246 & $0.77\times$ \\
5  & 2.17  & 0.51 & 127 & $0.40\times$ \\
\textbf{9}  & \textbf{4.66}  & \textbf{0.34} & \textbf{79}  & $\mathbf{0.25\times}$ \\
12 & 7.53  & 0.27 & 61  & $0.19\times$ \\
15 & 11.24 & 0.22 & 49  & $0.15\times$ \\
20 & 19.30 & 0.17 & 37  & $0.12\times$ \\
\midrule
OFDM & 1.00 & 0.75 & 319 & $1.00\times$ \\
\bottomrule
\end{tabular}
\end{table}

\subsection{Doppler CRB Independence of $c_1$}
\label{subsec:sim_doppler}

The delay CRB of the previous subsection improves as $c_1^{-2}$; Theorem~\ref{thm:CRB} predicts that the Doppler CRB should be completely insensitive to $c_1$. Fig.~\ref{fig:crb_f_flat} verifies this directly.

\begin{figure}[t]
\centering
\includegraphics[width=\columnwidth]{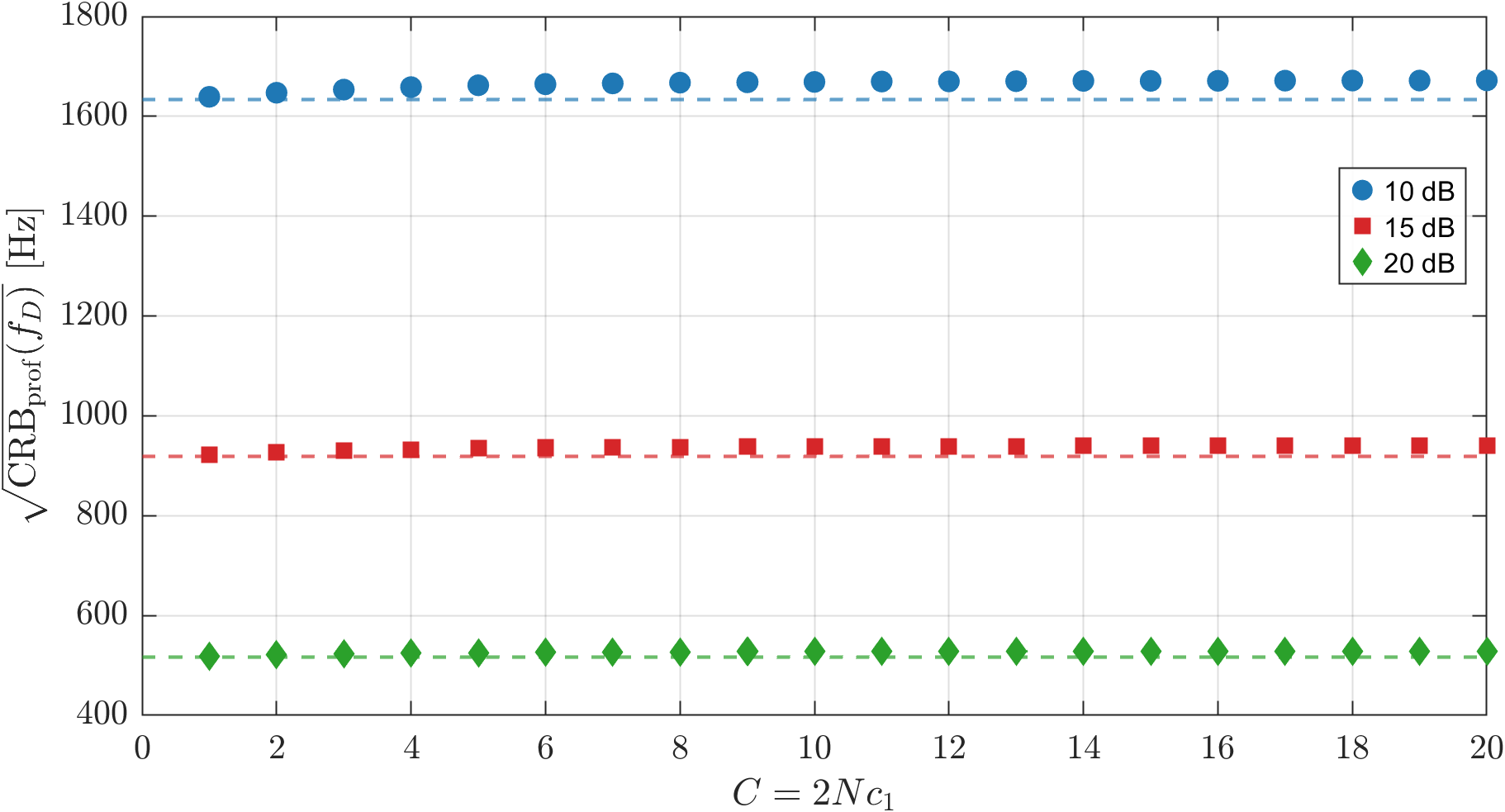}
\caption{$\sqrt{\CRB_{\mathrm{prof}}(f_D)}$ versus $C$ at $\SNR = 10$, $15$, and $20$~dB. All curves are flat (coefficient of variation ${<}0.5\%$), confirming that $I_{ff}^{\mathrm{eff}}$ does not contain $\eta$ and the profiled Doppler CRB~\eqref{eq:crb_f_prof} is identical for AFDM and OFDM.}
\label{fig:crb_f_flat}
\end{figure}

The numerical Doppler CRB falls precisely on the theoretical horizontal line given by \eqref{eq:crb_f_prof} at all three SNR levels. This confirms the structural asymmetry identified in Remark~\ref{rmk:asymmetric_eta}: increasing $c_1$ improves the delay CRB (via $\eta$) while leaving the Doppler CRB entirely unaffected.

\subsection{Robustness and Scaling Checks}
\label{subsec:sim_robust}

The preceding subsections have validated the main chain: cancellation mechanism, closed-form FIM, coupling coefficient, and CRBs. Two auxiliary checks complete the picture: the robustness of the closed-form results to the fractional delay $\iota$ (including the integer-delay limit $\iota=0$), and the scaling of the CRB with $N$ under two practical operating modes.

\subsubsection{Fractional delay sensitivity and integer-delay non-degeneracy}
\label{subsec:sim_iota}

Fig.~\ref{fig:iota_sweep} examines the sensitivity of $\eta$ and $\rho$ to the fractional delay $\iota$, with particular attention to the integer-delay case $\iota = 0$.

\begin{figure}[t]
\centering
\includegraphics[width=\columnwidth]{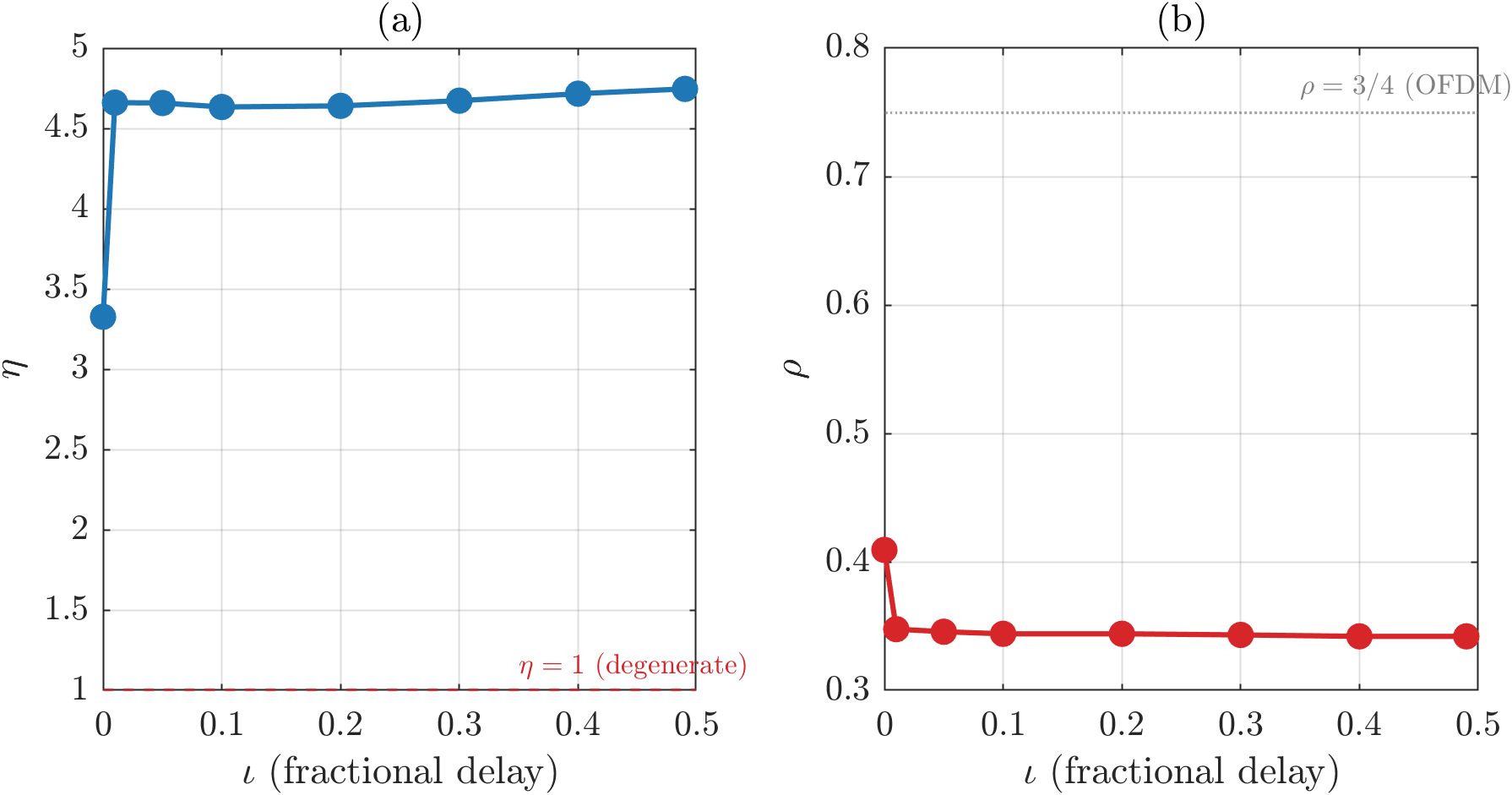}
\caption{$\eta$ and $\rho$ versus fractional delay $\iota$ ($C=9$). Both quantities are well-defined at $\iota=0$ (integer delay) and stabilize rapidly for $\iota > 0.01$.}
\label{fig:iota_sweep}
\end{figure}

At $\iota = 0$, the FIM has $\eta = 3.3 \gg 1$ and $\rho = 0.41 \ll 1$. This is because Path~3a ($\partial\Phi_{\mathrm{jump}}/\partial\tau_0|_{\iota=0} = (2\pi/\Delta t)Q(n,m)|_{\iota=0} \neq 0$) remains active even when $\Phi_{\mathrm{jump}}$ itself vanishes---the function value is zero but its derivative is not. Both $\eta$ and $\rho$ exhibit a sharp transition at $\iota \approx 0$ (due to the change in the number of wrapped samples $\lceil l{+}\iota\rceil$) but stabilize rapidly for $\iota > 0.01$. The cancellation mechanism is therefore robust to fractional-delay variations across the operationally relevant range.

\subsubsection{$N$ scaling laws under two operating modes}
\label{subsec:sim_N}

Fig.~\ref{fig:N_scaling} examines how the CRB scales with $N$ under the two operating modes identified in Remark~\ref{rmk:N_scaling}.

\begin{figure}[t]
\centering
\includegraphics[width=\columnwidth]{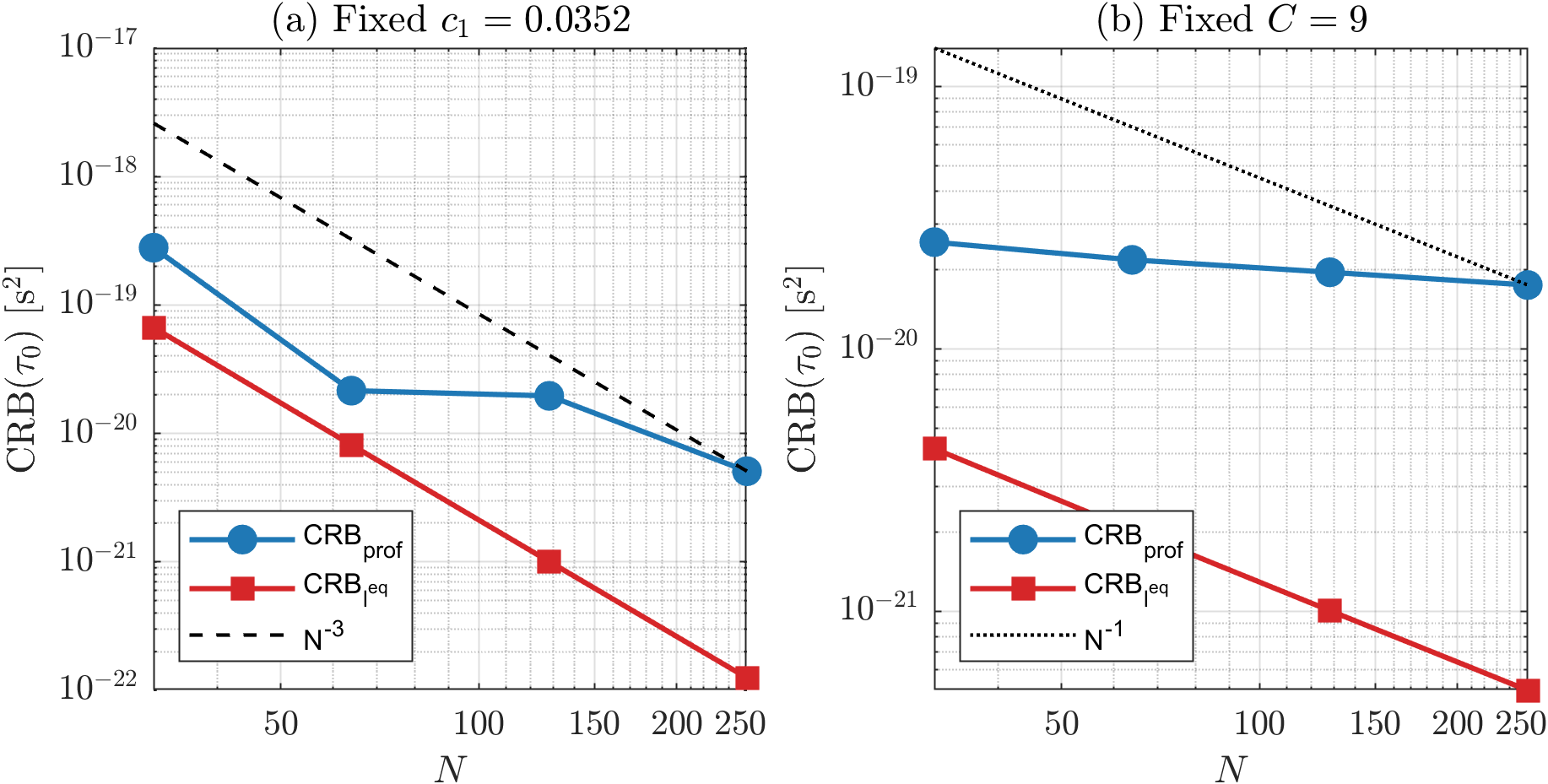}
\caption{$N$ scaling laws ($\SNR=10$~dB, $N \in \{32,64,128,256\}$). Left: fixed $c_1 = 9/256$ ($C\propto N$); CRB$_{\mathrm{prof}}(\tau_0)$ scales approximately as $N^{-1.8}$. Right: fixed $C=9$ ($c_1\propto 1/N$); CRB$_{\mathrm{prof}}(\tau_0)$ scales as $N^{-0.2}$.}
\label{fig:N_scaling}
\end{figure}

\subsubsection{Fixed $c_1$ ($C\propto N$)} When $c_1$ is held constant, $C$ grows linearly with $N$, so $\eta_{\mathrm{wrap}}\propto C^2/N \propto N$. The profiled CRB then scales as $1/(N\cdot\eta) \propto N^{-2}$. Numerical fitting yields $N^{-1.8}$, slightly slower than $N^{-2}$ because the constant $\eta_{\mathrm{saw}}$ term is not yet negligible at finite $N$.

\subsubsection{Fixed $C$ ($c_1\propto 1/N$)} When $C$ is held constant, $\eta_{\mathrm{wrap}}\propto C^2/N \propto 1/N$, so $\eta$ decreases with $N$. The CRB scales as $1/(N\cdot\eta) \approx 1/(aN + b)$, yielding approximately $N^{-0.2}$ in the numerical fit---a much slower improvement. In practical systems where $c_1 = (2\alpha_{\max}{+}1)/(2N)$ (fixed $C$), increasing $N$ primarily improves frequency resolution rather than the delay CRB.

\section{Conclusion}
\label{sec:conclusion}

This paper has addressed a structural question raised by recent work on AFDM-based ISAC: why, and by how much, is AFDM's delay-Doppler estimation more decoupled than OFDM's? The answer lies in a single structural observation---the CPP phase jump exactly cancels the chirp-induced frequency drift in the delay score function, up to an $O(1/C)$ residual captured by the scalar $\eta$.

Every closed-form result in this paper is a consequence of that cancellation. The $2\times 2$ Fisher information matrix parameterizes as a function of $\eta$ alone (Theorem~\ref{thm:FIM_closed}); the delay-Doppler coupling coefficient $3/(4\sqrt{\eta})$ is strictly below the $3/4$ of OFDM for any nonzero chirp rate (Theorem~\ref{thm:rho}); the delay CRB scales as $c_1^{-2}$ while the Doppler CRB is $c_1$-independent (Theorem~\ref{thm:CRB}); and Schur complementation yields an exact delay-Doppler decoupling that reduces continuously to the classical OFDM CRB of~\cite{gaudio2019ofdm} as $c_1 \to 0$ (Theorem~\ref{thm:decoupling}). Viewed through this lens, the CPP---previously studied only as a channel-equalization device---emerges as the structural element that shapes AFDM's sensing estimability.

The framework has three immediate implications for ISAC waveform design. First, because the two CRBs depend on the chirp rate asymmetrically (delay improving quadratically, Doppler unaffected), the chirp rate emerges as a pure sensing-resolution knob that does not trade off against Doppler estimation, simplifying joint sensing-communication parameter selection. Second, even at the communication-optimal configuration $C = 2\alpha_{\max}{+}1$, the AFDM delay CRB is approximately four times better than OFDM---the gain is available without sacrificing communication performance. Third, the continuous reduction to OFDM as $c_1 \to 0$ certifies that the proposed framework is a strict generalization of the classical OFDM radar sensing theory of~\cite{gaudio2019ofdm}, rather than a parallel track.

Several extensions merit discussion. The present analysis assumes a single point target; the multi-target case introduces inter-target interference that may degrade the FIM structure when targets share similar equivalent delays. The closed-form approximation for $\eta$ is accurate ($<4\%$ error) in the operationally relevant region $C \geq 9$ where AFDM is typically deployed; in the small-$C$ regime ($C \leq 5$), additional cross-terms become non-negligible and direct numerical FIM evaluation is recommended (see Appendix~\ref{app:eta}). The analysis also assumes integer $C = 2Nc_1$, which ensures perfect CPP alignment; non-integer $C$ introduces additional residual effects not captured by the current framework. Natural extensions include the design of structured estimators that exploit the equivalent-delay information identified in this paper, the joint optimization of the chirp rate for sensing-communication trade-offs, and the generalization to MIMO-AFDM and multi-frame processing scenarios.

\appendices

\section{Proof of Derivative Decomposition}
\label{app:derivatives}

This appendix provides the detailed derivations supporting Section~\ref{sec:derivatives}: the four-path decomposition, the linear approximation of the staircase function, the cancellation proof, and the effective coefficient derivation.

\subsection{Dependency Analysis of $\tau_0$}

From the observation model \eqref{eq:io_relation}, $\tau_0$ enters through $l{+}\iota = \tau_0/\Delta t$. The intermediate quantities and their dependencies are:
\begin{itemize}
\item $h = \alpha e^{j\phi}$: independent of $\tau_0$, so $\partial h/\partial\tau_0 = 0$.
\item $\Xi(m,m') = c_1(l{+}\iota)^2 - c_2(m^2{-}m'^2) - (l{+}\iota)m'/N$: depends on $\tau_0$ through $(l{+}\iota)$.
\item $\lequiv = (k{+}\kappa) + 2Nc_1(l{+}\iota)$: depends on $\tau_0$ through $2Nc_1(l{+}\iota)/\Delta t$.
\item $\Phi_{\mathrm{jump}}(n,m) = 2\pi\iota\cdot Q(n,m;l{+}\iota)$: depends on $\tau_0$ through \emph{both} $\iota$ (scaling factor) \emph{and} $Q$ (through segment boundaries $n_{m,q}$).
\end{itemize}

\subsection{Path~1: Detailed Derivation}

Differentiating $\Xi$ in \eqref{eq:Xi_def} term by term:
\begin{align}
\frac{\partial\Xi}{\partial\tau_0} &= \frac{\partial}{\partial\tau_0}\!\left[c_1(l{+}\iota)^2 - c_2(m^2{-}m'^2) - \frac{(l{+}\iota)m'}{N}\right] \notag \\
&= c_1\cdot 2(l{+}\iota)\cdot\frac{1}{\Delta t} - 0 - \frac{m'}{N\Delta t} \notag \\
&= \frac{1}{\Delta t}\!\left[2c_1(l{+}\iota) - \frac{m'}{N}\right]\!.
\label{eq:app_dXi_dtau}
\end{align}
The contribution to $\partial y[m]/\partial\tau_0$ via the product rule on $e^{j2\pi\Xi}\cdot\mathcal{F}$ is:
\begin{equation}
\frac{\partial e^{j2\pi\Xi}}{\partial\tau_0}\cdot\mathcal{F} = j2\pi\frac{\partial\Xi}{\partial\tau_0}\cdot e^{j2\pi\Xi}\cdot\mathcal{F} = A(m')\cdot e^{j2\pi\Xi}\cdot\mathcal{F}
\end{equation}
with the coefficient $A(m')$ defined in \eqref{eq:coeff_A}.

\subsection{Paths~2 and~3: Product Rule on $\Phi_{\mathrm{jump}}$}

The total phase of $\Gamma_n$ is $\varphi_n = 2\pi(m'{-}m{-}\lequiv)n/N + 2\pi\iota\cdot Q(n,m)$. The derivative $\partial\varphi_n/\partial\tau_0$ receives contributions from both terms.

\paragraph{Path~2} The Dirichlet kernel phase depends on $\tau_0$ only through $\lequiv$:
\begin{align}
\frac{\partial}{\partial\tau_0}\!\left[\frac{2\pi(m'{-}m{-}\lequiv)n}{N}\right] &= -\frac{2\pi n}{N}\cdot\frac{\partial\lequiv}{\partial\tau_0} \notag \\
&= -\frac{2\pi n}{N}\cdot\frac{2Nc_1}{\Delta t} = -\frac{4\pi c_1 n}{\Delta t}.
\end{align}

\paragraph{Path~3a and~3b} Since $\Phi_{\mathrm{jump}} = 2\pi\iota\cdot Q$ is a product of two $\tau_0$-dependent factors, the product rule gives:
\begin{equation}
\frac{\partial\Phi_{\mathrm{jump}}}{\partial\tau_0} = 2\pi\underbrace{\frac{\partial\iota}{\partial\tau_0}}_{=\,1/\Delta t}\cdot Q + 2\pi\iota\cdot\frac{\partial Q}{\partial\tau_0}.
\label{eq:app_product_rule}
\end{equation}
The first term (Path~3a) yields $(2\pi/\Delta t)Q(n,m)$, acting on all $N$ samples with magnitude $O(c_1 N)$. The second term (Path~3b) involves the derivative of the indicator functions at the $C$ segment boundaries:
\begin{equation}
\frac{\partial Q}{\partial\tau_0} = \frac{1}{\Delta t}\sum_{q=0}^{C}q\cdot\frac{\partial\mathcal{I}_{L_{m,q}}}{\partial(l{+}\iota)},
\end{equation}
which is nonzero only at the $C$ boundary points and has magnitude $O(\iota)$, much smaller than Path~3a.

Combining all three contributions into \eqref{eq:dGamma_dtau} and then applying the product rule to $e^{j2\pi\Xi}\cdot\mathcal{F}$ yields the complete four-term expression \eqref{eq:dy_dtau_complete}.

\subsection{Staircase Function Analysis}
\label{app:staircase}

\begin{proof}[Proof of Proposition~\ref{prop:Q_linear}]
The staircase function $Q(n,m) = \sum_{q=0}^{C}q\cdot\mathcal{I}_{L_{m,q}}(\tilde{n})$ assigns the value $q$ to the $q$-th segment of width $W = N/C$. Viewing $Q$ as a staircase approximation to the continuous linear function $f(\tilde{n}) = 2c_1\tilde{n} = (C/N)\tilde{n}$:
\begin{itemize}
\item At the left endpoint of segment $q$ ($\tilde{n} = n_{m,q}$): $f(n_{m,q}) \approx q$, so $\epsilon_Q \approx 0$.
\item At the right endpoint ($\tilde{n} = n_{m,q+1}{-}1$): $f(n_{m,q+1}{-}1) \approx q{+}1{-}2c_1$, so $\epsilon_Q \approx -(1{-}2c_1) \approx -1$.
\end{itemize}
Within each segment, $Q = q$ is constant while $2c_1\tilde{n}$ increases linearly, so $\epsilon_Q = Q - 2c_1\tilde{n}$ decreases linearly from $\approx 0$ to $\approx -1$---a sawtooth waveform.

The amplitude satisfies $|\epsilon_Q| \leq 1$. The root-mean-square value within one segment is
\begin{equation}
\sqrt{\frac{1}{W}\sum_{k=0}^{W-1}\!\left(\frac{k}{W}\right)^{\!2}} = \sqrt{\frac{(W{-}1)(2W{-}1)}{6W^2}} \to \frac{1}{\sqrt{3}} \approx 0.29
\end{equation}
for large $W$. Since $Q_{\max} = C = 2c_1 N$, the relative error is $\|\epsilon_Q\|_{\mathrm{rms}}/Q_{\max} = O(1/C) \ll 1$.
\end{proof}

\subsection{Cancellation Proof}

\begin{proof}[Complete proof of Theorem~\ref{thm:cancellation}]
The sum of Paths~2 and~3a within $\partial\varphi_n/\partial\tau_0$ is
\begin{equation}
S(n) \triangleq -2c_1 n + Q(n,m) = -2c_1 n + 2c_1\tilde{n} + \epsilon_Q(n,m).
\label{eq:app_S_def}
\end{equation}
Consider two cases based on whether $n$ is in the non-wrapped or wrapped region.

\textit{Case 1: $n \geq \lceil l{+}\iota\rceil$ (non-wrapped).} Here $\tilde{n} = n - (l{+}\iota)$, so
\begin{equation}
-2c_1 n + 2c_1\tilde{n} = -2c_1 n + 2c_1(n - (l{+}\iota)) = -2c_1(l{+}\iota).
\end{equation}
This is a constant independent of $n$.

\textit{Case 2: $n < \lceil l{+}\iota\rceil$ (wrapped).} Here $\tilde{n} = n - (l{+}\iota) + N$, so
\begin{equation}
-2c_1 n + 2c_1\tilde{n} = -2c_1(l{+}\iota) + 2c_1 N = -2c_1(l{+}\iota) + C.
\end{equation}
This differs from Case~1 by an additive constant $C$.

In both cases, the $n$-dependent terms ($-2c_1 n$ and $2c_1 n$ from $2c_1\tilde{n}$) cancel exactly, leaving an $n$-independent offset plus the sawtooth residual $\epsilon_Q$. Before cancellation, each of $-2c_1 n$ and $Q(n,m)$ ranges over $[0, C]$; after cancellation, $S(n)$ consists of a constant $-2c_1(l{+}\iota)$ (or $-2c_1(l{+}\iota)+C$ in the wrapped region) plus $\epsilon_Q \in [-1,1]$. The magnitude reduction ratio is $O(1/C)$.
\end{proof}

\subsection{Effective Coefficient Derivation}

After cancellation, the combined Paths~2+3a contribution merges with Path~1. Define the post-cancellation kernel decomposition:
\begin{align}
B_{\mathrm{s}}\mathcal{D}_{\mathrm{s}} + B_{\iota}\mathcal{D}_{\iota} &= \frac{j2\pi}{\Delta t}\sum_n S(n)\Gamma_n \notag \\
&\approx \frac{j2\pi}{\Delta t}\!\left[-2c_1(l{+}\iota)\mathcal{D}_{\Xi} + \mathcal{D}_{\epsilon}\right]
\end{align}
where the wrapping correction is absorbed into $\mathcal{D}_{\epsilon}$ for notational simplicity.

Adding the constant term $-j4\pi c_1(l{+}\iota)\mathcal{D}_{\Xi}/\Delta t$ to Path~1's contribution $A(m')\mathcal{D}_{\Xi}$:
\begin{align}
A_{\mathrm{eff}}(m') &= A(m') - \frac{j4\pi c_1(l{+}\iota)}{\Delta t} \notag \\
&= j2\pi\!\left(\frac{2c_1(l{+}\iota)}{\Delta t} - \frac{m'}{N\Delta t}\right) - \frac{j4\pi c_1(l{+}\iota)}{\Delta t} \notag \\
&= \frac{j4\pi c_1(l{+}\iota)}{\Delta t} - \frac{j2\pi m'}{N\Delta t} - \frac{j4\pi c_1(l{+}\iota)}{\Delta t} \notag \\
&= -\frac{j2\pi m'}{N\Delta t}.
\label{eq:app_Aeff}
\end{align}
The $c_1(l{+}\iota)$ terms cancel \emph{exactly} in the third line, yielding the $c_1$-independent effective coefficient stated in \eqref{eq:A_eff}. This exact cancellation holds regardless of the values of $c_1$, $l$, and $\iota$.

\section{FIM Ten-Term Decomposition}
\label{app:fim}

This appendix provides the detailed derivations for Section~\ref{sec:fim}: the Parseval lemma proof, the kernel inner products, the ten-term magnitude analysis, and the FIM-level cancellation proof.

\subsection{Proof of the Parseval Lemma}

\begin{proof}[Proof of Lemma~\ref{lemma:parseval}]
Under the approximation that $\Phi_{\mathrm{jump}}(n,m)$ is independent of $m$ (denoted $\Phi_{\mathrm{jump}}(n)$), define
\begin{equation}
\begin{aligned}
f_n &= a(n)\,e^{j2\pi(m'-\lequiv)n/N}\,e^{j\Phi_{\mathrm{jump}}(n)}, \\
g_n &= b(n)\,e^{j2\pi(m'-\lequiv)n/N}\,e^{j\Phi_{\mathrm{jump}}(n)}.
\end{aligned}
\end{equation}
Then $\mathcal{A}(m,m') = \sum_n f_n e^{-j2\pi mn/N}$ and $\mathcal{B}(m,m') = \sum_n g_n e^{-j2\pi mn/N}$ are the $N$-point DFTs of $\{f_n\}$ and $\{g_n\}$ with respect to $m$. By Parseval's theorem:
\begin{equation}
\sum_{m=0}^{N-1}\mathcal{A}^*(m,m')\mathcal{B}(m,m') = N\sum_{n=0}^{N-1}f_n^* g_n.
\end{equation}
Expanding $f_n^* g_n$: since the phase factors $e^{\pm j2\pi(m'-\lequiv)n/N}$ and $e^{\pm j\Phi_{\mathrm{jump}}(n)}$ have unit modulus and appear conjugated,
\begin{equation}
f_n^* g_n = a^*(n)\,b(n).
\end{equation}
All phase factors cancel in the conjugate product, leaving a result independent of $m'$, $\lequiv$, and $\Phi_{\mathrm{jump}}$.
\end{proof}

\subsection{Key Kernel Inner Products}

Applying Lemma~\ref{lemma:parseval} with appropriate weights:

\begin{enumerate}[label=\textit{\alph*}), leftmargin=1.5em, itemsep=3pt]
\item $\mathcal{F}$ \textit{self-inner-product} ($a = b = 1$):\;
  $\sum_m|\mathcal{F}|^2 = N\sum_n 1 = N^2$.

\item $\mathcal{D}_{\mathrm{s}}$ \textit{self-inner-product} ($a = b = n$):\;
  $\sum_m|\mathcal{D}_{\mathrm{s}}|^2 = N\sum_n n^2
  = N\cdot N(N{-}1)(2N{-}1)/6 \approx N^4/3$.

\item $\mathcal{F}$--$\mathcal{D}_{\mathrm{s}}$ \textit{cross}
  ($a = 1,\, b = n$):\;
  $\sum_m\mathcal{F}^*\mathcal{D}_{\mathrm{s}}
  = N\sum_n n = N\cdot N(N{-}1)/2 \approx N^3/2$.

\item $\mathcal{D}_{\iota}$ \textit{self-inner-product}
  ($a = b = Q \approx 2c_1\tilde{n}$):\;
  $\sum_m|\mathcal{D}_{\iota}|^2 \approx N\sum_n(2c_1\tilde{n})^2
  = 4c_1^2 N\sum_n\tilde{n}^2 \approx 4c_1^2 N^4/3$,
  where $\sum_n\tilde{n}^2 = \sum_n n^2$ by the circular shift property.

\item $\mathcal{D}_{\mathrm{s}}$--$\mathcal{D}_{\iota}$ \textit{cross}
  ($a = n,\, b = Q \approx 2c_1(n{-}(l{+}\iota))$):
  \begin{align}
  \sum_m\mathcal{D}_{\mathrm{s}}^*\mathcal{D}_{\iota}
    &\approx 2c_1 N\!\left[\sum_n n^2 - (l{+}\iota)\sum_n n\right]
    \notag \\
    &\approx 2c_1 N\!\left[\frac{N^3}{3}
      - (l{+}\iota)\frac{N^2}{2}\right]\!.
  \end{align}
  The leading term is $2c_1 N^4/3$.

\item $\mathcal{D}_{\epsilon}$ \textit{self-inner-product}
  ($a = b = \epsilon_Q(n,m)$):\;
  Since $\epsilon_Q$ depends on~$m$, the Parseval lemma applies only to
  the $m'$-sum for fixed~$m$:
  $\sum_{m'}|\mathcal{D}_{\epsilon}(m,m')|^2
  = N\sum_n\epsilon_Q^2(n,m) = N\cdot E_{\epsilon}(m)$.
  Summing over~$m$:
  $\sum_{m,m'}|\mathcal{D}_{\epsilon}|^2 = N\sum_m E_{\epsilon}(m)$.
\end{enumerate}.

\subsection{Ten-Term Definitions and Magnitude Analysis}

The four self-terms and six cross-terms are defined in \eqref{eq:Sigma_res}ff.\ of the main text. Their magnitudes, obtained by combining the coefficient modulus-squares with the kernel inner products above, are:

\begin{align}
\Sigma_{\Xi\Xi} &= \sum_{m'}\!|A(m')|^2\!\sum_m\!|\mathcal{F}|^2 \notag \\
&\approx \frac{4\pi^2}{\Delta t^2}\cdot\frac{N}{3}\cdot N^2 = \frac{4\pi^2 N^3}{3\Delta t^2}, \\
\Sigma_{\mathrm{s}} &= |B_{\mathrm{s}}|^2\cdot N\cdot\frac{N^4}{3} = \frac{16\pi^2 c_1^2 N^5}{3\Delta t^2}, \\
\Sigma_{\iota} &\approx |B_{\iota}|^2\cdot N\cdot 4c_1^2\frac{N^4}{3} = \frac{16\pi^2 c_1^2 N^5}{3\Delta t^2}, \\
\Sigma_{\mathrm{b}} &\approx |B_{\mathrm{b}}|^2\cdot N\cdot C^2 N^2 = \frac{16\pi^2\iota^2 c_1^2 N^5}{\Delta t^2}.
\end{align}
For $\Sigma_{\Xi\Xi}$, the sum $\sum_{m'}|A(m')|^2$ is evaluated using $\sum_{m'}(m'/N - 2c_1(l{+}\iota))^2 \approx N/3$ (the variance of $m'/N$ over the uniform distribution dominates). The key observation is $\Sigma_{\mathrm{s}} = \Sigma_{\iota}$ to leading order.

The dominant cross-term is:
\begin{align}
\Sigma_{\mathrm{s}\text{-}\iota} &= 2\Re\!\left\{B_{\mathrm{s}} B_{\iota}^*\sum_{m,m'}\!\mathcal{D}_{\mathrm{s}}\mathcal{D}_{\iota}^*\right\}\!.
\end{align}
Computing $B_{\mathrm{s}} B_{\iota}^* = (-j4\pi c_1/\Delta t)(-j2\pi/\Delta t) = -8\pi^2 c_1/\Delta t^2$ (negative real) and using $\sum_{m,m'}\mathcal{D}_{\mathrm{s}}\mathcal{D}_{\iota}^* \approx N\cdot 2c_1 N^4/3$:
\begin{align}
\Sigma_{\mathrm{s}\text{-}\iota} &= 2\!\left(-\frac{8\pi^2 c_1}{\Delta t^2}\right)\!\!\left(\frac{2c_1 N^5}{3}\right) \notag \\
&= -\frac{32\pi^2 c_1^2 N^5}{3\Delta t^2}.
\end{align}

\subsection{FIM-Level Cancellation: Detailed Proof}

\begin{proof}[Detailed proof of Theorem~\ref{thm:FIM_cancellation}]
Summing the three dominant terms:
\begin{align}
&\Sigma_{\mathrm{s}} + \Sigma_{\iota} + \Sigma_{\mathrm{s}\text{-}\iota} \notag \\
&\quad = \frac{16\pi^2 c_1^2 N^5}{3\Delta t^2} + \frac{16\pi^2 c_1^2 N^5}{3\Delta t^2} - \frac{32\pi^2 c_1^2 N^5}{3\Delta t^2} = 0.
\end{align}
Alternatively, recognizing $\Sigma_{\mathrm{s}} + \Sigma_{\iota} + \Sigma_{\mathrm{s}\text{-}\iota} = \sum_{m,m'}|u{+}v|^2$ with $u = B_{\mathrm{s}}\mathcal{D}_{\mathrm{s}}$, $v = B_{\iota}\mathcal{D}_{\iota}$, the cancellation follows directly from Theorem~\ref{thm:cancellation}: the $n$-linear parts of $u{+}v$ vanish, collapsing $|u{+}v|^2$ from $O(c_1^2 N^5/\Delta t^2)$ to $O(N^3/\Delta t^2)$.
\end{proof}

\subsection{Cross-Term Cancellation: $\Sigma_{\Xi\text{-}\mathrm{s}} + \Sigma_{\Xi\text{-}\iota}$}

The two intermediate cross-terms combine as
\begin{equation}
\Sigma_{\Xi\text{-}\mathrm{s}} + \Sigma_{\Xi\text{-}\iota} = 2\Re\!\left\{\sum_{m,m'}\!A(m')\mathcal{D}_{\Xi}(u^*{+}v^*)\right\}\!.
\end{equation}
Since $u{+}v$ after cancellation is approximately $({j2\pi}/{\Delta t})[-2c_1(l{+}\iota)\mathcal{D}_{\Xi} + \mathcal{D}_{\epsilon}]$, and $A(m') = j\cdot(\text{real})$ while the coefficient $(-j2\pi/\Delta t)^* = j2\pi/\Delta t$ is also purely imaginary, the overall factor is $j\cdot j = -1$ times real quantities. However, $\sum_m\mathcal{D}_{\Xi}^*\mathcal{D}_{\Xi} = N^2$ (real) and $\sum_m\mathcal{D}_{\Xi}^*\mathcal{D}_{\epsilon}$ involves $\sum_n\epsilon_Q(n,m)$ which is real. Therefore, the product $A(m')\cdot(j2\pi/\Delta t)\cdot(\text{real})$ is $j\cdot j\cdot(\text{real}) = -(\text{real})$, and $\Re\{-(\text{real})\}$ is nonzero in general but of magnitude $O(N^3/\Delta t^2)$---the same order as $\Sigma_{\Xi\Xi}$.

In the idealized case where $\Phi_{\mathrm{jump}}$ is $m$-independent, additional phase symmetries force $\Sigma_{\Xi\text{-}\mathrm{s}} + \Sigma_{\Xi\text{-}\iota} \approx 0$. In practice, this sum contributes a small correction that is captured by the numerical computation of $\eta$.

\subsection{Derivation of $I_{\tau f}$}

The cross-FIM element involves the inner product of the delay and Doppler derivatives. Using the effective delay derivative \eqref{eq:dy_dtau_effective} (dominant $\Xi$ term) and the Doppler derivative \eqref{eq:dy_df}:
\begin{align}
I_{\tau f}^{(\Xi)} &= \frac{2\SNR}{N^2}\!\sum_{m,m'}\!\Re\!\left\{\!\left(\frac{-j2\pi m'}{N\Delta t}\right)^{\!\!*}\!\!\left(\frac{-j2\pi T}{N}\right)\mathcal{F}^*\mathcal{D}_{\mathrm{s}}\right\}\!.
\end{align}
The coefficient product is $(j2\pi m'/(N\Delta t))\cdot(-j2\pi T/N) = 4\pi^2 m'T/(N^2\Delta t)$, which is a positive real number. Using $\sum_m\Re\{\mathcal{F}^*\mathcal{D}_{\mathrm{s}}\} = N^3/2$ and $\sum_{m'}m' = N^2/2$:
\begin{align}
I_{\tau f}^{(\Xi)} &= \frac{2\SNR}{N^2}\cdot\frac{4\pi^2 T}{N^2\Delta t}\cdot\frac{N^3}{2}\cdot\frac{N^2}{2} \notag \\
&= \frac{2\pi^2\SNR\, TN}{\Delta t} = 2\pi^2\SNR\, N^2,
\end{align}
where $T/\Delta t = N$ is used in the final step.

The residual contribution $I_{\tau f}^{(\epsilon)}$ involves $B_{\iota}^*\cdot(-j2\pi T/N)$ multiplied by $\sum_{m,m'}\mathcal{D}_{\epsilon}^*\mathcal{D}_{\mathrm{s}}$. The coefficient product $(-j2\pi/\Delta t)(-j2\pi T/N) = -4\pi^2 T/(N\Delta t)$ is negative real. The kernel inner product $\sum_m\mathcal{D}_{\epsilon}^*\mathcal{D}_{\mathrm{s}} = N\sum_n n\epsilon_Q(n,m) = NC_{\epsilon}(m)$ where $C_{\epsilon}(m)$ is the correlation between $\epsilon_Q$ and the time index $n$. After summation over $m'$ (factor $N$) and $m$, this contributes $O(\SNR\, N^2)$, which is smaller than $I_{\tau f}^{(\Xi)} = O(\SNR\, N^2)$ by a factor depending on $C_{\epsilon}$. For the closed-form analysis, this correction is neglected.

\subsection{Derivation of $I_{\phi\tau}$ and Schur Complement}

The cross-FIM between $\phi$ and $\tau_0$ uses $\partial y/\partial\phi = jh\cdot(\text{signal})/N$ and the effective delay derivative. The key product is $j\cdot A_{\mathrm{eff}}(m') = j\cdot(-j2\pi m'/(N\Delta t)) = 2\pi m'/(N\Delta t)$, which is a positive real number. Hence:
\begin{align}
I_{\phi\tau} &= \frac{2\SNR}{N^2}\cdot\frac{2\pi}{N\Delta t}\sum_{m'}m'\!\sum_m|\mathcal{F}|^2 \notag \\
&= \frac{2\SNR}{N^2}\cdot\frac{2\pi}{N\Delta t}\cdot\frac{N^2}{2}\cdot N^2 = \frac{2\pi\SNR\, N}{\Delta t}.
\end{align}
Since $A_{\mathrm{eff}}$ does not depend on $c_1$, neither does $I_{\phi\tau}$.

For the Schur complement:
\begin{align}
\frac{I_{\phi\tau}\cdot I_{\phi f}}{I_{\phi\phi}} &= \frac{(2\pi\SNR\, N/\Delta t)(2\pi\SNR\, TN)}{2\SNR\, N} \notag \\
&= \frac{2\pi^2\SNR\, TN}{\Delta t} = 2\pi^2\SNR\, N^2 = I_{\tau f}.
\end{align}
Thus $I_{\tau f}^{\mathrm{eff}} = 0$.

\section{Residual Inflation Factor $\eta$: Closed Form and Cyclic Wrapping}
\label{app:eta}

This appendix derives the closed-form approximation for $\eta$ and analyzes the cyclic wrapping contribution.

\subsection{Exact Definition}

From Definition~\ref{def:eta}:
\begin{equation}
\eta = 1 + \frac{3}{N^2}\sum_{m=0}^{N-1}E_{\epsilon}(m), \quad E_{\epsilon}(m) = \sum_{n=0}^{N-1}\epsilon_Q^2(n,m),
\end{equation}
where $\epsilon_Q(n,m) = Q(n,m) - 2c_1\tilde{n}$ and $\tilde{n} = (n{-}(l{+}\iota))_N$.

\subsection{Segment-Internal Contribution}

Within segment $q$ (samples satisfying $\tilde{n}\in[n_{m,q}, n_{m,q+1})$ and $n \geq \lceil l{+}\iota\rceil$), the residual is $\epsilon_Q = q - 2c_1\tilde{n}$, decreasing linearly from $\approx 0$ to $\approx -1$. The per-segment sum of squared residuals is:
\begin{equation}
\sum_{k=0}^{W-1}\!\left(\frac{k}{W}\right)^{\!2} = \frac{(W{-}1)(2W{-}1)}{6W} \approx \frac{W}{3}.
\end{equation}
Across all $C$ segments in the non-wrapped region ($N - \lceil l{+}\iota\rceil$ samples), the contribution to $E_{\epsilon}(m)$ is approximately $C\cdot W\cdot\sigma_W^2 \approx N\sigma_W^2$, where
\begin{equation}
\sigma_W^2 = \frac{(W{-}1)(2W{-}1)}{6W^2}.
\end{equation}
Since this is approximately the same for all $m$:
\begin{equation}
\eta_{\mathrm{saw}} = 1 + \frac{3}{N^2}\cdot N\cdot N\sigma_W^2 = 1 + 3\sigma_W^2.
\end{equation}

\paragraph{Limiting behavior:}
\begin{itemize}
\item $W\to\infty$ (small $c_1$): $\sigma_W^2\to 1/3$, so $\eta_{\mathrm{saw}}\to 2$.
\item $W = 1$ ($c_1 = 1/2$): $\sigma_W^2 = 0$, so $\eta_{\mathrm{saw}} = 1$. Each segment has only one sample, making the staircase exactly linear.
\item $W = 2$ ($c_1 = 1/4$): $\sigma_W^2 = (1)(3)/(6\cdot 4) = 1/8$, so $\eta_{\mathrm{saw}} = 1 + 3/8 = 1.375$.
\item $c_1 = 0$ (OFDM): $C = 0$, $Q = 0$, $\epsilon_Q = 0$, $\eta = 1$.
\end{itemize}

\subsection{Cyclic Wrapping Contribution}
\label{app:wrapping}

For the $n_w = \lceil l{+}\iota\rceil$ samples with $n < \lceil l{+}\iota\rceil$, the circular shift gives $\tilde{n} = n - (l{+}\iota) + N$, so
\begin{equation}
2c_1\tilde{n} = 2c_1(n - (l{+}\iota) + N) = 2c_1(n - (l{+}\iota)) + C.
\end{equation}

In the non-wrapped region, $S(n) = -2c_1 n + Q(n,m) \approx -2c_1(l{+}\iota) + \epsilon_Q(n,m)$ with $|\epsilon_Q|\leq 1$. In the wrapped region, from the proof of Theorem~\ref{thm:cancellation} (Case~2):
\begin{equation}
S(n) = -2c_1(l{+}\iota) + C + \epsilon_Q(n,m).
\end{equation}
The excess squared residual per wrapped sample relative to the non-wrapped baseline is
\begin{equation}
(C + \epsilon_Q)^2 - \epsilon_Q^2 = C^2 + 2C\epsilon_Q.
\label{eq:excess_expand}
\end{equation}
The cross-term $2C\epsilon_Q$ cannot be neglected. Since the wrapped samples lie near the end of a staircase segment (the segment boundary at $n = \lceil l{+}\iota\rceil$ is reached just before wrapping), the residual at these points satisfies $\epsilon_Q \approx -1$. Substituting into \eqref{eq:excess_expand} gives an excess of $C^2 - 2C = C(C{-}2)$ per wrapped sample. For the $n_w$ wrapped samples across $N$ values of $m$:
\begin{equation}
\Delta E \approx N\cdot n_w\cdot C(C{-}2).
\end{equation}
Normalizing via $\eta = 1 + 3\sum_m E_{\epsilon}(m)/N^2$:
\begin{equation}
\eta_{\mathrm{wrap}} \approx \frac{3n_w C(C{-}2)}{N} = \frac{3\lceil l{+}\iota\rceil C(C{-}2)}{N}.
\end{equation}

\subsection{Combined Formula}

\begin{equation}
\eta \approx \underbrace{1 + 3\sigma_W^2}_{\eta_{\mathrm{saw}}} + \underbrace{\frac{3\lceil l{+}\iota\rceil C(C{-}2)}{N}}_{\eta_{\mathrm{wrap}}}.
\end{equation}

\paragraph{Growth behavior:} Since $\eta_{\mathrm{saw}} < 2$ is bounded, while $\eta_{\mathrm{wrap}} \propto C^2$ for large $C$, the wrapping component dominates for $C \geq 5$ in typical configurations. The overall $\eta\propto C^2$ scaling is the basis for the $\CRB(\tau_0)\propto c_1^{-2}$ dependence derived in Section~\ref{sec:crb}.

\paragraph{Physical interpretation:} The wrapping effect arises because the CPP's staircase structure creates a phase discontinuity of magnitude $\propto C$ at the circular boundary $n = 0$ (relative to the shifted origin $\tilde{n} = 0$). This discontinuity is purely delay-dependent and provides $I_{\tau\tau}$ with information that grows as $C^2$, while leaving $I_{ff}$ and $I_{\tau f}$ unchanged. Larger $c_1$ (more chirp wrapping) amplifies this effect, simultaneously increasing $\eta$ and decreasing $\rho$.

\paragraph{Accuracy:} In the communication-compatible region $C\geq 9$, the $C(C{-}2)$ formula approximates $\eta$ to within $4\%$. For small $C$ ($\leq 5$), the formula overestimates $\eta$ by up to $22\%$ because the Parseval-based factorization of $I_{\tau\tau}$ assumes that $\Phi_{\mathrm{jump}}$ is approximately independent of $m$; when the number of staircase segments is small, this approximation breaks down and the cross-term $I_{\tau\tau}^{(\mathrm{cross})}$ becomes non-negligible. This cross-term is negative, partially cancelling the residual contribution and reducing the true $\eta$ below the approximate value. Despite this limitation for small $C$, the formula correctly captures the $\eta\propto C^2$ scaling law and the physical origin of the inflation factor. For precise values at any $C$, numerical FIM computation is recommended.


\end{document}